\documentclass[11pt,letterpaper,USenglish,cleveref,autoref,thm-restate]{article}

\usepackage{amsmath}
\usepackage{graphicx}
\usepackage{epstopdf}

\usepackage{amsthm}
\usepackage{amssymb} 
\usepackage{hyperref}
\usepackage{cleveref,lineno}
\usepackage{mathtools,amsmath} 
\usepackage{graphicx}
\usepackage{mathrsfs}
\usepackage{thmtools} 
\usepackage{soul,esvect}
\usepackage{epstopdf}
\usepackage{thm-restate,todonotes}
\usepackage{fullpage}
\usepackage{xspace}
\usepackage{enumitem}
\usepackage{authblk}

\usepackage{fix-cm}
\usepackage{multirow}
\usepackage[table,xcdraw]{xcolor}

\usepackage{algorithm}
\usepackage[noend]{algpseudocode}

\usepackage{lineno}

\usepackage[]{todonotes}

\newtheorem{lemma}{Lemma}
\newtheorem{observation}{Observation}
\newtheorem{definition}{Definition}

\newtheorem{claim}{Claim}

\newcommand\abs[1]{\lvert #1\rvert}
\newcommand{\tw}{\mathsf{tw}}
\newcommand{\pw}{\mathsf{pw}}
\newcommand{\ppw}{\mathsf{ppw}}

\newcommand{\ipw}[1]{{\mathsf{w}(\mathcal P[#1])}}

\newcommand{\ld}{\textsf{Directed-LDD}}
\newcommand{\ldd}{\textsf{Directed-LDD }}

\newcommand{\rp}{\textsf{Scissors-Carving }}
\newcommand{\rpp}{\textsf{Scissors-Carving}}

\newcommand{\dem}{\textsf{dem}}

\newcommand{\outb}[2]{B_{#1}^+(#2)}
\newcommand{\inb}[2]{B_{#1}^-(#2)}

\algnewcommand{\commentline}[1]{%
  \Statex /\textasteriskcentered\text{ } \textbf{#1} \textasteriskcentered /%
}

\title{Directed Low Diameter Decomposition for Structured Digraphs}
\author{Shinwoo An\thanks{
Email: \texttt{shinwoo.an@biu.ac.il}. This work was supported by the Isreal Science Foundation grant $\#$1042/22, Israel Science Foundation grant $\#$800/22, and US-Israel Binational Science Foundation Grant $\#$2022/131.} }
\author{Arnold Filtser\thanks{Email: \texttt{arnold.filtser@biu.ac.il}. This work was supported by the Isreal Science Foundation grant $\#$1042/22.}}
\affil{Department of Computer Science, Bar-Ilan University, Israel.}

\begin{document}
\date{}
\maketitle

\thispagestyle{empty}
\begin{abstract}
    Low diameter decompositions, or LDDs for short, are a fundamental primitive in the design of efficient graph algorithms.
    Roughly speaking, an LDD is a distribution over partitions of the vertices into bounded-diameter clusters such that nearby vertices are likely to be clustered together. 
    Recently, there has been growing interest in lifting the notion of LDDs into \emph{directed graphs}. 
    In particular, there are two natural directed analogues. 
    The first is a directed LDD, where after removing a random subset of edges, every strongly connected component has a small diameter. 
    The second is a quasipartition, which imposes the stronger requirement that whenever one vertex can still reach another after the edge removal, the two vertices must be close in the original directed metric. 
    Every quasipartition yields an LDD, but the converse does not necessarily hold. 

    In this work, we initiate the systematic study of LDDs in structured directed graphs. As our first main result, we show that any directed graph with pathwidth $\pw$ admits an $(O(\pw), \Delta)$-LDD. This improves upon the previous best-known $(2^{O(\pw^2)}, \Delta)$-LDD construction, which was implicitly derived from the quasipartition result of Salmasi, Sidiropoulos, and Sridhar [SODA'19].

    As our second result, we show that the integrality gap of the Directed Non-Bipartite Sparsest-Cut LP relaxation on an $n$-vertex graph with treewidth $\tw$ is $O(\tw \log n)$. This improves upon the $O(\tw\log^2 n)$ bound of M\'emoli, Sidiropoulos, and Sridhar [ICALP'16, Algorithmica'18]. 
    We obtain this result through the refined analysis of the quasipartition construction of M\'emoli et al. for bounded treewidth graphs.
\end{abstract}

\newpage
\setcounter{page}{1}
\section{Introduction}
Decomposing graphs into well-structured pieces is a central paradigm in the design of graph algorithms.
In particular, low diameter decompositions (LDDs), which are a distribution over partitions of the vertices into bounded-diameter clusters such that nearby vertices are likely to be clustered together, have played a central role in undirected graphs. They serve as key primitive underlying a wide range of problems, a partial list of examples include: metric embedding~\cite{Bar96,FRT04,krauthgamer2005measured,Rao99,Filtser25}, multi-commodity flow~\cite{KPR93,LR99}, Steiner point removal~\cite{EGKRTT14}, near-linear SDD solvers~\cite{BGKMPT14}, Lipschitz extension~\cite{LN05}, spanners~\cite{filtser2022light,har2023reliable}, and more.
In undirected graphs, low diameter decompositions have been extensively studied not only in general graphs, but also in structured graph classes~\cite{KPR93,FT03,AGMW10,AGGNT19,Fil19padded,filtser2025optimal,CF25}, where additional structure often leads to improved guarantees.

Recently, there has been a growing interest in extending low diameter decompositions to 
\emph{directed graphs}~\cite{BGW20,BNW25,BFHL25,Li25}, despite the inherent technical challenges compared to the undirected setting.
A notable application of this line of work is a recent near-linear time algorithm for the negative-weight single-source shortest path (SSSP) problem~\cite{BNW25}.
A \emph{strongly connected component} of a directed graph $G$, an SCC in short, is a maximal subgraph $H$ of $G$ such that for every pair $u,v$ of vertices in $H$, both $u$ is reachable from $v$ and $v$ is reachable from $u$. 
For vertices $u,v$ of $G$, let $d_G(u,v)$ denote the length of a shortest directed path from $u$ to $v$ in $G$, with $d_G(u,v)=\infty$ if no such path exists. For a subgraph $H\subseteq G$, the strong diameter of $H$ is $\max_{u,v\in V(H)} d_H(u,v)$, whereas the weak diameter of $H$ is $\max_{u,v\in V(H)} d_G(u,v)$. Here, $V(H)$ denotes the vertex set of $H$.

\begin{definition}[Directed Low-Diameter Decomposition,~\cite{BGW20,BNW25}]\label{def:ldd}
Given an edge-weighted directed graph $G=(V,E,w)$ and a parameter $\Delta>0$, a $(\beta, \Delta)$-LDD is a distribution over subsets
$S\subset E$ of edges, called a cutset, such that
\begin{itemize} \setlength{\itemsep}{0.1pt}
    \item[1.] every strongly connected component in $G\setminus S$ has a weak diameter of at most $\Delta$, and
    \item[2.] for every edge $e\in E$, $\Pr[e\in S]\leq \beta\cdot \frac{w(e)}{\Delta}$.
\end{itemize} 
We call $\beta$ the loss parameter.
\end{definition}

On the other hand, a parallel line of work has focused on \emph{quasipartitions} for directed graphs~\cite{memoli2018quasimetric,kawarabayashi2022embeddings,SSS19}, motivated by their connection to the integrality gap of the directed sparsest cut problem and the directed multicut problem~\cite{charikar2006directed}.
\begin{definition}[(Random) Quasipartition~\cite{memoli2018quasimetric}] \label{def:quasi} 
    Given an edge-weighted directed graph $G=(V,E,w)$ and a parameter $\Delta>0$, a $(\beta, \Delta)$-quasipartition is a distribution over a cutset $S\subseteq E$ such that
    \begin{itemize}  \setlength{\itemsep}{0.1pt}
        \item for every $u$-$v$ path in $G\setminus S$, $d_G(u,v) \leq \Delta$, and
        \item for every edge $e\in E$, $\Pr[e\in S]\leq \beta\cdot \frac{w(e)}{\Delta}$.
    \end{itemize}
    We call $\beta$ the loss parameter.
\end{definition}

Notably, quasipartitions impose strictly stronger constraints than LDDs, and thus any $(\beta, \Delta)$-quasipartition immediately yields a $(\beta,\Delta)$-LDD, but not vice versa. 
Intuitively, in the quasipartition construction, one has to enforce that the distance between endpoints of every valid path is at most $\Delta$. In contrast, the LDD construction only restricts the distance between endpoints of the paths within a 
\emph{strongly connected component}. 
One specific example is directed acyclic graphs (DAGs), which are directed graphs avoiding directed cycles. It is easy to observe that every DAG itself is a $(0, \Delta)$-LDD for every parameter $\Delta>0$ since its SCCs are isolated singletons. On the other hand, constructing a quasipartition with a small loss parameter on a DAG remains highly non-trivial, as a DAG still might contain arbitrarily long paths that must be cut.
This separation is also reflected in a polynomial gap over the achievable loss parameters. While every directed graph on $n$ vertices admits an $(O(\log n \cdot \log\log n), \Delta)$-LDD~\cite{BFHL25},
there exists an $n$-vertex directed graph that does not admit any $(o(n^{1/7}\log^{4/7} n), \Delta)$-quasipartition~\cite{memoli2018quasimetric}.

Quasipartitions and LDDs are not merely theoretical concepts. Rather, they play a crucial role in driving  distance-preserving graph sparsification scheme. Here, the goal is to obtain a simplified representation of the given graph while approximately preserving its distances. 
A number of such structures for directed graphs have been studied: including spanners~\cite{berman2011improved}, distance preservers~\cite{bodwin2021new}, hopsets~\cite{KP22}, and more. From the aforementioned lower bound for the quasipartition construction,~\cite{memoli2018quasimetric} showed that one cannot efficiently embed a directed graph into \emph{quasiultrametric space}, which is equivalent to the shortest path metric in directed trees. 
In contrast, by changing the viewpoint to LDDs, Assadi, Hoppenworth, and Wein~\cite{AHW25} showed that one can (stochastically) embed any directed graph into two DAGs with distortion $O(\log^3 n)$. Very recently, the distortion bound has been improved to $O(\log n\cdot \log\log n)$ by Filtser~\cite{Fil26}.

\medskip
In structured directed graph classes, the distinction between the two decompositions remains poorly explored. 
First, all known LDD results for structured directed graphs have an indirect nature:
Instead of being tailored directly to LDDs, they are either obtained via quasipartition constructions (bounded pathwidth and bounded treewidth~\cite{memoli2018quasimetric,SSS19}) or simply bypassed by applying general graph bounds~\cite{BFHL25} (planar graphs~\cite{kawarabayashi2022embeddings}).
In addition, there are no techniques that are specifically tailored to constructing LDDs in this setting.

Second, among the prior works that are indirectly derived from the quasipartition constructions, the resulting bounds fail to yield a clear improvement over the general graph bounds.
Specifically, the loss parameter either suffers from a polynomial dependence on treewidth~\cite{memoli2018quasimetric} or even worse, an exponential dependency on pathwidth~\cite{SSS19}. Consequently, these results surpass the general bounds of~\cite{BFHL25} only when the parameters are strictly sublogarithmic in the graph size.

To overcome these barriers, a natural starting point is to focus on
pathwidth~\cite{robertson1983graph}- one of the most standard and well-studied graph parameters which intuitively measures the degree to which a given graph resembles a path. Indeed, based on their structural advantage, bounded pathwidth graphs admit a wide range of efficient algorithms for fundamental graph problems~\cite{AFGN22,fomin2018fully,gutin2016mixed,kante2022linear,lampis2024structural}.
From the perspective of LDDs, such graphs also enjoy improved guarantees on the loss parameter compared to general graphs in the undirected setting~\cite{AGGNT19}.
Hence, it is natural to ask whether the same phenomenon holds in the directed setting.

Motivated by these insights and the gap between the undirected and directed settings, we initiate the study of LDDs in structured directed graphs, especially bounded pathwidth graphs, moving beyond those obtained indirectly via quasipartition approaches. 
Our main result is that every directed graph of pathwidth $\pw$ admits an $(O(\pw), \Delta)$-LDD. This significantly improves the previous best-known result of $(2^{O(\pw^2)}, \Delta)$-LDD~\cite{memoli2018quasimetric} in this setting.

\begin{restatable}{theorem}{pathwidthldd}\label{thm:ldd}
    Let $G$ be an edge-weighted directed graph and $\pw(G)=k$. Then, for any $\Delta>0$, there is a $(O(k), \Delta)$-LDD of $G$.
\end{restatable}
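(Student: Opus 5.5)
We build the LDD recursively along a path decomposition $(X_1,\dots,X_m)$ of width $k$, using the fact that each bag is a separator. The key simplification over quasipartitions is that we only need to control strongly connected components. A directed cycle (or a strongly connected set) that contains vertices on both sides of a bag must pass through that bag. So it suffices to prevent long-diameter SCCs, and every SCC of large weak diameter must be \emph{caught} at some bag vertex.

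The plan is to process bag vertices one at a time, running a directed ball-carving around each one, and to carve in both directions. Concretely, for a vertex $x$ we sample a radius $r\in[\Delta/4,\Delta/2]$; using exponential or uniform radii suffices here, since we want loss linear in $k$ and not logarithmic. We then cut every edge leaving the out-ball $\outb{r}{x}$ and every edge entering the in-ball $\inb{r}{x}$.

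After this step, any SCC of the remaining graph that contains $x$ lies in $\outb{r}{x}\cap\inb{r}{x}$, which has weak diameter at most $2r\le\Delta$. Moreover, any SCC that avoids $x$ cannot straddle the ball boundaries in a way that creates a cycle through $x$. The cutting probability of an edge $(u,v)$ for one center is $O(w(e)/\Delta)$, because $d(x,\cdot)$ is 1-Lipschitz along directed edges, in the form $d(x,v)\le d(x,u)+w(e)$, and similarly for in-distances.

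The difficulty is that a path decomposition has $m$ bags, and a naive union bound over all bag vertices gives loss $O(\sum|X_i|)$, not $O(k)$. To fix this we assign each vertex $v$ to the interval of bags containing it. We would like every edge to be charged only by the $O(k)$ centers whose carving can actually affect it.

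I would try a divide-and-conquer along the path: choose the middle bag $X_{m/2}$ and carve around its $\le k+1$ vertices with \emph{geometrically shrinking} budgets. In the recursion on each side, however, every edge appears at $\log m$ levels, which gives $O(k\log m)$.

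So to get $O(k)$ I would instead use a sweeping argument in the spirit of the undirected bounded-pathwidth padded decompositions \cite{AGGNT19}. Sweep the bags left to right and maintain at most $k+1$ "active" vertices, namely those in the current bag. When a vertex enters a bag it becomes a center, and we carve its in/out balls only within the part of the graph not yet carved by still-active earlier centers. A vertex $v$ only interacts with centers whose intervals intersect its own, and an edge $(u,v)$ lies in some bag $X_i$.

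The intended charging is this: the edge is cut only if some center $x$ carving while the edge is uncut has the edge crossing the boundary of its ball. I want to argue, via a memorylessness or "first-cut" property of the random radii, that only centers in the bag $X_i$ containing $e$ effectively matter. Edges are then charged by at most $k+1$ centers, giving loss $O(k)$.

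The main obstacle is exactly this last step: bounding how many centers can cut a given edge by $O(k)$ rather than by the number of bags its endpoints span. Distances from a far-away center can still pass near $e$, so I would restrict each carving to the subgraph between consecutive separators, computing distances inside the current uncarved region. I would also need to verify that restricting to this region keeps the weak-diameter guarantee, since only $d_G$ upper bounds are needed and restricting can only increase distances.
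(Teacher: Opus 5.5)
Your plan has a genuine gap, and it is exactly the step you flag: nothing in the proposal bounds how many centers can separate the endpoints of a fixed edge $e=(a,b)$. All your radii live at the single scale $[\Delta/4,\Delta/2]$, so every center whose ball can cut $e$ contributes $\Theta(w(e)/\Delta)$. An $O(k)$ bound on such ``effective threateners'' is therefore the whole content of the theorem, and the tools you point to cannot supply it.

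\begin{itemize}
\item \emph{Memorylessness.} Uniform radii have no memorylessness. More fundamentally, the undirected principle that the first ball reaching $e$ settles $e$ fails for digraphs. Cutting $\delta^+(B^+_G(x,r))\cup\delta^-(B^-_G(x,r))$ only finalizes SCCs inside $B^+_G(x,r)\cap B^-_G(x,r)$. Everything in $B^+\setminus B^-$ or $B^-\setminus B^+$ stays live and can be cut again by each later center.
\item \emph{Locality.} Locality in the decomposition is false, because ball membership is governed by directed distances that run through separators. Take the cycle $x\to h\to a\to b\to x$ with $w(x,h)=\Delta/2-w(e)$, $w(h,a)=0$, and a long edge $(b,x)$. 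Use bags $\{x\},\{x,h,b\},\{h,a,b\}$, so $x$ is processed first. Then $x$ is not in the bag containing $e$, yet it cuts $e$ with probability $4w(e)/\Delta$.
\item \emph{Sparse nets.} The undirected mechanism you are imitating \cite{AGGNT19} bounds threateners via sparse nets on a shortest path (Lemma~\ref{lem:path-net}). That lemma has no directed analogue (Figure~\ref{fig:densenet}).
\item \emph{Restricting to regions.} Computing balls only inside the region between consecutive separators leaves cycles that exit and re-enter the region uncontrolled. It breaks the diameter guarantee rather than fixing the count.
\end{itemize}

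The paper avoids counting threateners at scale $\Delta$ altogether. It first passes to path partition-width ($\ppw\le k+1$, Lemma~\ref{lem:embed-width}), then recurses on the width, using one critical shortest path $P$ meeting every critical bag at each level.

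\begin{itemize}
\item \emph{Forward stage.} Scissors carving puts an out-ball at the first and an in-ball at the last surviving vertex of $P$. The radius is uniform in $[\tau/2,2\tau/3]$, where $\tau$ is the current forward distance rather than $\Delta$. SCCs inside the balls have forward distance at most $2\tau/3$. SCCs outside miss $P$ and hence lose a vertex in every critical bag.
\item \emph{Backward stage.} Once $\tau\le\Delta$, a mirrored stage shrinks the backward distance with radius in $[\tau'/2,13\tau'/24]$. Here the outside SCCs miss a shortest reverse path, which also meets every critical bag.
\item \emph{Final stage.} Two balls of radius $\Theta(\Delta)$ around one endpoint certify weak diameter $12\Delta$.
\end{itemize}

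An edge can be threatened many times, but with probabilities $O(w(e)/\tau_j)$ for geometrically decreasing $\tau_j\ge\Delta$. These sum to $O(w(e)/\Delta)$ per level, and with at most $k+1$ levels this gives $O(k)$. The scale-adaptive radii and the forward/backward potential are precisely the ingredient your plan lacks.
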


\begin{table}[h]
\centering
\renewcommand{\arraystretch}{1.3}
\begin{tabular}{|l|l|l|l|}
\hline
\textbf{Graph Family} & \textbf{Partition type} &\textbf{ Loss parameter} $\boldsymbol{\beta}$ & \textbf{Reference} \\ \hline
General graphs & \multirow{5}{*}{Undirected LDD} & $\Theta(\log n)$ (UB and LB)& \cite{Bar96} \\ \cline{1-1} \cline{3-4}

Genus $g$ &  & $\Theta(\log g)$ &\cite{lee2009geometry} \\ \cline{1-1} \cline{3-4}
Pathwidth $\pw$ &  & $\Theta(\log \pw)$&\cite{AGGNT19} \\ \cline{1-1} \cline{3-4}
Treewidth $\tw$ &  & $\Theta(\log \tw)$&\cite{filtser2025optimal} \\ \cline{1-1} \cline{3-4}
$K_r$-minor free &  & $\Theta(\log r)$& \cite{CF25} \\ \hline

General digraphs & Directed LDD & $O(\log n \cdot \log\log n)$&\cite{BFHL25} \\ \hline
General digraphs & Quasipartition &$\Omega(n^{1/7}\log^{4/7}n)$ (LB)&\cite{chuzhoy2009polynomial,memoli2018quasimetric} \\ \hline
Planar & Quasipartition& $O(\log^2 n)$ &\cite{kawarabayashi2022embeddings} \\ \hline
Treewidth $\leq 2$ &Quasipartition & $O(1)$& \cite{SSS19} \\ \hline
Treewidth $\tw$& Quasipartition& $O(\tw \log n)$&\cite{memoli2018quasimetric} \\ \hline
Pathwidth $\pw$& Quasipartition& $2^{O(\pw^2)}$& \cite{SSS19} \\ \hline

\rowcolor[HTML]{EFEFEF}
Pathwidth $\pw$ & Directed LDD& $O(\pw)$& \Cref{thm:ldd} \\ \hline
\end{tabular}
\caption{Previous and new results on graph partition in directed graphs}
\end{table}

Classical constructions of (undirected/directed) LDDs rely on the ball carving technique~\cite{Bar96}, where we recursively compute clusters by removing regions defined by distance thresholds around a center. 
In graphs of bounded pathwidth, prior work for undirected graphs~\cite{AGGNT19} bypasses the general bound of~\cite{Bar96} via a path-carving approach: repeatedly choosing a shortest path
as a backbone and removing a distance-threshold neighborhood around it, exploiting
the path-like structure of the graph. Provided that one can efficiently remove this neighborhood~\cite{AGGNT19,kamma2017metric}, this approach yields an improved loss bound compared to~\cite{Bar96}.

On the other hand, in general directed graphs, one natural difficulty is that, once we carve a nearby region around a center as a cluster, we cannot guarantee the diameter of the cluster. To overcome this difficulty, all prior works that directly compute LDDs~\cite{BGW20,BNW25,BFHL25,Li25} recursively apply the ball carving approach for each carved cluster. It seems difficult to eliminate the $\log n$ dependency on the loss parameter under this approach.

Our main technical contribution is a novel carving procedure with respect to a shortest path, which we call \emph{scissors carving}, that avoids this difficulty.
In essence, our approach carefully decomposes a shortest path into shorter subpaths, and applies the ball carving to each segment in a ``local" sense. 
The key insight is that, we can efficiently carve away a path whose length is sufficiently short\footnote{See Section~\ref{sec:overview-ours} for more details.}. 
In addition, we ensure that any carved cluster from such short paths has a small diameter.
We believe that our technique provides a first step toward developing LDD techniques tailored to structured directed graphs, and that the technique itself is of independent interest.

\medskip
As a secondary contribution, we revisit quasipartitions in directed graphs of bounded treewidth \cite{memoli2018quasimetric}, 
and improve the $O(\tw \log^2 n)$ bound on the integrality gap of the directed non-bipartite sparsest cut~\cite{memoli2018quasimetric} by a factor of $O(\log n)$, through a better analysis. 
Previous and new results about the integrality gap are summarized in Table~\ref{tab:quasi}.
See Section~\ref{sec:quasipartition} for the problem definitions and details of prior works on it. 

\begin{restatable}{theorem}{flowcutgap} \label{thm:flowcutgap}
    Let $G$ be an $n$-vertex directed graph of treewidth $\tw$. The integrality gap (also called the flow-cut gap) of the Directed Non-Bipartite Sparsest-Cut LP relaxation on $G$ is $O(\tw \log n)$.
\end{restatable}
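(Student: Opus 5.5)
The plan is to follow the standard reduction from flow-cut gaps to random quasipartitions, as in \cite{memoli2018quasimetric}, and improve only the quasipartition step. Let $\ell:E\to\mathbb R_{\ge0}$ be an optimal fractional solution of the Directed Non-Bipartite Sparsest-Cut LP, and let $d_\ell$ be the induced shortest-path quasimetric. Normalize so that $\sum_e c(e)\ell(e)=\mathrm{LP}$ and $\sum_{(s,t)}\dem(s,t)\,(d_\ell(s,t)+d_\ell(t,s))\ge 1$; I would fix the symmetrized form of the normalization to match the non-bipartite objective. It is then enough to show that $G$ with lengths $\ell$ admits an $(O(\tw\log n),\Delta)$-quasipartition for every scale $\Delta$. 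Given that, I would combine partitions over the geometric scales $\Delta=2^i$ and round. For each demand pair, with constant probability the pair is separated at the scale comparable to $\max(d_\ell(s,t),d_\ell(t,s))$: by the first quasipartition property, reachability in $G\setminus S$ from $s$ to $t$ forces $d_\ell(s,t)\le\Delta$. A randomized averaging argument, choosing a random scale and comparing expected cut capacity to expected separated demand, then produces a cut with sparsity $O(\beta)\cdot\mathrm{LP}$, where $\beta$ is the loss parameter.

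The gain of one $\log n$ factor has to come from this reduction. The earlier bound presumably either lost $\log n$ by summing over all $O(\log n)$ scales, or lost it inside the quasipartition. My plan is to ensure the scale sum telescopes. Concretely, I would set up the per-edge probability as $\Pr[e\in S_\Delta]\le \beta\,\ell(e)/\Delta$ and charge each demand only at its own scale. Alternatively, I would use a single random scale chosen with probability proportional to $1/\Delta$, so that a demand pair is captured at a constant fraction of the relevant scales, without paying an additive $\log n$ for scales that are irrelevant to it.

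For the quasipartition itself I would use the treewidth separator recursion of \cite{memoli2018quasimetric}. Take a balanced separator $X$ of size $\tw+1$. For each $x\in X$, carve out-balls and in-balls of radius $r_x$, with $r_x$ drawn from a truncated exponential distribution on $[\Delta/4,\Delta/2]$ whose rate is $\Theta(\log n/\Delta)$. Remove the boundary edges, and recurse on the components of the remainder, to depth $O(\log n)$. A path surviving in $G\setminus S$ either stays within one recursive piece or meets the ball structure around some separator vertex, which keeps its endpoints within distance $\Delta$ of each other. The naive analysis charges an edge $\Pr\le O(\tw\log n)\ell(e)/\Delta$ per level over $\log n$ levels, giving $\tw\log^2 n$. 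The refined analysis instead uses the memorylessness of the exponential distribution: an edge is cut only at the first level where some ball boundary reaches it, and the total number of relevant ball centers across all levels is bounded.

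I expect the main obstacle to be this refined charging. One has to show that across the $O(\log n)$ recursion levels the per-edge cut probability sums to $O(\tw\log n)\ell(e)/\Delta$ rather than picking up an extra $\log n$. My first attempt would be a potential argument: condition on the first ball (in a random order of all $O(\tw\log n)$ centers relevant to $e$) that reaches an endpoint of $e$. By memorylessness, that ball cuts $e$ with probability $O(\ell(e)\cdot\text{rate})$, and after that $e$ is settled, so it is charged only once. If the rate must remain $\log n/\Delta$, I would argue instead that it suffices to take the rate to be $O(\log(\tw\log n))/\Delta$ or a constant over $\Delta$. Either way the product comes out $O(\tw\log n)$.
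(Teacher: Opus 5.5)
\textbf{There is a genuine gap in your reduction step.} You claim that random-scale averaging turns a $(\beta,\Delta)$-quasipartition at every scale into a cut of sparsity $O(\beta)\cdot\mathrm{LP}$. From the per-scale guarantee $\Pr[e\in S_{2^i}]\le\beta\,\ell(e)/2^i$ alone, this is exactly where the extra $\log n$ is lost; it is Lemma~16 of M\'emoli et al. Suppose scale $2^i$ is used with probability $p_i$. A pair at LP distance $d$ is guaranteed to be separated only with probability $\sum_{2^i<d}p_i$, while an edge may be cut with probability up to $\beta\,\ell(e)\sum_i p_i 2^{-i}$. Requiring $\sum_{2^i<d}p_i\ge\alpha d$ for demands at $k$ different dyadic distances forces, by Abel summation, $\sum_i p_i2^{-i}\ge k\alpha$. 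So the ratio you can certify is $k\beta$ times the LP, with $k$ up to $\Theta(\log n)$, and no choice of $p_i$ avoids this (in particular not $p_i\propto 1/\Delta$). ``Charging each demand only at its own scale'' is also unavailable, because the cut is one global set and every edge pays at every scale.

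The separator-based quasipartition already has $\beta=O(\tw\log n)$, so your route ends at $O(\tw\log^2 n)$. Your third paragraph cannot compensate, for two reasons. First, the quasipartition was never the bottleneck. Second, the memorylessness/``settling'' argument does not apply to this construction, because balls are not carved away. Every central-bag vertex at every recursion level adds its own in- and out-ball boundaries to $S$, so an edge stays threatened by all of them and its cut probability is additive over threateners, not charged once. A rate of $\Theta(\log n/\Delta)$ only inflates each ball's cut probability. A minor point: the non-bipartite objective is one-directional, so the LP constraint is $\sum_i\dem(i)\,d_\ell(s_i,t_i)\ge 1$, not the symmetrized one.

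\textbf{The missing idea} is to drop the black box and analyze all scales of the M\'emoli et al.\ construction simultaneously.
\begin{itemize}
\item For an edge $e=(x,y)$, let $V(e)$ be the set of central-bag vertices of the recursive instances containing $e$. It depends only on the tree decomposition, so it is the same at every scale, and $|V(e)|=O(\tw\log n)$.
\item With $R_i$ uniform in $[2^{i-1},2^i]$, a center $z\in V(e)$ can put $e$ on its out-ball boundary only if $d_G(z,x)<2^i$ and $d_G(z,y)\ge 2^{i-1}$. When $2^i>w(e)$, the triangle inequality gives $2^{i-1}\le d_G(z,y)<2^{i+1}$, so only $O(1)$ scales are active per center; the same holds for in-balls.
\item Splitting the scales at $\log w(e)$ gives $\sum_i 2^i\Pr[e\in S_i]\le\sum_{z\in V(e)}\bigl(\sum_{2^i\le w(e)}2^i+O(w(e))\bigr)=O(\tw\log n)\,w(e)$.
\item A union bound along a shortest $u$-$v$ path yields $d^*(u,v)\le O(\tw\log n)\,d_G(u,v)/c$.
\end{itemize}
Together with $d^*(u,v)\ge d_G(u,v)/c$ and the reduction of Charikar et al., this gives the theorem, with no change to the quasipartition itself.
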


\begin{table}[h] 
\centering
\renewcommand{\arraystretch}{1.3}
\begin{tabular}{|l|l|l|l|} 
\hline
\textbf{Graph Family} & \textbf{Integrality Gap} & \textbf{Ref.} & \textbf{Notes} \\ \hline
\multirow{3}{*}{General digraphs} & $O(\sqrt n)$ & \cite{hajiaghayi2006n}& UB \\ \cline{2-4} 
 & $\tilde O(n^{11/23})$ &\cite{agarwal2007improved} & UB \\ \cline{2-4}
 & $\tilde \Omega(n^{1/7})$ & \cite{chuzhoy2009polynomial} & LB \\ \hline
Planar  & $O(\log^3 n)$ & \cite{kawarabayashi2022embeddings}& UB \\ \hline
Pathwidth $\pw$&$2^{O(\pw^2)}\cdot \log n$ &\cite{memoli2018quasimetric}+\cite{SSS19} & UB \\ \hline
Treewidth $\tw$& $O(\tw \log^2 n)$  &\cite{memoli2018quasimetric} & UB\\ \hline
\rowcolor[HTML]{EFEFEF}
Treewidth $\tw$ & $O(\tw \log n)$  & \Cref{thm:flowcutgap} & UB\\ \hline
\end{tabular}
\caption{Previous and new results on the integrality gap of the directed non-bipartite sparsest cut LP. Here, $\tilde O(\cdot)$ hides polylog $n$ factor.}
\label{tab:quasi}
\end{table}

\section{Preliminaries} \label{sec:prelim}
For a positive integer $n$, we denote by $[n]$ the set of all positive integers at most~$n$.
\paragraph{Graph notations.}
Let $G = (V, E, w)$ be an edge-weighted directed graph, with $n$ vertices. We often use $V(G)$ and $E(G)$ to denote the vertex set and the edge set in $G$, respectively.
For a set $A$ of vertices in $G$, we use $G\setminus A$ to denote the graph obtained from $G$ by removing all vertices in $A$ and their adjacent edges.
We use $G[A]$ to denote a \emph{induced subgraph} of $G$ with respect to $A$. That is, $G[A]=G\setminus (V\setminus A)$.
Similarly, for a set $B$ of edges in $G$, we use $G\setminus B$ to denote the graph obtained from $G$ by removing all edges in $B$.
Two vertices $u,v$ in $G$ are \emph{strongly connected} if there exist both a $u$-$v$ path and a $v$-$u$ path in $G$.
A graph $G$ is \emph{strongly connected} if there is a $u$-$v$ path for every $u,v\in V$. 
A \emph{strongly connected component} of $G$, an SCC in short, is a maximal strongly connected subgraph of $G$. 
For a path $P$ in $G$, the \emph{length} of $P$ is defined as the total edge weight in $P$.
We use $d_G(u,v)$ to denote the \emph{distance} between $u$ and $v$ in $G$.
The \emph{diameter} of $G$ is defined as the maximum value of $d_G(u,v)$ among every pair $(u,v)\in V\times V$.

\paragraph{Treewidth and Pathwidth.}
A \emph{tree-decomposition} (resp. \emph{path-decomposition}) of an \emph{undirected} graph $G$ is a pair $(T, \{B_t\}_{t\in V(T)})$ consisting of a tree (resp. path) $T$ and a family of sets $\{B_t\}_{t\in V(T)}$ of vertices in $G$ such that 
\begin{enumerate}  \setlength{\itemsep}{0.1pt}
    \item[1.] $V(G)=\bigcup_{t\in V(T)}B_t$, 
    \item[2.] for every edge $uv \in E(G)$, there exists a node $t \in V(T)$ such that $\{u, v\} \subseteq B_t$, and 
    \item[3.] for every vertex $v \in V(G)$, the set $\{t\in V(T):v\in B_t\}$ induces a connected subtree (resp. subpath) of $T$. \label{TD:connectedness}
\end{enumerate}
The \emph{width} of a tree-decomposition (resp. path-decomposition) $(T, \{B_t\}_{t \in V(T)})$ is defined as $\max_{t\in V(T)}\abs{B_t}-1$.
The \emph{treewidth} $\tw(G)$ (resp. \emph{pathwidth} $\pw(G)$) of a graph $G$ is the minimum width among all tree decompositions (resp. path decompositions) of $G$.

For a \emph{directed} graph $G=(V,E)$, we consider the undirected graph $\bar G$ obtained by ignoring the edge directions and removing any parallel edges. Formally speaking, $\bar G$ has a vertex set $V(\bar G)=V$ and an edge set $E(\bar G)=\{\{u,v\}\mid (u,v)\in E \text{ or } (v,u)\in E\}$.
We say that $G$ has treewidth (resp. pathwidth) $k$ if its underlying undirected graph $\bar G$ has treewidth (resp. pathwidth) $k$.

\section{Technical Overview}
\paragraph{Low diameter decomposition (LDD) for undirected graphs.}
Let $G=(V,E,w)$ be an undirected graph. A $(\beta, \Delta)$-LDD of $G$ is a distribution over vertex partitions of $V$ such that in every partition, each cluster has a diameter of at most $\Delta$, and for every pair $u,v\in V$, the probability that $u,v$ belong to different clusters is at most $\beta\cdot \frac{d_G(u,v)}{\Delta}$.
We call $\beta$ the \emph{loss parameter}. Every $n$-vertex graph admits an $(O(\log n), \Delta)$-LDD for any $\Delta>0$, which is tight~\cite{Bar96}.

Ball carving is a classical technique for computing an LDD. For a vertex $u\in V$ and parameter $R>0$, a \emph{ball} $B_G(u,R)$ is defined as the set of vertices $v\in V$ such that $d_G(u, v)\leq R$. We call $R$ the \emph{radius} of $B_G(u,R)$.
The algorithm of Bartal~\cite{Bar96} repeatedly takes a vertex $w$ from the remaining graph, samples a radius $R$ from an exponential distribution with the parameter $\Theta(\log n/\Delta)$, and removes the ball $B_G(w, R)$ from the graph as a single cluster.
Using the memoryless property of the exponential distribution, Bartal~\cite{Bar96} showed that the algorithm yields an $(O(\log n), \Delta)$-LDD.

\paragraph{Low diameter decomposition for bounded pathwidth graphs.}
Abraham et al.~\cite{AGGNT19} showed that if $G$ has pathwidth $\pw$, it admits an $(O(\log \pw),\Delta)$-LDD. While this is a special case of a more general framework for minor-free graphs, we sketch their approach for the bounded pathwidth graphs. 
The following lemma was implicitly proved by several works (e.g.,~\cite{AGGNT19,kamma2017metric}). 
\begin{lemma} \label{lem:path-net}
    Let $G$ be an undirected graph and $P$ be a shortest path in $G$. For any $\Delta>0$, there is a set $N\subset P$ of vertices, called a net, such that 
    \begin{enumerate}  \setlength{\itemsep}{0.1pt}
        \item (Covering) $P\subset \bigcup_{v\in N} B_G(v, \Delta)$, and
        \item (Sparsity) for every $u\in G$, $|B_G(u, 2\Delta) \cap N| = O(1)$. 
    \end{enumerate}
\end{lemma}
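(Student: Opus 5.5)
We take $N$ to be a greedy $\Delta$-net along $P$ measured by distance along the path. Write $P=(p_0,p_1,\dots,p_m)$. Because $P$ is a shortest path, $d_G(p_i,p_j)$ equals the length of the subpath of $P$ between $p_i$ and $p_j$. Start with $x_1=p_0$. Given $x_j$, let $x_{j+1}$ be the first vertex after $x_j$ on $P$ with $d_G(x_j,x_{j+1})>\Delta$; stop when no such vertex exists. Set $N=\{x_1,\dots,x_r\}$.

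\textbf{Covering.} Take any vertex $p$ of $P$, and let $x_j$ be the last net point at or before $p$ along $P$. Then $p$ comes strictly before $x_{j+1}$, or $x_j$ is the last net point. By the greedy rule, every vertex strictly between $x_j$ and $x_{j+1}$, and every vertex after $x_r$, is at distance at most $\Delta$ from the preceding net point. Hence $p\in B_G(x_j,\Delta)$, which proves the covering property.

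\textbf{Sparsity.} Consecutive net points satisfy $d_G(x_j,x_{j+1})>\Delta$. Since distances along $P$ add up, $d_G(x_i,x_j)>(j-i)\Delta$ for $i<j$. Fix a vertex $u\in V$ and suppose $x_i$ and $x_j$ with $i<j$ both lie in $B_G(u,2\Delta)$. In the undirected graph, the triangle inequality gives $d_G(x_i,x_j)\le d_G(x_i,u)+d_G(u,x_j)\le 4\Delta$. Combining the two bounds gives $j-i<4$. So all net points in $B_G(u,2\Delta)$ have indices within a window of length less than $4$, and there are at most $4$ of them. This gives $|B_G(u,2\Delta)\cap N|=O(1)$.

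The only subtle point is that the lower bound on $d_G(x_i,x_j)$ for non-consecutive net points must use that $P$ is a shortest path. This is what lets the distances between consecutive net points add up, rather than only giving a pairwise separation of $\Delta$. The argument also uses that the graph is undirected, through the symmetric triangle inequality at $u$. No earlier results are needed.
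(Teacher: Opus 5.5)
Your proof is correct. Covering follows directly from the greedy stopping rule. Because $P$ is a shortest path, the separations $d_G(x_l,x_{l+1})>\Delta$ add up along $P$, while symmetry of the undirected metric bounds the pairwise distances inside $B_G(u,2\Delta)$ by $4\Delta$; so at most $4$ net points lie in any such ball.

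The paper gives no proof of its own and only attributes the lemma to \cite{AGGNT19,kamma2017metric}. Your greedy net along the path with this additive-separation count is the standard argument behind that attribution.
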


Their algorithm works in several rounds. In each round, for each connected component in the current graph, we pick a shortest path $P$ connecting a vertex in the first bag and a vertex in the last bag (in a path decomposition), and compute a net $N$ of that path using Lemma~\ref{lem:path-net}.
Next, for each $v\in N$, we sample a radius $R_v$ from a \emph{truncated exponential distribution} with a parameter $\lambda$ over the range $[\frac{\Delta}{4}, \frac{\Delta}{2}]$ so that each cluster has a diameter at most $\Delta$.
That is, the exponential distribution with the parameter $\lambda$ conditioned on the event that the output belongs to $[\frac{\Delta}{4}, \frac{\Delta}{2}]$. We will define $\lambda$ later.
Finally, we carve $B_G(v,R_v)$ from the current graph as a single cluster.

Crucially, due to the covering property of Lemma~\ref{lem:path-net}, all vertices of $P$ are carved away after the carving step.
Moreover, since $P$ is a path connecting a vertex in the first bag and a vertex in the last bag, $P$ intersects every bag of the path decomposition. Therefore, the pathwidth of the graph decreases by at least one. 
Thus, the algorithm terminates after $\pw$ rounds.
Due to the sparsity property of Lemma~\ref{lem:path-net}, we can show that for every pair $u,v
\in V$, at most $\nu = O(\pw)$ balls, called threateners, affect the probability that $u,v$ belong to different clusters. 

Roughly speaking, the truncated exponential distribution behaves similarly to the exponential distribution, i.e., retaining a memoryless property, up to a small error. These errors from the sampled radii of different threateners are accumulated. In order to keep the total error negligible, the parameter should be at least logarithmic in the number of threateners.
In light of this, Abraham et al.~\cite{AGGNT19} set the parameter $\lambda = \Theta(\frac{\log \nu}{\Delta})=\Theta(\frac{\log \pw}{\Delta})$, yielding an $(O(\log \pw), \Delta)$-LDD of $G$. 
This path-carving framework is the key ingredient that we will adapt to the directed setting later.

\paragraph{Low diameter decomposition for general directed graphs.}
A directed analogue of the low diameter decomposition (Definition~\ref{def:ldd}) was very recently introduced by Bernstein, Gutenberg, and Wulff-Nilsen~\cite{BGW20} (though not stated explicitly), who showed that 
every $n$-vertex directed graph $G$ admits an $(O(\log^2 n), \Delta)$-LDD for every parameter $\Delta>0$. 
Later, Bernstein, Nanongkai, and Wulff-Nilsen~\cite{BNW25} explicitly defined the notion of a low diameter decomposition and presented a simpler algorithm yielding an $(O(\log^2 n), \Delta)$-LDD. 
This bound has been improved to $O(\log n \cdot \log\log n)$ by Bringmann et al.~\cite{BFHL25}. 

The algorithm of Bernstein et al.~\cite{BNW25} follows the ball carving framework: it samples a radius and carves either an in-ball or an out-ball\footnote{We defer the definitions of an in-ball and an out-ball to Section~\ref{sec:overview-ours}} around a chosen vertex.
A key difficulty in the directed setting is that, unlike the undirected case, one cannot guarantee the weak diameter bound of an SCC contained in a single ball.
To resolve this, the algorithm of~\cite{BNW25} recursively applies the same carving procedure to every SCC until all SCCs have a small weak diameter. By carefully selecting centers, Bernstein et al.~\cite{BNW25} managed to bound the recursion depth by $O(\log n)$. This yields an $O(\log^2 n)$ bound on the loss parameter. 

\subsection{Low diameter decomposition for special classes of directed graphs}
The aforementioned issue remains even for special classes of directed graphs, such as those with bounded pathwidth, bounded treewidth, or planar graphs.
We briefly review some prior works that circumvented this issue. We remark that while all prior work primarily addressed random quasipartitions for those graph classes, their quasipartition results directly translate to low diameter decompositions with the same loss parameter\footnote{From~\Cref{def:ldd,def:quasi}, it is easy to check that every $(\beta,\Delta)$-quasipartition is a $(\beta,\Delta)$-LDD.}. Consequently, we present the prior work from the viewpoint of low diameter decompositions.

\paragraph{Separator-based approach.}
M\'emoli, Sidiropoulos, and Sridhar~\cite{memoli2018quasimetric} showed that every $n$-vertex directed graph with treewidth $\tw$ admits an $(O(\tw \log n), \Delta)$-LDD for every parameter $\Delta>0$.
Their approach is based on a balanced separator. 
We call $U\subset V(G)$ a \emph{balanced separator} if  the number of vertices in each connected component of $G\setminus U$ is at most $\frac{2}{3}|V(G)|$.
It is well known that every graph $G$ with treewidth $\tw$ admits a \emph{balanced separator} $U\subset V(G)$ of size at most $\tw+1$. 

The algorithm of~\cite{memoli2018quasimetric} computes a balanced separator $U$, samples a radius, and constructs both in-balls and out-balls around each vertex of $U$. Next, it 
adds the boundary edges of these balls to the cutset $S$. 
Then, it recursively proceeds on each SCC of $G\setminus U$. 
Critically, any SCC of $G\setminus S$ is either disjoint from $U$ or has a weak diameter of at most $\Delta$.
The loss parameter at a single recursion level is $O(\tw)$ since the algorithm computes a cutset through  $2|U|=O(\tw)$ balls. In addition, the recursion depth is $O(\log n)$.
Overall, the algorithm of~\cite{memoli2018quasimetric} yields an $(O(\tw\log n), \Delta)$-LDD.

Kawarabayashi and Sidiropoulos~\cite{kawarabayashi2022embeddings} showed that every $n$-vertex planar directed graph admits an $(O(\log^2 n), \Delta)$-LDD. 
Instead of a (balanced) vertex separator, they utilized the shortest path separator of~\cite{thorup2004compact}, which consists of a constant number of shortest paths.
Again, the $O(\log n)$ factor arises from the fact that the recursion depth is $O(\log n)$.
We remark that it seems difficult to avoid the $O(\log n)$ factor in any separator-based approach since the recursion depth is $O(\log n)$ and the edge-cut probability incurred in each round must be handled independently.

\paragraph{LDD of constant loss parameter for bounded pathwidth directed graphs.} 
Salmasi, Sidiropoulos, and Sridhar~\cite{SSS19} showed that every $n$-vertex directed graph $G$ with pathwidth $\pw$ admits a $(2^{O(\pw^2)}, \Delta)$-LDD for every parameter $\Delta>0$. This is the only known LDD result for structured directed graphs where the loss parameter is independent of $n$.

We briefly sketch their approach at a high level. We start by setting the current graph to $H=G$.
In each round,~\cite{SSS19} computes a $u$-$v$ shortest path $P$ that connects the first and last bags of the path decomposition.
Instead of carving the balls around the net vertices along $P$,
the algorithm computes a random shift $z\in [0,\Delta]$ and adds all edges crossing distances of $i\Delta+z$ from $u$ for every integer $i$, to the cutset. 
Notably, after this, they remove the edge set of $P$ from $H$ (recall that in the ball carving framework, the cutset is removed) and recurse. 
Finally, the algorithm repeats the entire procedure in the reversed direction. 
The recursion depth is $O(\pw^2)$.

Salmasi, Sidiropoulos, and Sridhar~\cite{SSS19} managed to eliminate the dependency on $n$ by carving away the boundary edges of concentric layers with a random shift. 
On the other hand, they remove the edge set of $P$ instead of the cutset edges. This causes a more involved analysis of bounding the weak diameter, and yields a $(2^{O(\pw^2)}, \Delta)$-LDD.
We also remark that their analysis crucially utilizes the structure of the bounded pathwidth graphs, making it seemingly difficult to lift their result to bounded treewidth graphs or planar graphs.

\begin{figure}
    \centering
    \includegraphics[scale=0.7]{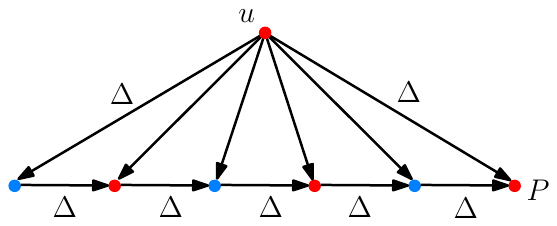}
    \caption{Counter-example of Lemma~\ref{lem:path-net} for directed graphs. Consider a directed graph $G$ consisting of a shortest path $P$ of $n$ edges of weight $\Delta$, an apex vertex $u$, and an edge $(u,p)$ of weight $\Delta$ for every vertex $p\in P$. 
Then, for any subset $N\subseteq P$ that satisfies the covering property, i.e., $P\subset \bigcup_{v\in N} \outb{G}{v,\Delta}$, notice that each ball covers at most $O(1)$ vertices of $P$.
Therefore, $|N| = \Theta(n)$. The blue points indicate the points in $N$.
Since $d_G(u, v) = \Delta$ for every $v\in P$, $N\subseteq \outb{G}{u, 2\Delta}$.}
    \label{fig:densenet}
\end{figure}

\subsection{Our solution: Scissors Carving} \label{sec:overview-ours}
 Our algorithm departs from the previous approaches.
 We develop a new carving procedure, which we call \emph{scissors carving}. This is our main technical contribution that enables us to achieve an $(O(\pw), \Delta)$-LDD.
Let $G=(V,E,w)$ be an edge-weighted directed graph.
For a vertex $u\in V$ and a parameter $R>0$, we define the \emph{out-ball} $\outb{G}{u,R}$ and the \emph{in-ball} $\inb{G}{u,R}$ as
\begin{align*}
    &\outb{G}{u,R} :=\{x\in V\mid d_G(u,x) \leq R\}, \text{ and } 
    \inb{G}{u,R} :=\{x\in V\mid d_G(x,u) \leq R\}, \text{ respectively.}
\end{align*}
We call $R$ the \emph{radius}.

At a high level, suppose we are given a target $x$-$y$ path $P$ that we want to carve away. 
We compute an out-ball around $x$ and an in-ball around $y$ with radii proportional to $d_G(x,y)$, and then remove their boundary edges.
Then, we recurse on each SCC intersecting $P$.

The key insight is that this approach always makes progress for each recursive SCC: either the length of the portion of $P$ contained in each SCC, called the \emph{forward distance}, is reduced, or the pathwidth of the SCC is decreased. 
Once the length of the current path becomes sufficiently small, we apply the same procedure in the reverse direction, i.e., computing an in-ball around $x$ and an out-ball around $y$, to reduce the \emph{backward distance} $d_G(y,x)$.
The separation between the forward-reduction stage and the backward-reduction stage is essential, as reducing the backward distance using two balls is only possible once the forward distance is sufficiently small.

We demonstrate the two advantages of the scissors carving framework as follows.
First, we recursively compute two balls of geometrically decreasing radii. 
Even though we may compute many balls across the recursion, most of them have radii much larger than $\Delta$. This allows us to bound the edge cut probability by $\Pr[e\in S]\leq O(1)\cdot \frac{w(e)}{\Delta}$.

Second, after both the forward-reduction stage and the backward-reduction stage, we ensure that both the forward distance and the backward distance of each remaining SCC $H$ are sufficiently small.
Using this structural advantage, we show that any SCC of $H$ fully contained in a single ball of radius $\Delta$ around $u\in H$ has a weak diameter of at most $O(\Delta)$.
This overcomes the limitation that, in general directed graphs, 
an SCC contained in a single ball of radius $\Delta$ can have an arbitrarily large diameter. 
This allows us to ensure the desired diameter bound.

Based on the scissors carving procedure, we compute an $(O(\pw), \Delta)$-LDD following the framework of~\cite{AGGNT19}. Specifically, we start by setting the current graph to $H=G$. At each stage, we take a shortest path $P$ in $H$ that we want to carve away. In particular, we choose $P$ such that $\pw(H\setminus P) < \pw(H)$.
Then, we apply scissors carving to $P$ and update the cutset $S$. This ensures that every SCC of $H\setminus S$ either has a bounded weak diameter or a reduced pathwidth (See Lemma~\ref{lem:overview-hbc}). Furthermore, we shall guarantee that every SCC disjoint from $P$ has a reduced pathwidth.
Then, we recurse on every SCC of $H\setminus S$ whose weak diameter is greater than $\Delta$.
By taking the union $S$ of the cutsets computed from all recursive calls, we ensure that every SCC of $G\setminus S$ has a weak diameter of at most $\Delta$.
The recursion depth is at most $\pw(G)$. Due to the union bound of probability, the probability bound of Lemma~\ref{lem:overview-hbc} yields $\Pr[e\in S] \leq O(\pw(G))\cdot \frac{w(e)}{\Delta}$.

\begin{lemma} (informal\footnote{Here, we give an informal statement in order to hide several technical terms behind it. See Lemma~\ref{lem:reduce-pathwidth} for details.}) \label{lem:overview-hbc}
    Let $H$ be a strongly connected subgraph of $G$.
    There is a randomized algorithm that computes a cutset $S\subset E(H)$ such that 
    for every strongly connected component $C$ of $H\setminus S$, either $\pw(C) < \pw(H)$ or the weak diameter of $C$ is $O(\Delta)$. Moreover, for every $e\in E(H)$, $\Pr[e\in S]\leq O(1)\cdot \frac{w(e)}{\Delta}$.
\end{lemma}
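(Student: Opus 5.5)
The plan is to realize the scissors carving procedure sketched above and analyze it in two stages.

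\emph{Setup.} Fix a path decomposition of $H$ of width $\pw(H)$. Let $a$ be a vertex in the first bag and $b$ a vertex in the last bag. Since $H$ is strongly connected, we can take $P$ to be a shortest $a$-$b$ path in $H$. The path $P$ meets every bag, so every subgraph of $H$ disjoint from $V(P)$ has pathwidth less than $\pw(H)$. It therefore suffices to make every SCC $C$ of $H\setminus S$ that meets $P$ have weak diameter $O(\Delta)$.

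\emph{Forward stage.} This is a recursive routine on a strongly connected subgraph $C$ meeting $P$. Let $x$ be the first vertex of $P$ in $C$ and $y$ the last, so $L=d_H(x,y)$ is the forward distance. If $L\le \Delta$, stop. Otherwise sample radii $R_x,R_y$ from a distribution on $[L/4,L/2]$; I would use the exponential distribution with rate $\Theta(1/L)$, truncated, or even a uniform distribution. Cut the boundary edges of $\outb{C}{x,R_x}$ and of $\inb{C}{y,R_y}$.

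No SCC can then contain both a vertex of $P$ with $d(x,\cdot)\le R_x$ and one with $d(\cdot,y)\le R_y$. It also cannot straddle the out-ball boundary while containing $x$, since the boundary edges leaving the out-ball are cut and the ball then has no outgoing edges. From this I would argue that each resulting SCC meeting $P$ sees a subpath of $P$ of length at most about $3L/4$: the parts of $P$ inside the two balls are separated, and $P$ is a shortest path, so subpath length equals distance. Recurse on each such SCC. Recursion depth is $O(\log(L_0/\Delta))$.

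The cut probability of an edge $e$ at a level with scale $L$ is $O(w(e)/L)$. The relevant scales are geometric, since each recursive call's $L$ drops by a constant factor. I will need $e$ to be threatened by $O(1)$ balls per scale, which follows because $e$ lies in a single SCC per level. Summing the geometric series over scales $L\ge\Delta$ gives $O(w(e)/\Delta)$.

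\emph{Backward stage.} Apply the same procedure to each SCC meeting $P$ whose forward distance is now at most $\Delta$. This time carve $\inb{C}{x,R}$ and $\outb{C}{y,R'}$ with radii proportional to the backward distance $d_C(y,x)$, recursing until that distance is also $O(\Delta)$. The key point is that forward distance can only shrink under passing to subgraphs of the subpath, so it stays $O(\Delta)$. The cost is again a geometric sum, $O(w(e)/\Delta)$.

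\emph{Diameter.} At the end every SCC $C$ meeting $P$ has $d_C(x,y),d_C(y,x)=O(\Delta)$. I still need a bound on $d(u,\cdot)$ for arbitrary $u\in C$. For this, run one final Bartal-style carve: an out-ball and an in-ball of radius in $[\Delta,2\Delta]$ around $x$, at cost $O(w(e)/\Delta)$. Any SCC inside both balls and containing $x$ has $d(u,v)\le d(u,x)+d(x,v)=O(\Delta)$. SCCs not containing $x$ but meeting $P$ can be handled by repeating the argument along the cycle $x\to y\to x$, which has length $O(\Delta)$ and so needs only $O(1)$ net points. SCCs missing $P$ have reduced pathwidth.

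\emph{Main obstacle.} The hardest step is the forward-stage progress claim. In the directed setting an SCC can re-enter $P$ non-contiguously, so "the portion of $P$ in $C$" must be defined as the span from the first to the last vertex of $P\cap C$. I then must show this span shrinks by a constant factor even though $C$ may use vertices outside both balls. My first attempt: a vertex $p\in P$ with $d(x,p)\le R_x$ cannot reach vertices outside $\outb{}{x,R_x}$ after cutting, so $C$ lies entirely inside or entirely outside that ball. The same holds for the in-ball at $y$. Cases on these memberships then give the span bound.
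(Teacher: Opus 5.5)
Your forward stage is fine. In fact, once the balls are measured in a fixed metric in which $P$ is shortest, an SCC outside both balls has span less than $L-R_x-R_y\le L/2$, so there you do not even need the balls to cover $P$. The genuine gap is the backward stage, which you dispatch with ``apply the same procedure''. There the balls $\inb{H}{x,R}$ and $\outb{H}{y,R'}$ have radii tied to $\tau'=d(y,x)$, and they need not cover the span of $P$ from $x$ to $y$. An SCC that avoids both balls but still meets that span need not have smaller backward distance.

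Concretely, take $x\to x'\to y'\to y$ with weights $\epsilon$, an edge $(y,x)$ of weight $\tau'$, and an edge $(y',x')$ of weight $\tau'$. The underlying graph is a $4$-cycle, $x$ and $y$ sit in the first and last bag, and $P$ is exactly this path. Then $d(x',x),d(y',x),d(y,x'),d(y,y')$ all exceed $\tau'$. So for radii below $\tau'$ neither $x'$ nor $y'$ lies in a ball, and the edges between them are not cut. Hence $\{x',y'\}$ is an SCC meeting $P$ with forward distance $\epsilon$, but with backward distance and weak diameter still $\tau'$. Recursing on it makes no progress in your potential. Your geometric-series charging of $\Pr[e\in S]$ needs the scale of the SCC containing $e$ to drop by a constant factor at every level, so it is unjustified here. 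Your fallback ``SCCs missing $P$ have reduced pathwidth'' does not apply either, because this SCC meets $P$. Nor can you simply enlarge the radii to about $\tau'+\Delta$ to cover the span: then an SCC inside $\inb{H}{x,R}$ only gets $d(y',x')\le R+\Delta$, which is no smaller than $\tau'$.

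The missing idea, which is the heart of the paper's Stage 2, is to let a shortest $y$-$x$ path $Q$ inside $H$ act as a substitute critical path. Choose one radius $R\in[\frac12\tau',\frac{13}{24}\tau']$ for both balls. Since $R\ge\tau'/2$, every $q\in Q$ has $d(q,x)\le R$ or $d(y,q)\le R$, so $Q\subseteq\inb{H}{x,R}\cup\outb{H}{y,R}$. Now consider an SCC avoiding both balls. $Q$ meets every bag between the bags of $x$ and $y$. The $a$-$x$ and $y$-$b$ subpaths of $P$ meet all remaining bags, and they also avoid this SCC: their vertices other than $x,y$ lie outside $C$, and $x,y$ lie in the balls. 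So the SCC misses a vertex of $H$ in every bag, has smaller pathwidth, and is handed back to the outer recursion rather than recursed on.

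The upper end $\frac{13}{24}\tau'$, combined with running this stage only while $\tau'>8\Delta$, gives $R+\Delta\le\frac23\tau'$ for SCCs inside a ball. For all of this, the balls must be taken in one fixed metric in which $P$ and $Q$ are shortest paths lying in $H$ (e.g.\ $d_H$ of the top-level input), not in the shrinking $C$. With $\outb{C}{x,R_x}$, as you wrote, $d_C(x,\cdot)$ no longer equals position along $P$, and even your forward-stage span bound fails. So the real obstacle is the backward stage, not the forward progress claim you flagged.

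In the last stage, no extra net points are needed. An in-ball radius of about $\Delta$ plus the backward bound (e.g.\ $[9\Delta,10\Delta]$, as in the paper) already puts the whole span inside both balls around $x$.
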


In the rest of this section,
we explain the core idea of the scissors carving technique and give a proof sketch of Lemma~\ref{lem:reduce-pathwidth}, especially for the first statement. 
We begin by defining some notation used in this paper.
For a vertex set $C\subseteq V$, we denote the set of edges whose source endpoint is in $C$ and target endpoint is outside $C$ by $\delta^+(C)$. Similarly, we denote the set of edges whose source endpoint is outside $C$ and target endpoint is inside $C$ by $\delta^-(C)$.

Let $H$ be a subgraph of $G$ and $P$ be a shortest path in $G$ whose two endpoints are contained in $H$.
For a strongly connected \emph{subgraph} $C$ of $H$, let $u$ and $v$ be the first and the last vertices of $P\cap C$ (along the order of $P$), respectively.
We define the \emph{surviving subpath} of $P$ with respect to $C$, denoted as $P[C]$, as the $u$-$v$ subpath of $P$. Note that $P[C]$ may not be fully contained in $C$.
We define the \emph{forward distance} and the \emph{backward distance} of $C$ (with respect to $P$) as $d_G(u,v)$ and $d_G(v,u)$, respectively, and denote them by $f_P(C)$ and $b_P(C)$, respectively.
Note that we analyze these distances with respect to the shortest paths in $G$, rather than the current subgraph $C$. This is sufficient for our purpose since our goal is to bound the \emph{weak diameter} of each SCC.

We use the following lemma. See Figure~\ref{fig:venn} for an illustration and Section~\ref{sec:ldd} for its proof.
\begin{restatable}{lemma}{twoballfin} \label{lem:twoball-fin}
    Let $B_1=B^*_G(v,R)$ and $B_2=B^{*'}_G(v,R)$ with $*,*'\in\{+,-\}$, and let $C$ be a strongly connected subgraph of $G$. Then, 
    \begin{itemize}  \setlength{\itemsep}{0.1pt}
        \item every SCC of $C \setminus (\delta^*(B_1)\cup \delta^{*'}(B_2))$ is either fully contained in $B_1\cap B_2$ or disjoint from $B_1\cap B_2$, and
        \item  every SCC of $C \setminus (\delta^*(B_1)\cup \delta^{*'}(B_2))$ is either fully contained in $B_1$, or fully contained in $B_2\setminus B_1$, or disjoint from $B_1\cup B_2$.
    \end{itemize}
\end{restatable}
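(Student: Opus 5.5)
The plan rests on one basic observation. If $B\subseteq V$ and we delete $\delta^+(B)$ from a graph, no directed path can leave $B$. Hence any strongly connected subgraph $D$ of the remaining graph that meets $B$ lies entirely inside $B$: take $x\in D\cap B$ and any $y\in D$; there is an $x$-$y$ path inside $D$, which starts in $B$ and cannot exit it, so $y\in B$. Symmetrically, if we delete $\delta^-(B)$, no path can enter $B$. So if $D$ meets $B$ at some $x$, then for any $y\in D$ the $y$-$x$ path ends in $B$. Since it could not have entered $B$, it must have started in $B$, so $y\in B$. In both cases, writing $C' := C\setminus(\delta^*(B_1)\cup\delta^{*'}(B_2))$, every SCC $D$ of $C'$ satisfies the dichotomy ``$D\subseteq B_i$ or $D\cap B_i=\emptyset$'' for $i=1,2$. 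The reason is that $D$ is a strongly connected subgraph of $C'$, and $C'$ avoids $\delta^*(B_1)$ and $\delta^{*'}(B_2)$ respectively. Note that this uses only strong connectivity of $D$ and never the specific form of the balls, nor that $C$ itself is strongly connected.

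First, I will state and prove this observation as a small claim for a general vertex set $B$ and sign $\circ\in\{+,-\}$. The claim: in $G'\setminus\delta^\circ(B)$, every strongly connected subgraph is either contained in $B$ or disjoint from $B$. The proof is exactly the path argument above, handling the two signs separately.

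Second, I will apply the claim twice to an SCC $D$ of $C'$, once with $(B_1,*)$ and once with $(B_2,*')$. This places $D$ in exactly one cell of the four-way classification: $D\subseteq B_1\cap B_2$, or $D\subseteq B_1\setminus B_2$, or $D\subseteq B_2\setminus B_1$, or $D$ disjoint from $B_1\cup B_2$. Both bullets then follow by regrouping the cells. For the first bullet, $D$ is either inside $B_1\cap B_2$ or in one of the other three cells, and each of those is disjoint from $B_1\cap B_2$. For the second bullet, the first two cells give $D\subseteq B_1$, the third gives $D\subseteq B_2\setminus B_1$, and the fourth gives disjointness from $B_1\cup B_2$.

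There is no real obstacle here. The only point needing care is the in-ball case $\circ=-$, where the direction of the path must be reversed: we use the $y$-to-$x$ path rather than $x$-to-$y$. We should also note that an SCC is in particular a strongly connected subgraph of $C'$, so all its internal paths avoid the removed edges. Nothing about distances or the radius $R$ is needed.
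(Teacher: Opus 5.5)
Your proof is correct and follows essentially the same route as the paper. The paper also shows, by contradiction via an edge of an $x$-$y$ path inside the SCC that crosses from $B_1$ to its complement, that each SCC is either inside or disjoint from each ball; it handles the other sign by symmetry and then combines the two dichotomies into the four cells $B_1\cap B_2$, $B_1\setminus B_2$, $B_2\setminus B_1$, and the complement of $B_1\cup B_2$. Your phrasing for an arbitrary vertex set $B$, which never uses centers or radii, is a mild bonus: the algorithm actually applies the lemma to balls with different centers (Stages~1 and~2) or different radii (Stage~3), which the formal statement does not literally cover.
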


\begin{figure}[t]
	\centering
	\includegraphics[scale=0.45]{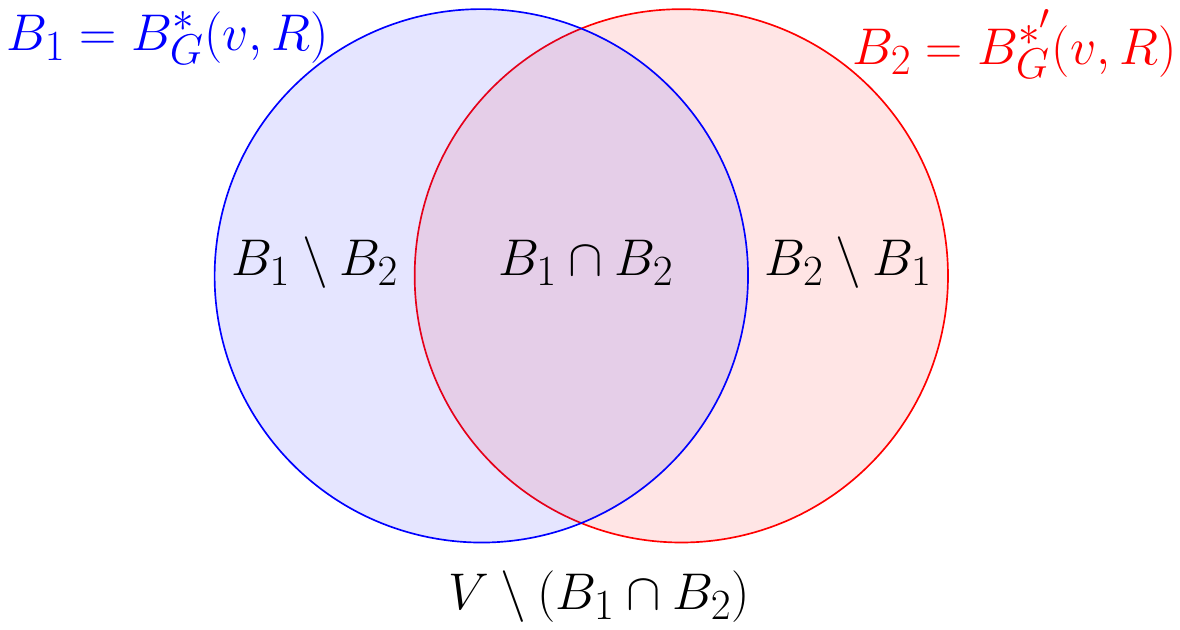}
	\caption{Due to the definition of $\delta^*(\cdot)$, every SCC of $H\setminus \delta^*(B_1)$ is either contained in $B_1$ or disjoint from $B_1$. The same holds symmetrically for $B_2$.
    Therefore, every SCC of $H\setminus (\delta^*(B_1)\cup \delta^{*'}(B_2))$ is fully contained in one of the regions of the Venn diagram. This proves both items of Lemma~\ref{lem:twoball-fin} for the special case of $C=G$. See Section~\ref{sec:ldd} for the general case.
    } 
	\label{fig:venn}
\end{figure}

Next, we explain the scissors carving procedure. 
We are given a strongly connected subgraph $H$ of $G$ and a shortest path $P$ as input, such that the two endpoints of $P$ are contained in $H$, and $\pw(H\setminus P)<\pw(H)$.  
The algorithm consists of three stages, and the first two stages operate recursively.
As we will see below, these stages aim to reduce the forward distance, reduce the backward distance, and bound the weak diameter of each SCC, respectively.
\paragraph{First stage: reducing forward distance.}
In the first stage we are given a strongly connected subgraph $C$ of $H$ as input. Let $P[C]=\langle u,\ldots, v\rangle$ and $\tau = f_P(C)$. 
We sample the radius $R$ uniformly at random from $[\frac{1}{2}\tau, \frac{2}{3}\tau]$, and compute two balls $B_1=\outb{G}{u,R}$, $B_2=\inb{G}{v,R}$ and add their boundary edges $\delta^+(B_1)\cup \delta^-(B_2)$ into the cutset $S$. 
We then recursively apply this approach to each SCC of $C\setminus S$ contained in $B_1\cup B_2$. 
We halt the first stage when $f_P(C)\leq \Delta$, and proceed to the second stage.
This completes the algorithm for the first stage.

We show that the above procedure ensures that each resulting SCC for the recursive calls has a reduced forward distance, while the remaining SCCs have a reduced pathwidth. Hence, it suffices to proceed with the SCCs of the former type, i.e., those with a reduced forward distance.
To see this, since we take $R$ from the range $[\frac{1}{2}\tau, \frac{2}{3}\tau]$ and $P[C]$ is a shortest path in $G$, for every $w\in P[C]$, either $d_G(u, w)\leq R$ or $d_G(w,v)\leq R$. Hence, $P\subseteq B_1\cup B_2$.
In addition, due to Lemma~\ref{lem:twoball-fin}, every SCC of $C\setminus S$ is either fully contained in $B_1$, or fully contained in $B_2\setminus B_1$, or disjoint from $B_1\cup B_2$.
Let $C'$ be an SCC of $C\setminus S$ and $P[C']=\langle x,\ldots, y\rangle$. Since $P[C']$ is a subpath of $P$ and $C'\subseteq C$, $P[C']$ is a subpath of $P[C]$.
If $C'$ is contained in $B_1$, we have that $$d_G(x,y) \leq d_G(u, y) \stackrel{B_1}\leq R \leq \frac{2}{3}\tau.$$ Otherwise, if $C'$ is contained in $B_2\setminus B_1$, we have that $$d_G(x,y)\leq d_G(x,v) \stackrel{B_2}\leq R \leq \frac{2}{3}\tau.$$
Finally, if $C'$ is disjoint from $B_1\cup B_2$, $P\subseteq B_1\cup B_2$ implies that $C'$ is a subset of $C\setminus P$. Then, we have that $\pw(C') \leq \pw(C\setminus P) \leq \pw(H\setminus P) < \pw(H)$.
Hence, we conclude that every SCC of $G\setminus S$ either has a reduced forward distance or a reduced pathwidth, as desired.

\paragraph{Second stage: reducing backward distance.}
In the second stage, we are given a strongly connected subgraph $C$ of $H$ as input with the additional property that $f_P(C)\leq \Delta$. 
Let $P[C]=\langle u,\ldots, v\rangle$ and $b_P(C)=\tau'$.
Importantly, throughout this stage, we ensure that the condition on $f_P(C)$ always holds.
Our algorithm is almost identical to that of the first stage, except that we compute two balls in the reversed direction and sample the radius $R$ from a narrower interval.
The reason we choose a narrower interval is to reduce the backward distance by a constant fraction.

More precisely, we sample $R$ uniformly at random from $[\frac{1}{2}\tau', \frac{13}{24}\tau']$, compute two balls $B_1=\inb{G}{u,R},B_2=\outb{G}{v,R}$ and add $\delta^-(B_1)\cup \delta^+(B_2)$ to the cutset $S$.
We recursively apply this approach to each SCC of $C\setminus S$ contained in $B_1\cup B_2$. We halt the second stage when $b_P(C)\leq 8\Delta$. Here, $8$ is also a carefully chosen constant in order to bound the backward distance.

We show that a symmetrical property holds for this stage: Every SCC contained in $B_1\cup B_2$ has a reduced backward distance, and every SCC disjoint from $B_1\cup B_2$ has a reduced pathwidth.
Let $C'$ be an SCC of $C\setminus S$ and $P[C']=\langle x,\ldots, y\rangle$.
First, we bound $b_P(C')$ in the case where $C'\subseteq B_1\cup B_2$.
From Lemma~\ref{lem:twoball-fin}, without loss of generality, let us assume $C'\subseteq B_1$. The other case where $C'\subseteq B_2\setminus B_1$ is analogous.
From the argument of the first stage, $P[C']$ is a subpath of $P[C]$ and $d_G(u,v) = f_P(C) \leq \Delta$. In addition, due to the triangle inequality, we have that:
$$d_G(y,x) \stackrel{\triangle}\leq d_G(y, u) + d_G(u,x) \leq d_G(y,u) + d_G(u,v) \stackrel{B_1}\leq R + d_G(u,v) \leq R + \Delta \leq \frac{13}{24}\tau' + \Delta.$$
Since the halting condition is $\tau' \leq 8\Delta$, we have that $\frac{13}{24}\tau' + \Delta \leq \frac{2}{3}\tau'$ throughout the second stage. Hence, the backward distance is reduced by $2/3$.

Next, we bound the pathwidth of $C'$ in the case where $C'$ is disjoint from $B_1\cup B_2$.
The difference is that $P[C]$ may not be fully contained in $B_1\cup B_2$.
For simplicity, let us assume that $u$ and $v$ are the leftmost and rightmost vertices of $P$ in the path decomposition, respectively. That is, for two bags $X_u$ and $X_v$ containing $u$ and $v$ respectively, every bag $X$ that intersects $P$, lies between $X_u$ and $X_v$ in the path decomposition\footnote{Indeed, we can guarantee this assumption by carefully taking certain subpaths in each recursion step. See Section~\ref{sec:ldd} for details.}.
We consider a shortest $v$-$u$ path $P'$ of length $\tau'$ in $G$.
Since $R\geq \frac{1}{2}\tau'$, it is easy to check that $P'\subseteq B_1\cup B_2$.
Moreover, since $P'$ intersects every bag $X$ between $X_u$ and $X_v$,
$P'$ intersects every bag $X$ that intersects $P$.
Since $C'$ is disjoint from $B_1\cup B_2$, it is disjoint from $P'$ as well.
Hence, for each bag $X$ intersecting $P\cap C'$, we have that:
$$|C'\cap X| < |C'\cap X| + |P'\cap X| = |(C'\cup P')\cap X| \leq |C\cap X|.$$
This establishes that $\pw(C') < \pw(C)$, as desired.
See Figure~\ref{fig:overview-hbc}(b) for an illustration.

\paragraph{Third stage: reducing the weak diameter.}
Finally, the third stage serves as the base case. We are given a strongly connected subgraph $C$ of $H$ as input, where the two previous stages guarantee that $f_P(C)\leq \Delta$ and $b_P(C)\leq 8\Delta$.
One might ask whether $f_P(C), b_P(C) = O(\Delta)$ directly implies that the weak diameter of $C$ is $O(\Delta)$. This is not true in general. What we can guarantee is only the (bi-)distance between two endpoints of $P[C]$, which does not necessarily bound the distance between every pair of vertices in $C$. See Figure~\ref{fig:third-stage} for an illustration.

Let $P[C]=\langle u,\ldots, v\rangle$. 
We compute two balls $\outb{G}{u, R_1}$ and $\inb{G}{u,R_2}$ around the same vertex $u$, where we sample $R_1,R_2$ uniformly at random from $[\Delta, 2\Delta]$ and $[9\Delta, 10\Delta]$, respectively. Then, we add $\delta^+(B_1)\cup \delta^-(B_2)$ into the cutset $S$.
Every SCC contained in $B_1\cap B_2$ has a small weak diameter, since for every pair of vertices in $B_1\cap B_2$, we can draw a detour of length $O(\Delta)$ through $u$.
In addition, it is straightforward to verify that $P[C]\subset B_1\cap B_2$. Subsequently, every SCC disjoint from $B_1\cap B_2$ has a reduced pathwidth. Due to Lemma~\ref{lem:twoball-fin} we conclude that every SCC of $G\setminus S$ after the third stage either has a weak diameter of at most $\Delta$ or a reduced pathwidth. 
This completes the proof sketch of Lemma~\ref{lem:overview-hbc}.

\begin{figure}
    \centering
    \includegraphics[scale=0.75]{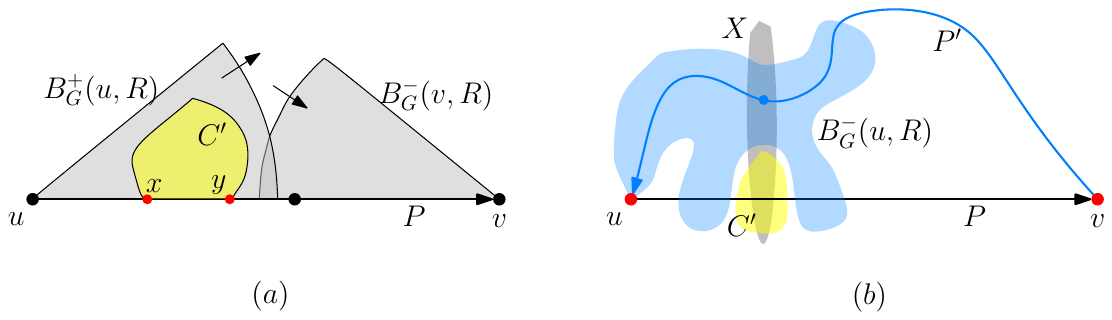}
    \caption{$(a)$ Illustration of the first stage. The gray region denotes two balls $\outb{G}{u,R}$ and $\inb{G}{v,R}$ where $R$ is sampled from $[\frac{1}{2}\tau, \frac{2}{3}\tau]$ with $\tau = d_G(u,v)$. For an SCC $C'$ (yellow region) contained in $B_1=\outb{G}{u, R}$  and $x,y\in P\cap C'$, we have that $d_G(x,y) \leq d_G(u,y) \leq \frac{2}{3}\tau$, since $P$ is a shortest path in $G$. $(b)$ Illustration of the second stage. The blue region denotes $\inb{G}{u, R}$, and $P'$ represents the shortest $v$-$u$ path in $G$. In addition, the yellow region denotes an SCC $C'$ which is disjoint from $\inb{G}{u,R}$.
    Let us assume that $u$ and $v$ are the leftmost and rightmost vertices of $P$ with respect to the path decomposition.
    Then, for every bag $X$ that intersects $P\cap C'$, $X$ also intersects $P'$ (blue vertex). Since $C'$ is disjoint from $P'$, we can derive $\pw(C') <\pw(C)$.} 
    \label{fig:overview-hbc}
\end{figure}

\paragraph{Remark.}
One may expect a logarithmic loss parameter of $O(\log(\ppw(G)))$ as in~\cite{AGGNT19}. However, our scissors carving framework differs from previous ball carving in the following sense. In previous work, once the interior of a ball was carved away, the resulting cluster became ``safe", meaning that it was no longer affected by further ball carving steps. 
This allowed~\cite{AGGNT19} to sample the radius from a (truncated) exponential distribution and utilize its (almost) memoryless property, ensuring them to bound the loss parameter by the logarithmic number of threateners.

In contrast, in our scissors carving, once we compute two balls that cover a shortest path, the interiors of these balls are conveyed to the next round. That is, each edge can be threatened multiple times regardless of whether it is inside a ball or not. 
Thus, the cut probability becomes linear in the number of threateners. 
We also remark that the same phenomenon is observed in the directed treewidth case~\cite{memoli2018quasimetric}.

\begin{figure}
    \centering
    \includegraphics[scale=0.75]{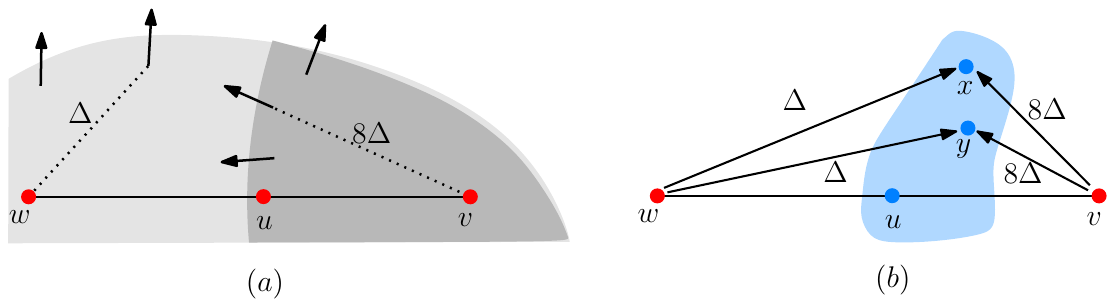}
    \caption{$(a)$ Illustration of the graph partition from the first two stages. In the first stage, an out-ball $\outb{G}{w,\Delta}$ (light gray region) containing $u,v$ is computed. In the second stage, an out-ball $\outb{G}{v, 8\Delta}$ (dark gray region) containing $u$ is computed. $(b)$ At the start of the third stage, an SCC (blue region) might have a large diameter, as no upper bounds for $d_G(x,u)$ and $d_G(y,u)$ are assumed. In the third stage, we reduce the weak diameter of an SCC intersecting $P$ by computing two balls of radii $O(\Delta)$ around $u$.}
    \label{fig:third-stage}
\end{figure}

\section{Directed LDDs for Bounded-Pathwidth Directed Graphs} \label{sec:ldd}
\pathwidthldd*
In this section, we prove Theorem~\ref{thm:ldd}.  
While we state our result in terms of pathwidth, we find it more convenient to use the notion of \emph{path partition-width} (see Definition~\ref{def:pp} below). 
This is a directed analogue of the (undirected) tree partition-width~\cite{ding1996tree} for the directed graphs with small pathwidth. 
\begin{definition}[Path Partition] \label{def:pp}
    A \emph{path partition} of a directed graph $G=(V,E)$ is a path $\mathcal P$ whose vertices are bijectively associated with the sets of a partition $\mathcal X=\{X_1,X_2,\ldots, X_m\}$ of $V$, called \emph{bags}, such that for every $(u,v)\in E$, there exist $X_i, X_{i+1}\in \mathcal X$ for some $1\leq i\leq m-1$ such that $\{u,v\}\subset X_i\cup X_{i+1}$. The \emph{width} of $\mathcal X$ is $\max_{i\in [m]}\{|X_i|\}$. The path partition-width $\ppw(G)$ of $G$ is the minimum width among all path partitions of $G$.
\end{definition}

To begin with, we apply an edge-preserving isometric embedding from a directed graph with bounded pathwidth into another directed graph with bounded path partition-width. 
Crucially, this embedding preserves the edge structure of the graph.
This will allow us to lift an LDD of the embedded graph back to the original graph.
Note that our embedding is not new. Filtser et al.~\cite{filtser2025optimal} defined a similar embedding from (undirected) bounded treewidth graphs into graphs of bounded tree partition-width\footnote{Tree partition can be viewed as a natural generalization of path partition, in the same way that tree decomposition generalizes path decomposition.} graphs. Although not stated explicitly, their embedding is also an edge-preserving isometric embedding.

\begin{restatable}{definition}{epie}\textup{(Edge-preserving isometric embedding)}\textup{\cite{filtser2025optimal}} \label{def:epie} 
    Let $G=(V,E,w)$, $H=(U, E_H,w_H)$ be directed graphs. We say that a map $\phi:V\rightarrow U$ is an \emph{edge preserving isometric embedding} if the following conditions hold:
\begin{itemize}
    \item (isometry) $d_G(u,v)=d_H(\phi(u),\phi(v))$ for every $u,v\in V$, 
    \item (edge preservation) there is an injective mapping $\bar \phi: E\rightarrow E_H$ such that for every $e\in E$, $w(e)=w_H(\bar\phi(e))$ and $\bar\phi(e)$ belongs to a shortest $\phi(u)$-$\phi(v)$ path in $H$. 
\end{itemize}
\end{restatable}

While the ultimate goal is to design an LDD for directed graphs of bounded pathwidth, our construction algorithm is formally defined on graphs of bounded path partition-width.  
We remark that a similar approach was used by Filtser et al.~\cite{filtser2025optimal} to compute an LDD for (undirected) graphs of bounded treewidth.

We show that there is an edge-preserving isometric embedding from graphs with bounded pathwidth into graphs with bounded path partition-width.
\begin{restatable}{lemma}{embedwidth} \label{lem:embed-width}
    Let $G=(V,E,w)$ be a directed graph. Then, there is an edge-preserving isometric embedding from $G$ into a directed graph $H=(U,E_H,w_H)$ such that $\ppw(H) \leq \pw(G)+1$.
\end{restatable}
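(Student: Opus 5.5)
Take a path decomposition $(\mathcal P,\{B_1,\ldots,B_m\})$ of $\bar G$ of width $k=\pw(G)$, so $|B_i|\le k+1$. We may assume it is nice, i.e.\ consecutive bags differ in exactly one vertex; refining a decomposition this way does not increase its width. The plan is to give each vertex $v$ one copy per bag containing it: let $v^{(i)}$ be the copy of $v$ associated with bag $B_i$. Since bags containing $v$ form an interval $[a_v,b_v]$, the copies of $v$ are $v^{(a_v)},\ldots,v^{(b_v)}$. We set $\phi(v)=v^{(a_v)}$, and take as bags of $H$ the sets $X_i=\{v^{(i)}: v\in B_i\}$. These sets partition $U$ and satisfy $|X_i|\le k+1$.

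\textbf{Edges of $H$.} First, for each $v$ and each $a_v\le i<b_v$, add two ``copy edges'' $(v^{(i)},v^{(i+1)})$ and $(v^{(i+1)},v^{(i)})$ of weight $0$. (If strictly positive weights are needed, use a tiny $\varepsilon$ and a limiting argument; I expect weight $0$ is acceptable in the paper's setting.) Second, for each edge $(u,v)\in E$, choose one index $i$ with $\{u,v\}\subseteq B_i$ and add $\bar\phi((u,v))=(u^{(i)},v^{(i)})$ with weight $w(u,v)$. Every edge of $H$ joins vertices of the same bag or of consecutive bags, so $\mathcal X=\{X_i\}$ is a path partition and $\ppw(H)\le k+1=\pw(G)+1$. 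The map $\bar\phi$ is injective because distinct edges of $G$ receive distinct images.

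\textbf{Isometry.} The key tool is the projection $\pi:U\to V$ with $\pi(v^{(i)})=v$. Any path in $H$ projects to a walk in $G$ of the same length: copy edges collapse to a single vertex at cost $0$, and every other edge maps to an original edge of equal weight. Hence $d_H(\phi(u),\phi(v))\ge d_G(u,v)$. Conversely, any edge $(x,y)\in E$ lifts: starting from any copy of $x$, move along $0$-weight copy edges to $x^{(i)}$, traverse $\bar\phi((x,y))$, and arrive at $y^{(i)}$. Because all copies of a vertex are mutually at distance $0$, concatenating these lifts along a shortest $u$-$v$ path of $G$ gives $d_H(\phi(u),\phi(v))\le d_G(u,v)$. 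The same reasoning shows the edge-preservation condition: if $e=(x,y)$, the path $\phi(x)\to x^{(i)}\to y^{(i)}\to\phi(y)$ has length $w(e)$. Since $d_H(\phi(x),\phi(y))=d_G(x,y)\le w(e)$, this path is shortest precisely when $w(e)=d_G(x,y)$.

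\textbf{Main obstacle.} The delicate step is the case $w(e)>d_G(x,y)$. Then $\bar\phi(e)$ lies on no shortest path, and the requirement ``$\bar\phi(e)$ belongs to a shortest $\phi(u)$-$\phi(v)$ path'' fails. I would handle this with the standard reduction: such edges are never on shortest paths, so deleting them from $G$ does not change distances or pathwidth. We may therefore assume every edge is a shortest path between its endpoints. If the paper's later LDD argument needs those edges anyway, an edge $e$ with $w(e)>d_G(x,y)$ is cut whenever $x$ and $y$ end up in different SCCs, which happens with probability at most the bound for the shortest $x$-$y$ path, itself at most $\beta w(e)/\Delta$. Either way, the core of the proof is the copy construction together with the projection argument.
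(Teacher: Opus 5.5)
Your construction is the paper's: one copy of each vertex per bag containing it, zero-weight bidirectional edges between consecutive copies, each original edge realized inside a single bag, $\phi(v)$ the first copy, and $X_i$ the set of copies belonging to $B_i$. Your projection argument for $d_H(\phi(u),\phi(v))\ge d_G(u,v)$ is more explicit than the paper's one-line isometry justification, and assuming a nice decomposition is harmless but unnecessary.

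Your caveat about edges with $w(e)>d_G(x,y)$ is correct. For such an edge no embedding can satisfy edge preservation: any $\phi(x)$-$\phi(y)$ path through an edge of weight $w(e)$ is longer than $d_H(\phi(x),\phi(y))=d_G(x,y)$. So the lemma genuinely requires every edge to be a shortest path between its endpoints. The paper's proof passes over this with the inequality $d_G(u,v)\le w(e)$, which points the wrong way. Deleting non-tight edges is the right preprocessing. It preserves all distances, since a shortest path never uses a non-tight edge, and it does not increase pathwidth (it may decrease it).

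The one thing to fix is your fallback for re-inserting the deleted edges. The rule ``cut $e=(x,y)$ whenever $x$ and $y$ end up in different SCCs'' preserves the diameter guarantee but not the probability bound. Two vertices can lie in different SCCs because $y$ fails to reach $x$, and that has nothing to do with whether a shortest $x$-$y$ path is hit. For example, if $G'$ is a DAG, then $S=\emptyset$ is a valid LDD of $G'$, yet your rule cuts every re-inserted edge with probability $1$.

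Instead, cut $e$ exactly when $x$ cannot reach $y$ in $G'\setminus S$. If $e$ is kept, $x$ already reaches $y$ in $G'\setminus S$, so re-inserting all kept edges changes no reachability and hence no SCC. Moreover, $\Pr[e\text{ is cut}]$ is at most the probability that some edge of a fixed shortest $x$-$y$ path $Q$ lies in $S$. That is at most $\sum_{f\in Q}\beta\, w(f)/\Delta=\beta\, d_G(x,y)/\Delta<\beta\, w(e)/\Delta$.
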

\begin{proof}
    The proof of this lemma follows the same line of reasoning as Lemma 2.2 of~\cite{filtser2025optimal}, but we state the full proof for completeness.
    Let $(P, \mathcal X)$ be a path decomposition of $G$ with width $k=\pw(G)$.
    We compute $H$ and its path partition as follows. For each $v\in V(G)$, if it appears in $t$ consecutive bags $X_1,X_2,\ldots, X_t$ of $\mathcal X$, we make $t$ copies $v_1,v_2,\ldots, v_t$ of $v$ and replace each vertex $v\in X_i$ with the $i$-th copy $v_i$. We then set $\phi(v)=v_1$. This defines the vertex set $U$ and the embedding $\phi$.

    If $X_i$ and $X_j$ are two adjacent bags, i.e., $|i-j|=1$, we connect $v_i$ and $v_j$ with two bidirectional edges of zero weight. For each edge $(u,v)\in E(G)$, there is a bag $X_i\in \mathcal X$ that contains both $u$ and $v$. If there are multiple such bags, we choose an arbitrary one. We add an edge $(u_i, v_i)$ with weight $w(u,v)$.
    This completes the construction of $E_H$ and $w_H$.

    We show that $\phi$ is an edge-preserving isometric embedding. The isometry property follows from the fact that any two adjacent copies of $v$ are connected by zero-weight bidirectional edges, and for each edge $(u,v)\in E$, there is a unique corresponding edge $(u', v')$ with weight
    $w_H(u',v')=w(e)$. 
    For the edge preservation property, let us consider an edge $e=(u,v)$ in $G$, and
    let $\bar \phi(e)=(u',v')$ be the corresponding edge in $E_H$.
    By construction, it is obvious that the resulting map $\bar\phi$ is injective.
    Note that the set of bags of $(P, \mathcal X)$ containing $v$ forms a subpath of $P$.
    Hence, there is a $v'$-$\phi(v)$ path $\pi_v$ of zero length in $H$. 
    Similarly, there is a $\phi(u)$-$u'$ path $\pi_u$ of zero length in $H$. 
    Finally, let $\pi_{u,v}$ be the concatenation of $\pi_u$, $(u',v')$, and $\pi_v$. 
    Then $\pi_{u,v}$ consists of exactly one edge of weight $w(e)$ and other edges of zero weight.
    The length of $\pi_{u,v}$ is therefore $w(e)$.
    Due to the isometry property, and since $d_G(u,v) \leq w(e)$, we conclude that $\pi_{u,v}$ is a shortest $\phi(u)$-$\phi(v)$ path containing $\bar \phi(e)$.
    This establishes the edge preservation property.

    Finally, we compute the path partition $(P, \mathcal X')$ of $H$ using $(P, \mathcal X)$.
    More specifically, we use the same path $P$, and 
    for each bag $X\in \mathcal X$, we compute a bag $X'$ consisting of the copies of the vertices of $X$. Obviously, the resulting collection $\mathcal X':=\{X'\mid X\in \mathcal X\}$ is a partition of $U$.
    Since the width of $(P, \mathcal X)$ is $k$, the maximum bag size is $\max_{X\in \mathcal X}|X| = k+1$.
    Consequently, 
    $$\max_{X'\in \mathcal X'}|X'|=\max_{X\in \mathcal X}|X| = \pw(G)+1.$$
    Therefore, $(P, \mathcal X')$ is a path partition of width $\pw(G)+1$ in $H$, which implies $\ppw(H)\leq \pw(G)+1$. This completes the proof. 
\end{proof}

Crucially, an LDD of the embedded graph can be transferred back to the original graph without any additional loss. 
\begin{restatable}{lemma}{embedLDD} \label{lem:embed-LDD}
    Let $\phi$ be an edge-preserving isometric embedding from $G$ into $H$, and
    $\mathcal D$ be a $(\beta, \Delta)$-LDD of $H$. Then there is a $(\beta, \Delta)$-LDD of $G$.
\end{restatable}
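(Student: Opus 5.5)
The plan is to pull the distribution back along $\bar\phi$. Sample a cutset $S_H\sim\mathcal D$ and set $S:=\bar\phi^{-1}(S_H)=\{e\in E : \bar\phi(e)\in S_H\}$. We then check the two conditions of Definition~\ref{def:ldd} for $S$.

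\textbf{Cut probability.} Since $\bar\phi$ is injective and weight-preserving, for every $e\in E$ we have
\[
\Pr[e\in S]=\Pr[\bar\phi(e)\in S_H]\le \beta\cdot\frac{w_H(\bar\phi(e))}{\Delta}=\beta\cdot \frac{w(e)}{\Delta}.
\]

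\textbf{Diameter.} Let $C$ be an SCC of $G\setminus S$ and let $x,y\in C$. By isometry, $d_G(x,y)=d_H(\phi(x),\phi(y))$. It therefore suffices to show that $\phi(x)$ and $\phi(y)$ lie in a common SCC of $H\setminus S_H$, since that SCC has weak diameter at most $\Delta$ in $H$. Because $C$ is strongly connected, it is enough to show that every edge $e=(u,v)\in E\setminus S$ gives a directed $\phi(u)$-$\phi(v)$ walk in $H\setminus S_H$. Concatenating these walks along the cycles of $C$ then puts all images of $C$ into a single SCC of $H\setminus S_H$.

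\textbf{Main obstacle.} The walk must avoid $S_H$, but edge preservation only says that $\bar\phi(e)\notin S_H$ lies on some shortest $\phi(u)$-$\phi(v)$ path $\pi$ in $H$. The other edges of $\pi$ are not images of edges outside $S$, so they might be cut.

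For a general abstract embedding, I would try to resolve this by a modification of the construction rather than of $S$. A first attempt would be to add to $S$ every $e$ whose path $\pi_e$ meets $S_H$, but this can blow up the probability bound, so I would avoid it.

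Instead, I would use the concrete structure of the embedding built in Lemma~\ref{lem:embed-width}, which is where the lemma is applied. There, $\pi_e$ consists of $\bar\phi(e)$ together with zero-weight copy edges. By condition~2 of Definition~\ref{def:ldd}, any edge of weight $0$ is cut with probability $0$, so almost surely $S_H$ contains no zero-weight edge. Then $\pi_e\setminus\{\bar\phi(e)\}$ survives, so the whole of $\pi_e$ lies in $H\setminus S_H$ whenever $e\notin S$.

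The same argument works for the abstract definition if I interpret it (as in~\cite{filtser2025optimal}) as saying that $\bar\phi(e)$ is the unique positive-weight edge on its path. Failing that, I would note that every edge $f\ne\bar\phi(e)$ on $\pi_e$ satisfies $w_H(f)\le d_H(\phi(u),\phi(v))-w(e)\le 0$, because $d_G(u,v)\le w(e)$ and isometry gives $d_H(\phi(u),\phi(v))\le w(e)$. Hence all such $f$ have weight $0$ and are almost surely uncut. This closes the argument in general, after discarding the null event that a zero-weight edge is cut.
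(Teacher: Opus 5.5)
Your proof is correct and follows the paper's argument. You pull $S_H$ back along $\bar\phi$. You then use isometry together with $d_G(u,v)\le w(e)$ to see that every other edge of the shortest path $\pi_e$ has zero weight, so it is almost surely never cut and $\phi(C)$ stays strongly connected in $H\setminus S_H$. Your detour through the concrete construction of Lemma~\ref{lem:embed-width} is unnecessary: your final general argument already closes the gap exactly as the paper does.
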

\begin{proof}
    We obtain a distribution $\mathcal D'$ over the cutsets of $G$ as follows.
    We sample a cutset $S\sim \mathcal D$. Note that $\Pr[e\in S]\leq \beta\cdot\frac{w(e)}{\Delta}$ implies that $\Pr[e\in S]$ is zero for each edge $e$ of zero weight.
    Therefore, each edge $e\in S$ has positive weight. 
    Let $S':=\{\bar \phi^{-1}(e)\mid e\in S\}$ be the cutset of $G$. Here, we say $e'=\bar\phi^{-1}(e)$ if $e = \bar\phi(e')$. 
    We show that the induced distribution of $S'$ is indeed a $(\beta, \Delta)$-LDD of $G$.

    Let $C$ be an SCC of $G\setminus S'$. First, we show that the weak diameter of $C$ is at most $\Delta$.
    For each edge $e=(u,v)$ in $C$, 
    $e\notin S'$ implies $\bar \phi(e)\notin S$.
    Due to the edge preservation property, there is a shortest $\phi(u)$-$\phi(v)$ path $\pi_{u,v}$ in $H$, one of whose edges is $\bar \phi(e)$.
    Due to the isometry property, all other edges of $\pi_{u,v}$ have zero weight.
    Since $\bar\phi(e)\notin S$ and $S$ does not contain a zero-weight edge, $\pi_{u,v}$ is disjoint from $S$. Hence, we have that $\pi_{u,v}$ is contained in $H\setminus S$.
    Since $C$ is an SCC of $G\setminus S'$, the set $C':=\{ \phi(v)\mid v\in C\}$ is strongly connected in $H\setminus S$. 
    Since $C'$ is a subset of an SCC of $H\setminus S$ and $S$ is a random cutset that induces a $(\beta, \Delta)$-LDD of $H$, the weak diameter of $C'$ in $H$ is at most $\Delta$.
    Then, due to the isometry property, the weak diameter of $C$ in $G$ is equal to the weak diameter of $C'$ in $H$.
    Hence, the weak diameter of $C$ is at most $\Delta$, as desired.
    
    Second, we analyze the probability $\Pr[e\in S']$. From the definition of $S'$, 
    \begin{align*}
        \Pr[e\in S'] &= \Pr[\bar \phi(e)\in S] \leq \beta\cdot \frac{w_H(\bar \phi(e))}{\Delta}
        = \beta\cdot \frac{w_G(e)}{\Delta}.
    \end{align*}
    Here, the last equality comes from the edge preservation property.
    This completes the proof.    
\end{proof}

Based on~\Cref{lem:embed-LDD,lem:embed-width}, it suffices to show the existence of $(O(k), \Delta)$-LDD for the directed graphs with path partition-width at most $k$. 
Throughout this section, we let $G$ be a directed graph, and let $(\mathcal P, \mathcal X=\{X_1,X_2,\ldots, X_m\})$ be the path partition of $G$ with width $\ppw(G)$.
If it is clear from the context, we slightly abuse the notation and simply denote the path partition $(\mathcal P, \mathcal X)$ by $\mathcal P$.
Furthermore, we may assume that $G$ is strongly connected as otherwise, we can independently execute our algorithm on each SCC of $G$.
For a subgraph $H$ of $G$, we define the \emph{induced path partition} of $H$ with respect to $\mathcal P$ as $(\mathcal P, \{X_i\cap V(H)\mid X_i\in \mathcal X\})$, and denote it by $\mathcal P[H]$.
We denote the width of $\mathcal P[H]$ by $w(\mathcal P[H])$. Note that for every subgraph $H$ of $G$, $\ipw{H}\leq \ipw{G} = \ppw(G)$ trivially holds.

The remainder of this section is devoted to proving the following lemma.
\begin{lemma} \label{lem:ldd-pathpartition}
    For any $\Delta>0$, there is an $(O(\ppw(G)),\Delta)$-LDD of $G$.
\end{lemma}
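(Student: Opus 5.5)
The plan is to prove Lemma~\ref{lem:ldd-pathpartition} by a recursion on the induced path partition width $\ipw{H}$. The core tool is a one-level procedure, the \rp of Section~\ref{sec:overview-ours}, formalized as the precise version of Lemma~\ref{lem:overview-hbc}.

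The one-level statement is as follows. Given a strongly connected subgraph $H$ of $G$, there is a random cutset $S_H\subseteq E(H)$ with two properties. First, every SCC $C$ of $H\setminus S_H$ either has weak diameter at most $\Delta$, or satisfies $\ipw{C}<\ipw{H}$. Second, $\Pr[e\in S_H]\le O(1)\cdot w(e)/\Delta$ for every edge $e$.

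The global algorithm starts from $G$ (assumed strongly connected), applies the procedure, and recurses independently on every SCC whose weak diameter exceeds $\Delta$. Since the width drops at each level and $\ipw{G}=\ppw(G)$, the recursion depth is at most $\ppw(G)$. The output is the union $S$ of all cutsets. At each level the recursive subgraphs are vertex-disjoint SCCs, so any edge is handled by at most one call per level. A union bound over levels then gives $\Pr[e\in S]\le O(\ppw(G))\, w(e)/\Delta$.

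For correctness, every SCC of $G\setminus S$ lies inside some terminal SCC of weak diameter at most $\Delta$. Weak diameter is measured in $G$, so it is monotone under taking subsets. To get diameter exactly $\Delta$ rather than $O(\Delta)$, I will run the construction with $\Delta/c$ for a suitable constant $c$.

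For the one-level procedure, I fix a path partition $X_1,\dots,X_m$. I choose $P$ to be a shortest path in $G$ from a vertex of $H$ in the leftmost bag meeting $H$ to a vertex of $H$ in the rightmost such bag. Because edges only join consecutive bags, any path between these endpoints meets every intermediate bag. In particular this holds for $P$ and for any shortest reverse path $P'$ (which exists since $H$ is strongly connected). Consequently any SCC $C'$ avoiding $P$, or avoiding a reverse path spanning the same bags as $P[C]$, has strictly smaller induced width on each of those bags. Outside the range of those bags, $C'$ cannot be strongly connected through them. I will formalize this by recentering at each recursive step: whenever I pass to a sub-SCC $C$, I replace $P$ by the subpath $P[C]$ restricted to the extreme bags spanned by $C\cap P$.

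Then the three stages are run as in the overview.
\begin{enumerate}
\item Stage one takes radius $R\sim U[\tfrac12\tau,\tfrac23\tau]$ with $\tau=f_P(C)$, carves $\delta^+(B^+(u,R))\cup\delta^-(B^-(v,R))$, and recurses on SCCs inside the balls until $f_P\le\Delta$.
\item Stage two does the same with reversed balls and $R\sim U[\tfrac12\tau',\tfrac{13}{24}\tau']$ until $b_P\le 8\Delta$.
\item Stage three carves $B^+(u,R_1)$ and $B^-(u,R_2)$ with $R_1\in[\Delta,2\Delta]$ and $R_2\in[9\Delta,10\Delta]$.
\end{enumerate}
Lemma~\ref{lem:twoball-fin} classifies the resulting SCCs, and the overview computations give the progress claims. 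SCCs inside $B^+(u,R_1)\cap B^-(u,R_2)$ have weak diameter at most $12\Delta$, routing through $u$.

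The main obstacle is the probability bound $O(1)\cdot w(e)/\Delta$ for a single call, despite the unbounded number of nested recursive steps. My plan is to argue per edge $e=(a,b)$. A uniform radius on an interval of length $\Theta(\tau)$ cuts $e$ with probability at most $O(w(e)/\tau)$ per ball, since the ball boundary must separate $d(u,a)$ from $d(u,b)$, which differ by at most $w(e)$. Along the chain of calls containing $e$, $\tau$ shrinks geometrically, by a factor $\le 2/3$ at every step: the forward distance in stage one, and the backward distance in stage two. Since the calls containing $e$ form a single chain, there is at most one call per scale. The sum therefore behaves like a geometric series dominated by its smallest term: stage one terminates at $\tau>\Delta$ (otherwise it halts), and stage two at $\tau'>8\Delta$. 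This gives $O(w(e)/\Delta)$ per stage, plus $O(w(e)/\Delta)$ for stage three.

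The delicate point is that the ball centers and $\tau$ are themselves random, depending on earlier radii. I would handle this by conditioning on the history. At each step, the fresh radius is independent of the history, so the conditional cut probability is at most $O(w(e)/\tau_i)$. Meanwhile $\tau_i\ge(3/2)^{j}\Delta$ deterministically for the step $j$ levels before termination. This makes the sum deterministic-bounded.
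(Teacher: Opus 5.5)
Your outer recursion on $\ipw{\cdot}$ matches the paper, as do the three stages with the paper's radius windows and thresholds and the use of Lemma~\ref{lem:twoball-fin}. Your cut-probability argument is also correct. You charge $O(w(e)/\tau_j)$, conditionally on the history, to each call on the unique chain of calls containing $e$. You then use that $\tau_j$ shrinks by a factor $2/3$ per call and exceeds $\Delta$ (resp.\ $8\Delta$) at the last Stage~1 (resp.\ Stage~2) call. This is the unrolled form of the potential-function inductions in Claims~\ref{claim:case2} and~\ref{claim:case1}.

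The gap is in the width-reduction step, which is what caps the recursion depth at $\ppw(G)$. \textbf{First, $P$ and $P'$ must lie inside $H$.} A shortest path in $G$ between two vertices of $H$ crosses every intermediate bag $X_t$, but possibly only through $X_t\setminus V(H)$. For an SCC $H$ produced deeper in the recursion, the shortest $G$-route between its vertices may run through vertices outside $H$ (e.g.\ along the previous level's path, which $H$ avoids). A component avoiding $P$ or $P'$ can then keep all of $X_t\cap V(H)$, so ``strictly smaller induced width on each of those bags'' is false as stated. The paper's Observation~\ref{obs:critical-path} rests on the same unstated assumption. The repair is to run each invocation of the one-level procedure entirely in the metric $d_H$ of its input $H$: take $P,P'$ shortest in $H$ and all balls with respect to $d_H$. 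The covering and shrinking computations are unchanged. Weak diameters only improve because $d_G\le d_H$. Finally, $d_H(z,b)\le d_H(z,a)+w(e)$ for $e=(a,b)\in E(H)$ preserves the cut probabilities.

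\textbf{Second, bags outside the current range need an invariant.} ``Outside the range of those bags, $C'$ cannot be strongly connected through them'' is not a valid reason. The width only has to drop below $\ipw{H}$ for the top-level $H$, i.e.\ on the critical bags $X$ with $|X\cap V(H)|=\ipw{H}$. What must be maintained is: \emph{every critical bag that the current SCC $C$ still contains entirely lies between the bags of the two endpoints of the current subpath, and both endpoints lie in $C$.} Critical bags outside that range are harmless not for connectivity reasons. They are harmless because an earlier subpath (inside $H$) met them at a vertex of $H$ that $C$ has already lost.

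Your recentering preserves this invariant only under two conditions:
\begin{itemize}
\item You always use $C$'s intersection with the \emph{current} subpath; using all of $C\cap P$ breaks the $\tfrac23$-shrinking bounds.
\item You take the endpoints in $C$ at the extreme bags. A shortest path zigzags in bag index, so ``first to last along $P$'' and ``leftmost to rightmost bag'' generally differ.
\end{itemize}
The paper avoids this bookkeeping with the critical sequence of Section~\ref{sec:dpp}: for each critical bag, keep only the first vertex of $P$ in it. With $P\subseteq H$ these marked vertices are monotone in bag index. Hence the surviving subsequence $\bar P[C]$ contains every marked vertex of $C$, and its endpoints sit in the extreme relevant bags. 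Each stage then certifies the drop in width, as in Lemma~\ref{lem:critical-path} and the claim in the proof of Lemma~\ref{lem:gen-correct}:
\begin{itemize}
\item Stage~1, because the marked vertices are covered by $B_1\cup B_2$;
\item Stage~2, because the reverse path crosses every critical bag a child could retain in full;
\item Stage~3, because the marked vertices lie in $B_1\cap B_2$.
\end{itemize}
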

Before proceeding to the main proof, we first establish the following auxiliary lemma, which will be useful throughout our analysis.

\twoballfin* 
\begin{proof}
    We claim that every SCC of $C\setminus (\delta^*(B_1)\cup \delta^{*'}(B_2))$ is either fully contained in $B_1$ or disjoint from $B_1$.
    Without loss of generality, let $*=*'=``+"$.
    Suppose by contradiction that there exists an SCC $C'$ of $C\setminus (\delta^+(B_1)\cup \delta^+(B_2))$ that contains both a vertex $x\in B_1$ and a vertex $y\notin B_1$. Let $P$ be an $x$-$y$ path fully contained in $C$.
    Since $x\in B_1$ and $y\notin B_1$, we have that $d_G(u,x) \leq R$ and $d_G(u,y) > R$.
    Therefore, there must exist an edge $(x',y')\in P$ such that $d_G(u,x') \leq R$ and $d_G(u,y')>R$. Then, by the definition of $\delta^+(B_1)$, $(x',y')\in \delta^+(B_1)$.
    This contradicts the assumption that $P$ is fully contained in $C'\subseteq C\setminus (\delta^+(B_1)\cup \delta^+(B_2))$.
    Therefore, every SCC is either fully contained in $B_1$ or disjoint from $B_1$. 
    This proves the claim.
    By symmetric arguments, the same property holds for $B_2$: every SCC of $C\setminus (\delta^*(B_1)\cup \delta^{*'}(B_2))$ is either fully contained in $B_2$ or disjoint from $B_2$.

    Combining these two claims, we conclude that every SCC is either fully contained in $B_1 \cap B_2$, or contained in $B_1 \setminus B_2$, or contained in $B_2 \setminus B_1$, or disjoint from $B_1 \cup B_2$.
    This immediately establishes both statements of the lemma.
\end{proof}

Next, we present a randomized recursive algorithm $\ld$, which receives as input a directed graph $G=(V,E,w)$, a path partition $\mathcal P$ of $G$, a strongly connected subgraph $H$ of $G$, and a parameter $\Delta>0$. In addition, $\ld$ computes a (randomized) cutset of $H$ whose distribution induces a $(O(\ipw{H}), \Delta)$-LDD of $H$. See~\Cref{alg:dir-ldd} for the pseudocode.
The algorithm immediately terminates if $\ipw{H}=0$, which implies that $H$ is an empty subgraph. 
For the other case where $\ipw{H}>0$,~\Cref{alg:dir-ldd} calls \rp algorithm\footnote{The algorithm receives $\bar P$ as an additional input parameter, where $\bar P$ is a sequence of vertices in $G$ satisfying certain conditions. We provide the details of $\bar P$ in Section~\ref{sec:dpp}.}, whose function is summarized in the following lemma. We defer the detailed description of \rp and the proof of Lemma~\ref{lem:reduce-pathwidth} to Sections~\ref{sec:dpp} and~\ref{sec:LDD-analysis}, respectively.

\begin{restatable}{lemma}{reducepathwidth} \label{lem:reduce-pathwidth} 
    There is an algorithm $\rpp(G,\mathcal P,H, \bar P, \Delta)$ which computes a (random) cutset $S$ of $E(H)$ such that:
    \begin{itemize}
    \item[1.] For every strongly connected component $C$ of $H\setminus S$,
    either $\ipw{C} < \ipw{H}$ or the weak diameter of $C$ is at most $12\Delta$,
    and
    \item[2.] for every edge $e$ in $E(H)$, $\Pr[e\in S]\leq O(1)\cdot \frac{w(e)}{\Delta}$.
    \end{itemize}
\end{restatable}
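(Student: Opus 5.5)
We will build $\rpp$ as the three-stage recursive procedure from the overview, run on a path $\bar P$ in $H$. We choose $\bar P$ as a shortest path in $G$ between a vertex of $H$ in the leftmost nonempty bag of $\mathcal P[H]$ and a vertex of $H$ in the rightmost nonempty bag. The path lives in $G$, not necessarily in $H$, but we only need it to meet every bag between its endpoints. At every recursive call on a strongly connected $C\subseteq H$ we maintain an invariant: the endpoints $u,v$ of the current surviving subpath are extreme, in the path-partition order, among the bags meeting $C\cap\bar P$. To keep this invariant after a split, we re-choose $u,v$ as the leftmost and rightmost vertices of $\bar P\cap C'$ and replace $\bar P$ by its $u$-$v$ subpath. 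This keeps it a shortest path, so the monotonicity $d_G(x,y)\le d_G(u,y)$ for $x,y$ on it still holds.

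\textbf{Stage 1: forward distance.} While $\tau=f(C)>\Delta$, sample $R\in[\tau/2,2\tau/3]$ uniformly, cut $\delta^+(B^+_G(u,R))\cup\delta^-(B^-_G(v,R))$, and recurse on the resulting SCCs meeting $B_1\cup B_2$. By Lemma~\ref{lem:twoball-fin} and the computation in the overview, each recursive SCC has $f\le 2\tau/3$. SCCs disjoint from $B_1\cup B_2$ avoid $\bar P[C]$. Since an edge of the path partition only joins consecutive bags, $\bar P[C]$ meets every bag between its endpoints, and so such an SCC drops by one vertex in each of those bags. The real issue is the bags outside this range, and the invariant is what handles them: $C\cap\bar P$ lies between the extremes, but $C$ itself may extend beyond. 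So we define width-reduction only through bags meeting $C$, and require the chosen endpoints of $\bar P$ to be extreme for $H$ itself. Under this choice every nonempty bag of $\mathcal P[H]$ meets $\bar P$, which gives $\ipw{C'}<\ipw{H}$ whenever $C'$ avoids all of $\bar P$ (more precisely, avoids $\bar P\cap H$, which is what the balls cover).

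\textbf{Stage 2: backward distance.} With $f\le\Delta$ and $\tau'=b(C)>8\Delta$, sample $R\in[\tau'/2,13\tau'/24]$ and cut $\delta^-(B^-(u,R))\cup\delta^+(B^+(v,R))$. The overview inequality gives $b\le 13\tau'/24+\Delta\le 2\tau'/3$ for the retained SCCs. For the discarded SCCs we use a shortest $v$-$u$ path $P'$, which lies in $B_1\cup B_2$ and meets every bag between the extreme bags. The same counting as in Stage 1 then gives a width drop. This is the step I expect to be the main obstacle: $P'$ may leave $H$, so I would phrase the counting relative to $\bar P$'s span, which covers all nonempty bags of $H$ by the choice of endpoints, and check that $f$ stays $\le\Delta$ because surviving subpaths only shrink.

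\textbf{Stage 3: weak diameter.} With $f\le\Delta$ and $b\le 8\Delta$, cut $\delta^+(B^+(u,R_1))\cup\delta^-(B^-(u,R_2))$ with $R_1\in[\Delta,2\Delta]$ and $R_2\in[9\Delta,10\Delta]$. Then $\bar P[C]\subseteq B_1\cap B_2$. An SCC inside $B_1\cap B_2$ has weak diameter at most $R_2+R_1\le 12\Delta$, via a detour through $u$. Other SCCs avoid $\bar P[C]$ and lose width as before.

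\textbf{Cut probability.} Uniform radii on an interval of length $\Theta(\tau)$ cut an edge $e$ with probability $O(w(e)/\tau)$ per ball, since the edge is cut only if the radius falls in an interval of length $w(e)$. Follow the unique chain of recursive calls whose $C$ contains $e$. In Stage 1 the values of $\tau$ decrease geometrically by factor $2/3$, and all exceed $\Delta$. In Stage 2 the same holds for $\tau'$ with all values exceeding $8\Delta$, and Stage 3 contributes $O(w(e)/\Delta)$. Summing these geometric series gives $\Pr[e\in S]=O(1)\cdot w(e)/\Delta$. For this union bound to be along one chain, we need every edge to lie in at most one recursive subgraph per level, which holds because SCCs are disjoint.
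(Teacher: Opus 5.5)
\paragraph{Overall.} Your three stages, radii and thresholds are the paper's. Your distance-reduction inequalities remain valid for your choice of endpoints, because they only use that the new endpoints lie on the current subpath. Your geometric-series bound along the unique chain of calls containing $e$ is a legitimate replacement for the paper's potential-function induction (Claims~\ref{claim:case2} and~\ref{claim:case1}). The gaps are both in the width-reduction bookkeeping.

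\paragraph{Your endpoint invariant does not support your width argument.} You re-choose the endpoints as the bag-leftmost and bag-rightmost vertices of $\bar P\cap C'$ and keep only the subpath between them. A shortest path can move back and forth across bags, so other vertices of $\bar P\cap C'$ can fall outside that subpath. The bag-leftmost vertex may even come \emph{after} the bag-rightmost one, so ``its $u$-$v$ subpath'' need not exist. At deeper calls the balls cover only the current subpath, so a discarded SCC need not avoid $\bar P\cap H$. That breaks the premise of your width argument (``$C'$ avoids all of $\bar P$''). If you instead keep the path-order first and last vertices, the piece contains all of $\bar P\cap C$, but its endpoints need not bracket the critical bags, which is exactly what Stage~2's path $P'$ needs.

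The ingredient you are missing is the paper's critical sequence. It keeps only the earliest vertex of $P$ in each critical bag of $H$, which makes the sequence monotone in bag index along $P$ (given that $P$ runs inside $H$; see below). The surviving subsequence $\bar P[C]$ then contains every vertex of $\bar P\cap C$ (Lemmas~\ref{lem:critical-path} and~\ref{lem:critical-empty}). It also has endpoints whose bags enclose every critical bag of the current subgraph (the Claim inside Lemma~\ref{lem:gen-correct}).

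Your bag-extreme variant can also be rescued, but it needs an induction down the chain of calls that you do not give. Suppose a critical bag $X$ of $H$ had $X\cap H\subseteq C'$. Then each successive subpath meets $X$ inside $H$, so $X$ stays between the extremes at every level. Finally the current subpath (or $P'$) hits $X\cap H\subseteq C'$, contradicting disjointness from the balls.

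\paragraph{Meeting the bag $X$ is not enough.} All of these arguments need the path to meet $X\cap H$, not merely the bag $X$. Your sentence ``we only need it to meet every bag between its endpoints'' is where this fails. A shortest path in $G$ (both $\bar P$ and $P'$) can cross a critical bag of $H$ only at vertices outside $H$. A discarded SCC can then contain all of $X\cap H$, and the width does not drop. You noticed this for $P'$, but measuring against ``$\bar P$'s span'' does not help, since $\bar P$ has the same defect.

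The simple repair is to run the procedure with $H$ as the ambient graph, i.e.\ to take shortest paths and balls with respect to $d_H$. This costs nothing:
\begin{itemize}
\item a weak-diameter bound in $d_H$ gives the same bound in $d_G$, since $d_G\le d_H$;
\item the cut-probability estimates use only the triangle inequality along edges of $H$;
\item the recursion in \ld{} is unaffected.
\end{itemize}
The paper's Observation~\ref{obs:critical-path} and its definition of $v_j$ tacitly rely on the same property, so its write-up needs this repair as well.
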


Then, for each strongly connected component $C$ of $H\setminus S$ such that the weak diameter of $C$ is greater than $\Delta$, the algorithm executes a recursive call $\ldd(G,\mathcal P,C,\Delta)$ to compute a random cutset $S_C$ of $C$.
Finally, $\ld$ returns the union of $S$ and all cutsets $S_C$ returned by the recursive calls. This completes the description of $\ld$.

\begin{algorithm}[h] 
    \caption{\textsf{Directed-LDD}$(G, \mathcal P, H, \Delta)$}
	\begin{algorithmic}[1] 
        \If{$\ipw{H}=0$} 
            \State
            \Return $\emptyset$
        \EndIf
        \State $S\gets$ \rpp$(G,\mathcal P,H, \bar P, \Delta)$
        \State $\mathcal C\gets$ Collection of SCCs of $H\setminus S$ whose weak diameter is greater than $12\Delta$
        \ForAll{$C\in \mathcal C$} 
            \State $S_C\gets$ $\ld(G,\mathcal P, C,\Delta)$
            \State $S\gets S\cup S_C$
        \EndFor
        \State \Return $S$
        \end{algorithmic}
	\label{alg:dir-ldd}
\end{algorithm}

\paragraph{Analysis of \textnormal{$\ld$}.}
In this part, we prove~\Cref{lem:ldd-pathpartition} assuming Lemma~\ref{lem:reduce-pathwidth}. 
Let $S=\ld(G,\mathcal P, H, \Delta)$. First, we bound the weak diameter of every SCC in $H\setminus S$.
\begin{lemma} \label{lem:ldd-diameter}
    The weak diameter of each SCC of $H\setminus S$ is at most $12\Delta$.
\end{lemma}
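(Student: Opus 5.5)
We argue by induction on $\ipw{H}$, the width of the induced path partition. The base case $\ipw{H}=0$ is trivial: then $H$ is empty, the algorithm returns $S=\emptyset$, and $H\setminus S$ has no SCCs, so the claim holds vacuously.

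For the inductive step, assume $\ipw{H}>0$. Let $S_0$ be the cutset returned by $\rpp(G,\mathcal P,H,\bar P,\Delta)$. The final output is $S=S_0\cup\bigcup_{C\in\mathcal C}S_C$, where $\mathcal C$ is the set of SCCs of $H\setminus S_0$ with weak diameter greater than $12\Delta$. Take any SCC $D$ of $H\setminus S$. Since $S\supseteq S_0$, the subgraph $D$ is strongly connected in $H\setminus S_0$. Hence it lies inside a single SCC $C$ of $H\setminus S_0$. We split into two cases.

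\emph{Case 1: $C\notin\mathcal C$.} Then the weak diameter of $C$ is at most $12\Delta$. Weak diameter is measured with respect to $d_G$, so it is monotone under taking subsets. Therefore the weak diameter of $D\subseteq C$ is also at most $12\Delta$.

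\emph{Case 2: $C\in\mathcal C$.} By item~1 of Lemma~\ref{lem:reduce-pathwidth}, $C$ has weak diameter greater than $12\Delta$, so it must satisfy $\ipw{C}<\ipw{H}$. The recursive call $\ld(G,\mathcal P,C,\Delta)$ returns $S_C$. By the induction hypothesis, every SCC of $C\setminus S_C$ has weak diameter at most $12\Delta$.

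We then need to show that $D$ is contained in an SCC of $C\setminus S_C$. Since $D\subseteq C$ and $D$ is strongly connected using only edges outside $S\supseteq S_C$, the one thing to check is that the strongly connecting paths of $D$ stay inside $C$. They do: any cycle through vertices of $D$ in $H\setminus S\subseteq H\setminus S_0$ lies entirely within the SCC $C$ of $H\setminus S_0$. So these paths lie in $C\setminus S_C$, and $D$ is contained in one SCC of $C\setminus S_C$. Its weak diameter is therefore at most $12\Delta$.

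The main subtlety is making the induction well-founded. Recursive calls are made on SCCs $C$ of $H\setminus S_0$, which are strongly connected subgraphs of $G$, so the hypotheses of the lemma apply to them. Their width is strictly smaller than $\ipw{H}$, so the induction on $\ipw{H}$ is valid. The recursion also terminates, because widths are nonnegative integers. Beyond this, the only ingredients are item~1 of Lemma~\ref{lem:reduce-pathwidth} and the monotonicity of weak diameter.
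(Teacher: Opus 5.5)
Your proof is correct and follows essentially the same route as the paper. Both argue by induction on $\ipw{H}$ and split the SCCs of $H\setminus S_0$ by whether they lie in $\mathcal C$, using item~1 of Lemma~\ref{lem:reduce-pathwidth} and the monotonicity of weak diameter. Your explicit check that each SCC of $H\setminus S$ lies inside a single SCC of $C\setminus S_C$ is a step the paper leaves implicit.
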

\begin{proof}
    We use induction on $\ipw{H}$. The base case where $\ipw{H}=0$ is immediate.
    For the inductive step, suppose $\ipw{H}=i$.
    Due to Lemma~\ref{lem:reduce-pathwidth}, after line 3 of $\ld$, every SCC of $H\setminus S$ either belongs to $\mathcal C$ or has a weak diameter of at most $12\Delta$. 
    First, let $C$ be an SCC in the former case.
    Due to Lemma~\ref{lem:reduce-pathwidth}, $\ipw{C}\leq i-1$.
    Hence, by the induction hypothesis, after line 6, every SCC of $C\setminus S_C$ has a weak diameter of at most $12\Delta$. 
    Since $S$ becomes a superset of $S_C$ after line 7, every SCC of $C\setminus S$ consequently has a weak diameter of at most $12\Delta$.
    
    Next, let $C'$ be an SCC in the latter case.
    Recall that the weak diameter of a subgraph is always upper-bounded by that of its supergraph. 
    Hence, we conclude every SCC of $C'\setminus S$ has a weak diameter of at most that of $C'$, which is bounded by $12\Delta$. 
    By combining the two cases, we conclude that every SCC of $H\setminus S$ has a weak diameter of at most $12\Delta$. This completes the proof.
\end{proof}

Next, we analyze the edge-cut probability $\Pr[e\in S]$ for an edge $e\in E(H)$.
\begin{lemma} \label{lem:ldd-prob}
    For every edge $e$ in $H$, $\Pr[e\in S] \leq O(\ipw{H}))\cdot\frac{w(e)}{\Delta}$.
\end{lemma}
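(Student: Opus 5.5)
We prove the bound by induction on $\ipw{H}$, showing that for some absolute constant $c$ (the one hidden in item 2 of Lemma~\ref{lem:reduce-pathwidth}) every edge $e\in E(H)$ satisfies $\Pr[e\in S]\leq c\cdot \ipw{H}\cdot \frac{w(e)}{\Delta}$. The base case $\ipw{H}=0$ is trivial: then $H$ is empty, $\ld$ returns $\emptyset$, and the bound $0$ holds.

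For the inductive step, let $\ipw{H}=i\geq 1$ and fix an edge $e\in E(H)$. Write $S_0$ for the cutset returned by $\rpp(G,\mathcal P,H,\bar P,\Delta)$ in line 3. The final cutset is $S=S_0\cup\bigcup_{C\in\mathcal C} S_C$. By a union bound, $\Pr[e\in S]\leq \Pr[e\in S_0]+\Pr[e\in \bigcup_{C\in\mathcal C} S_C]$.

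The first term is at most $c\cdot \frac{w(e)}{\Delta}$ by item 2 of Lemma~\ref{lem:reduce-pathwidth}.

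For the second term, note that the SCCs in $\mathcal C$ are vertex-disjoint. Each recursive cutset $S_C$ is a subset of $E(C)$, so $e$ can lie in at most one $S_C$, namely the one with $e\in E(C)$, and only if such a $C$ exists. We therefore condition on the outcome of $S_0$, which determines $\mathcal C$:
\begin{itemize}
\item If $e$ does not lie inside any $C\in\mathcal C$, its conditional probability of entering the recursive cutsets is $0$.
\item Otherwise $e\in E(C)$ for a unique $C\in\mathcal C$. Since $C$ has weak diameter greater than $12\Delta$, item 1 of Lemma~\ref{lem:reduce-pathwidth} gives $\ipw{C}\leq i-1$.
\end{itemize}
The recursive call $\ld(G,\mathcal P,C,\Delta)$ uses fresh randomness, independent of $S_0$ once $C$ is fixed. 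The induction hypothesis, applied to $C$, therefore gives the conditional bound $c(i-1)\frac{w(e)}{\Delta}$. Averaging over $S_0$, the second term is at most $c(i-1)\frac{w(e)}{\Delta}$. Summing the two terms gives $c\, i\,\frac{w(e)}{\Delta}$, which closes the induction.

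The only delicate point is the conditioning step. The induction hypothesis must be stated for every strongly connected subgraph $H'$ of $G$ with $\ipw{H'}\leq i-1$, and must hold as a deterministic-input statement, i.e. for every fixed input graph. Only then can it be applied to the random subgraph $C$ after conditioning on the first-level randomness. We also need that $S_C\subseteq E(C)$, which holds because every cutset produced by $\rpp$ and by $\ld$ consists of edges of its input graph. With these two observations the recursion contributes additively one level at a time, and the depth is at most $\ipw{H}$ because the width strictly decreases at each level.
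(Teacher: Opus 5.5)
Your proof is correct and follows the same argument as the paper: by item~1 of Lemma~\ref{lem:reduce-pathwidth} the width strictly decreases at each recursion level of \textsf{Directed-LDD}, each edge lies in at most one recursive call per level, and a union bound over the at most $\ipw{H}$ calls to \textsf{Scissors-Carving} that see $e$ gives the claim. Your strong induction, with explicit conditioning on the first-level cutset and fresh randomness in the recursive call, is a more careful write-up of the paper's ``depth times union bound'' argument; it makes explicit the conditioning step that the paper leaves implicit.
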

\begin{proof}
    The recursion depth of $\ld(G, \mathcal P ,H, \Delta)$ is at most $\ipw{H}$ since $\ipw{H}$ decreases at each recursion level by Lemma~\ref{lem:reduce-pathwidth}. 
    In addition, since each edge $e$ is contained in at most one SCC in $\mathcal C$, $e$ belongs to at most one recursive call executed in line 6.
    Therefore, $e$ is considered in at most $\ipw{H}$ executions of $\rpp$ in line 3. Due to Lemma~\ref{lem:reduce-pathwidth} and the union bound of probability, $\Pr[e\in S]\leq O(\ipw{H}\cdot \frac{w(e)}{\Delta}$, as desired. 
\end{proof}
Finally, \Cref{lem:ldd-pathpartition} follows from applying~\Cref{lem:ldd-diameter,lem:ldd-prob} by taking $H\leftarrow G$ and $\Delta \leftarrow \frac{\Delta}{12}$.

\subsection{Reducing path partition-width} \label{sec:dpp}
In this section, we explain our key subroutine \rpp.
To this end, we define some additional notation that will be used in this section.
Recall that $\mathcal P=(\mathcal P, \mathcal X)$ consists of $m$ bags $X_1,X_2,\ldots X_m$.
Let $H$ be a strongly connected subgraph of $G$.
We say an index $i\in [m]$ is a \emph{critical} index of $H$ if $|V(H)\cap X_i|=\ipw{H}$.
Equivalently, we refer to $X_i$ as a \emph{critical bag} of $H$.
Let $\mathcal C(H)=\{j_1,j_2,\ldots, j_\ell\}$ be the set of critical indices of $H$ such that $1\leq j_1 <j_2<\ldots< j_\ell\leq m$.

We say a shortest $u$-$v$ path $P$ in $G$ is a \emph{critical path} with respect to $H$ if 
$u\in X_{j_1}\cap H$ and $v\in X_{j_\ell}\cap H$.
From the definition of a path partition, $P$ intersects every bag $X_t$ with $j_1\leq t\leq j_\ell$. Hence, $P$ intersects every critical bag of $H$, which leads to the following observation.
\begin{observation} \label{obs:critical-path}
    For any critical path $P$ with respect to $H$, $\ipw{H\setminus P} < \ipw{H}$.
\end{observation}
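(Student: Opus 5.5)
The plan is to show that for every critical index $j$ of $H$, the bag $X_j$ contains a vertex of $P$ that lies in $H$. Once this holds, removing $P$ strictly shrinks every bag that previously attained the maximum. Concretely, for each $i\in[m]$, either $i$ is non-critical, so $|X_i\cap V(H\setminus P)|\le |X_i\cap V(H)|\le \ipw{H}-1$. Or $i$ is critical, and then $|X_i\cap V(H\setminus P)|\le |X_i\cap V(H)|-1=\ipw{H}-1$. Taking the maximum over $i$ gives $\ipw{H\setminus P}\le \ipw{H}-1<\ipw{H}$.

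The key step is this hitting property. I would argue it by following $P=\langle u=p_0,p_1,\ldots,p_r=v\rangle$ through the bags. Let $\iota(x)$ denote the index of the unique bag containing $x$; this is well defined because $\mathcal X$ is a partition of $V$. The path partition property says every edge $(p_s,p_{s+1})$ of $G$ lies in $X_i\cup X_{i+1}$ for some $i$, so $|\iota(p_s)-\iota(p_{s+1})|\le 1$. Since $\iota(p_0)=j_1$ and $\iota(p_r)=j_\ell$, the integer sequence $\iota(p_0),\ldots,\iota(p_r)$ moves in steps of at most one. By this discrete intermediate value argument it attains every value $t$ with $j_1\le t\le j_\ell$, and in particular every critical index.

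The main obstacle is that the vertex of $P$ found in a critical bag $X_j$ must actually belong to $H$; otherwise removing $P$ does not reduce $|X_j\cap V(H)|$. The observation only constrains the endpoints of $P$ to lie in $H$, and a shortest path in $G$ may leave $H$. I would handle this as follows. If $j\in\{j_1,j_\ell\}$, the endpoint $u$ or $v$ is itself in $X_j\cap H$, so we are done. For interior critical indices I would exploit the strong connectivity of $H$: since $H$ is strongly connected, there is a $u$-$v$ path inside $H$, and the same intermediate value argument shows it meets every $X_t\cap V(H)$ with $j_1\le t\le j_\ell$. The difficulty is transferring this from an arbitrary path in $H$ to $P$ itself.

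Because of this gap, I expect the statement as intended to be read in the setting used later in the paper. There, either the vertices of $P$ in the relevant bags are shown to be in $H$, or $\ipw{\cdot}$ is interpreted via $X_i\cap (V(H)\setminus V(P))$, with $P$ counted against every bag it meets. Under the latter reading, the intermediate value step alone suffices. I would check which reading the subsequent use in Lemma~\ref{lem:reduce-pathwidth} requires and, if needed, choose $P$ as a shortest path in $H$ so that $P\subseteq H$ and the argument goes through verbatim.
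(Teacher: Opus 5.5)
Your reduction (if every critical bag contains a vertex of $P\cap V(H)$, the width drops) and your discrete intermediate-value step are exactly the paper's argument. The paper's whole justification is that $P$ meets every bag $X_t$ with $j_1\le t\le j_\ell$, and hence every critical bag. The obstacle you isolate is precisely the step the paper skips, and you were right not to wave it through. Meeting $X_j$ does not mean meeting $X_j\cap V(H)$. This cannot be repaired, because the observation is false as stated.

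Here is a counterexample. Take bags $X_1=\{u\}$, $X_2=\{a,b\}$, $X_3=\{v\}$, with edges $(u,a),(a,u),(a,v),(v,a)$ of weight $10$ and $(u,b),(b,v)$ of weight $1$. This is a valid path partition, of optimal width $2$ since the underlying graph is a $4$-cycle, and $G$ is strongly connected. Let $H=G[\{u,a,v\}]$. It is strongly connected; indeed it is the SCC of $G\setminus\{(u,b)\}$ containing $u$. Then $\ipw{H}=1$ and all three indices are critical. The endpoints are forced, since $X_1\cap V(H)=\{u\}$ and $X_3\cap V(H)=\{v\}$. The unique shortest $u$-$v$ path in $G$ is $P=\langle u,b,v\rangle$. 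Hence $V(H\setminus P)=\{a\}\subseteq X_2$ and $\ipw{H\setminus P}=1=\ipw{H}$, and no choice of critical path helps. In this instance the critical sequence $\bar P[H]$ is not even defined, since no vertex of $X_2\cap V(H)$ lies on $P$. So Observation~\ref{obs:critical-sequence} and Lemma~\ref{lem:critical-path} inherit the problem. The same leap reappears for $P'$ in the Stage~2 claim inside the proof of Lemma~\ref{lem:gen-correct} (``$|X\cap H\cap P'|$ is nonempty''). Your hope that the later setting shows the hits lie in $H$ therefore does not pan out.

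On your proposed readings: reinterpreting the width does not help. The paper's width is literally $\max_i |X_i\cap V(\cdot)|$, and that is what Lemmas~\ref{lem:critical-path} and~\ref{lem:critical-empty} use.

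Requiring $P\subseteq H$ does give a correct statement. If $P$ is any path inside $H$ from $X_{j_1}\cap V(H)$ to $X_{j_\ell}\cap V(H)$ (one exists by strong connectivity), your intermediate-value argument shows that $P$ meets every $X_t\cap V(H)$ with $j_1\le t\le j_\ell$, and $\ipw{H\setminus P}<\ipw{H}$ follows. But this is not a verbatim fix for the paper. The downstream analysis needs $P$ to be a shortest path in $G$ (Observation~\ref{obs:path}). That is used to get $\bar P\subset B_1\cup B_2$ in Stage~1 via $d_G(u,w)+d_G(w,v)=d_G(u,v)$, and to get $d_G(u',v')\le d_G(u,v)$ in Lemma~\ref{lem:gen-order}. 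The example shows the two requirements can be incompatible: the only $u$-$v$ path in $H$ has $G$-length $20$, while $d_G(u,v)=2$. Switching to shortest paths and balls measured in $d_H$ would mean redoing the analysis rather than plugging in your argument.
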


Due to technical reasons, throughout the algorithm, we maintain a certain subsequence of the critical path $P$, called a \emph{critical sequence}, which is defined as follows.
For each critical index $j$ of $H$, let $v_j\in X_j\cap H$ be the vertex that appears earliest in $P$ along the direction from $u$ to $v$. We may assume $v=v_{j_\ell}$ as otherwise, we can take the $u$-$v_{j_\ell}$ subpath of $P$. 
Then, $v_{j_1}, v_{j_2},\ldots, v_{j_\ell}$ are naturally ordered along the path $P$.
We call $\langle v_{j_1},v_{j_2},\ldots v_{j_\ell}\rangle$ a \emph{critical sequence} of $H$ with respect to $P$, and denote it by $\bar P[H]$. Note that this critical sequence $\bar P[H]$ is uniquely defined for a fixed $H$ and $P$. See Figure~\ref{fig:critical-sequence}(a) for an illustration.
Since $\bar P[H]$ intersects every critical bag of $H$, we observe the following.
\begin{observation} \label{obs:critical-sequence}
    For a critical sequence $\bar P[H]$ of $H$, $\ipw{H\setminus \bar P[H]} < \ipw{H}$.
\end{observation}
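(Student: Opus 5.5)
We need to show that removing the vertices of $\bar P[H]$ from $H$ strictly lowers the induced width, i.e.\ that every bag $X_i$ satisfies $|X_i\cap V(H\setminus \bar P[H])| \le \ipw{H}-1$. The plan is to split the bags into critical and non-critical indices of $H$ and handle each class separately.

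\emph{Non-critical bags.} If $i\notin\mathcal C(H)$, then by definition of a critical index $|V(H)\cap X_i| < \ipw{H}$. Since $V(H\setminus \bar P[H])\subseteq V(H)$, we get
\[|X_i\cap V(H\setminus \bar P[H])|\le |X_i\cap V(H)|\le \ipw{H}-1,\]
so these bags need no further work.

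\emph{Critical bags.} If $i=j_s\in\mathcal C(H)$, then by construction the critical sequence contains the vertex $v_{j_s}$, and $v_{j_s}\in X_{j_s}\cap V(H)$. So we must check that $v_{j_s}$ is well defined, that is, that $P$ actually meets $X_{j_s}\cap V(H)$ (and not merely $X_{j_s}$). Once $v_{j_s}$ exists, removing the vertex set of $\bar P[H]$ deletes it from $X_{j_s}\cap V(H)$. This gives
\[|X_{j_s}\cap V(H\setminus \bar P[H])|\le \ipw{H}-1.\]

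\emph{Main obstacle: well-definedness of $v_{j_s}$.} The earlier remark (stated just before Observation~\ref{obs:critical-path}) only shows that $P$ meets every bag $X_t$ with $j_1\le t\le j_\ell$. It does not show that $P$ meets $X_t\cap V(H)$, since $P$ is a shortest path in $G$ and may leave $H$.

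I would handle this in one of two ways.
\begin{itemize}
\item Read the definition of the critical sequence as presupposing that each $v_j$ exists, i.e.\ treat it as part of the hypothesis that $\bar P[H]$ is a critical sequence. Then the observation is immediate, and this is how the text phrases it ("since $\bar P[H]$ intersects every critical bag of $H$").
\item Alternatively, argue that $P$ may be taken inside $H$. This would go through the strong connectivity of $H$ together with the path-partition property: a $u$-$v$ path inside $H$ must cross every bag between $j_1$ and $j_\ell$, because edges only join consecutive bags. Applying this to any $u$-$v$ path in $H$ shows that such a path meets each $X_{j}\cap V(H)$. For the shortest path $P$ itself, I would rely on the convention that the algorithm chooses $P$ (or its vertices of interest) within $H$.
\end{itemize}

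\emph{Conclusion.} With both classes of bags handled, every bag of $\mathcal P[H\setminus\bar P[H]]$ has size at most $\ipw{H}-1$. Hence $\ipw{H\setminus \bar P[H]}<\ipw{H}$.
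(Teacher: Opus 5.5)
Your argument is the paper's: non-critical bags already meet $H$ in fewer than $\ipw{H}$ vertices, and each critical bag $X_j$ loses the vertex $v_j\in X_j\cap V(H)$ of $\bar P[H]$, which is all the paper's one-line justification (``$\bar P[H]$ intersects every critical bag of $H$'') amounts to. Your first resolution of the well-definedness worry is the right reading, because the definition of $v_j$ presupposes that the sequence exists, and the worry itself (the remark before Observation~\ref{obs:critical-path} only shows $P$ meets $X_t$, not $X_t\cap V(H)$) is about existence rather than about this statement; your second resolution should be dropped, since $P$ must be a shortest path in $G$ (Observation~\ref{obs:path} and the Stage~1 covering argument rely on this) and in general no such path can be chosen inside $H$.
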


For a critical sequence $\bar P=\langle u,\ldots, v\rangle$ of $H$ with respect to $P$, and a strongly connected subgraph $C$ of $H$, let $x$ and $y$ be the first and the last vertices of $\bar P\cap C$, respectively.
We define the \emph{surviving subsequence} of $\bar P$ with respect to $C$, denoted by $\bar P[C]$, as the consecutive subsequence of $\bar P[H]$ that starts at $x$ and ends at $y$.
Note that while $\bar P[C]$ may not be fully contained in $C$, the two endpoints $x,y$ are contained in $C$. See Figure~\ref{fig:critical-sequence}(b) for an illustration.

We define the \emph{forward disatnce} and the \emph{backward distnace} of $C$ (with respect to $P$) as $d_G(x,y)$ and $d_G(y,x)$, respectively, and denote them by $f_{\bar P}(C)$ and $b_{\bar P}(C)$, respectively. 
Note that we analyze these distances with respect to the shortest paths in the original graph $G$, rather than the subgraph $C$. This is sufficient for our purpose since our goal is to bound the \emph{weak diameter} of each SCC. We show that $\bar P[C]$ plays the role of a ``critical" sequence for $C$. That is, the removal of $\bar P[C]$ from $C$ reduces its path partition-width. Based on the following lemma, in the remainder of this section, we slightly abuse the notation and refer to $\bar P[C]$ as a \emph{critical sequence} of $C$. 
\begin{lemma}\label{lem:critical-path} 
    Let $C$ be a strongly connected subgraph of $H$ and let $\bar P$ be a critical sequence of $H$ with respect to a critical path $P$.
    Then, $\ipw{C\setminus \bar P[C]} < \ipw{H}$.
\end{lemma}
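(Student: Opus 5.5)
The plan is to compare bag by bag. By definition, $\ipw{C\setminus \bar P[C]}=\max_{i\in[m]}|X_i\cap V(C)\setminus \bar P[C]|$, so it suffices to show that $|X_i\cap V(C)\setminus \bar P[C]|\le \ipw{H}-1$ for every $i\in[m]$. I will split the indices $i$ according to whether $X_i$ is a critical bag of $H$ and, if it is, whether $C$ uses all of it.

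First, take a non-critical index $i$. Since $C$ is a subgraph of $H$, we get $|X_i\cap V(C)|\le |X_i\cap V(H)|<\ipw{H}$, and there is nothing more to show. The same bound $|X_i\cap V(C)|\le|X_i\cap V(H)|=\ipw{H}$ holds at a critical index $i=j$, and it is strict unless $X_j\cap V(C)=X_j\cap V(H)$. So the only bags that need work are the critical bags $X_j$ that are ``full'' for $C$, meaning $X_j\cap V(C)=X_j\cap V(H)$. For such a $j$, it is enough to show that $\bar P[C]$ contains some vertex of $X_j\cap V(C)$. Removing that vertex then lowers the count in $X_j$ to at most $\ipw{H}-1$.

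Now fix a full critical index $j$, and let $v_j\in X_j\cap V(H)$ be the entry of the critical sequence $\bar P=\bar P[H]$ associated with $j$. Fullness gives $v_j\in X_j\cap V(C)$, so $v_j\in \bar P\cap C$. By definition, $\bar P[C]$ is the consecutive subsequence of $\bar P$ from the first element $x$ of $\bar P\cap C$ to the last element $y$ of $\bar P\cap C$. Since $v_j$ is an element of $\bar P\cap C$, it lies between $x$ and $y$ in the order of $\bar P$, and therefore $v_j\in\bar P[C]$. Since the bags partition $V$, $v_j$ belongs to $X_j$ and to no other bag. Hence $|X_j\cap V(C)\setminus\bar P[C]|\le \ipw{H}-1$, which settles this case.

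Finally, I will treat the degenerate case where $C$ is disjoint from $\bar P$, so that $\bar P[C]$ is empty or undefined. The previous paragraph shows that in this case no critical index can be full, since a full index would force $v_j\in C$. All bags then satisfy $|X_i\cap V(C)|<\ipw{H}$, and the conclusion holds directly. I do not expect a real obstacle here. The one point that needs care is the observation that vertices of $\bar P$ lying in $C$ automatically fall inside the window from $x$ to $y$; this makes any contiguity argument about which bags $C$ meets unnecessary.
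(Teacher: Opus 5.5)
Your proof is correct and matches the paper's argument. Any critical bag of $H$ that $C$ retains in full contains a vertex of $\bar P\cap C$, and that vertex lies in $\bar P[C]$ by the first-to-last definition of the surviving subsequence. The paper phrases this as a contradiction via Observation~\ref{obs:critical-sequence}, whereas you argue directly with $v_j$ and also spell out the non-critical and empty cases, which the paper leaves implicit.
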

\begin{proof}
    We show that for every critical bag $X$ of $H$, $|X\cap (C\setminus \bar P[C])| < |X\cap H|$ holds.
    Assume to the contrary that there is a critical bag $X$ satisfying $|X\cap (C\setminus \bar P[C])| \geq |X\cap H|$. Since $H$ is a supergraph of $C\setminus \bar P[C]$, the equality must hold. By Observation~\ref{obs:critical-sequence}, there is a vertex $w$ in $\bar P\cap H\cap X$. Since $w\in X\cap H = X\cap (C\setminus \bar P[C])$, we have that $w\in C$ and $w\notin \bar P[C]$. 
    However, by the definition of $\bar P[C]$, $\bar P[C]$ must contain every vertex of $\bar P\cap C$, which includes $w$. This contradicts $w\notin \bar P[C]$.
    Therefore, $|X\cap (C\setminus \bar P[C])| < |X\cap H|$ for every critical bag $X$ of $H$.
    This completes the proof.
\end{proof}

The following observation is obvious from the definition of the critical path and critical sequence, but it is crucial for our analysis.
\begin{observation} \label{obs:path}
    Let $\bar P$ be a critical sequence of $H$ with respect to a critical path $P$. Then, for every strongly connected subgraph $C$ of $H$, $\bar P[C]$ is a subsequence of a shortest path in $G$. 
\end{observation}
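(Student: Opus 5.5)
The claim to prove is Observation~\ref{obs:path}: for every strongly connected subgraph $C$ of $H$, $\bar P[C]$ is a subsequence of a shortest path in $G$. The plan is to show directly that the vertices of $\bar P[C]$ appear, in the same order, along a shortest path of $G$, namely the critical path $P$ itself (or a subpath of it). The argument is a chain of containments:
\[
\bar P[C]\ \subseteq\ \bar P[H]\ \subseteq\ P .
\]

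First, recall how $\bar P=\bar P[H]=\langle v_{j_1},\dots,v_{j_\ell}\rangle$ was built. Each $v_{j}$ is a vertex of $P$, namely the earliest vertex of $X_j\cap H$ along $P$. The sequence is listed in the order in which these vertices occur on $P$ from $u$ to $v$. Hence $\bar P$ is a subsequence of $P$. One point needs checking: $\bar P$ should be ordered by position on $P$ and not by bag index. The paper says the $v_{j_i}$ are ``naturally ordered along the path $P$'', so I take that ordering as the definition. If distinct critical indices pick the same vertex, I treat the repeated entries as one occurrence; this does not affect the subsequence property.

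Second, $\bar P[C]$ is by definition the consecutive block of $\bar P$ running from the first vertex $x$ of $\bar P\cap C$ to the last vertex $y$. A consecutive block of a subsequence of $P$ is still a subsequence of $P$. More precisely, it is a subsequence of the $x$-$y$ subpath $P_{x,y}$ of $P$.

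Third, $P$ is a shortest $u$-$v$ path in $G$; the possible truncation of $P$ to the $u$-$v_{j_\ell}$ subpath does not change this. Any subpath of a shortest path is itself a shortest path, since otherwise we could shortcut the whole path. So $P_{x,y}$ is a shortest $x$-$y$ path in $G$, and $\bar P[C]$ is a subsequence of it, as claimed.

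There is no real obstacle here. The only care needed is to fix the ordering convention of $\bar P$, as above, and to handle the degenerate case $\bar P\cap C=\emptyset$. In that case $\bar P[C]$ is empty, or a single vertex, and the statement holds trivially.
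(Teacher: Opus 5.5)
Your argument is correct and is essentially the paper's proof. The paper also observes that $\bar P[C]$ is a consecutive block of $\bar P$, that $\bar P$ is a subsequence of the shortest path $P$, and hence that $\bar P[C]$ is a subsequence of $P$; your remark that it even lies on the shortest $x$-$y$ subpath is a harmless refinement.

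One small note: the caveat about repeated entries is unnecessary. The bags of a path partition are disjoint, so distinct critical indices $j$ automatically give distinct vertices $v_j\in X_j$.
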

\begin{proof}
    By the definition of a critical path, $P$ is a shortest path in $G$. Since $\bar P$ is a subsequence of the vertices in $P$ by the definition of the critical sequence, and because $\bar P[C]$ is a subsequence of $\bar P$, it follows that $\bar P[C]$ is also a subsequence of the shortest path $P$ in $G$. This completes the proof.
\end{proof}

\begin{figure}
    \centering
    \includegraphics[scale=0.75]{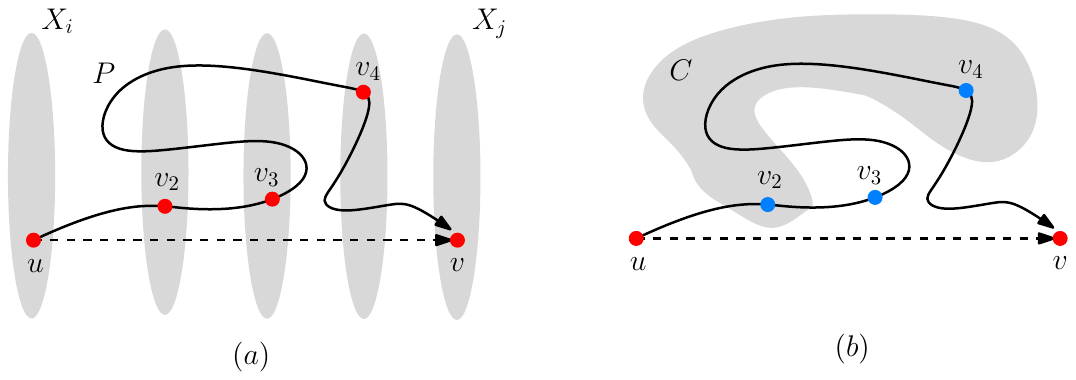}
    \caption{Illustration of the critical path and critical sequence of $H$. (a) The gray shaded region denotes the critical bags of $H$. A critical path $P$ connects a vertex $u$ in the first critical bag and a vertex $v$ in the last critical bag.  A critical sequence $\bar P$ of $H$ with respect to $P$ is $\langle u,v_2,v_3,v_4,v\rangle$. (b) For a subgraph $C$ of $H$, the surviving subsequence $\bar P[C] = \langle x=v_2,v_3,v_4=y\rangle$ is depicted as blue points.}
    \label{fig:critical-sequence}
\end{figure}

\paragraph{Algorithm.}
Then, we describe our algorithm $\rp(G, \mathcal P, H, \bar P, \Delta)$. See~\Cref{alg:rp} for the pseudocode.
It receives as input a directed graph $G$, a path partition $\mathcal P$ of $G$, a strongly connected subgraph $H$ of $G$, a critical sequence $\bar P$ of $H$ with respect to $P$, and a parameter $\Delta>0$.
Moreover, it outputs a random cutset $S$ of $E(H)$ such that for every SCC $C$ of $H\setminus S$, either $\ipw{C}<\ipw{H}$ or the weak diameter of $C$ is at most $12\Delta$.

The algorithm is a recursive procedure with respect to $H$ and $\bar P$.
Depending on the forward distance $f_{\bar P}(H)$ and the backward distance $b_{\bar P}(H)$ of $H$, the procedure proceeds in one of three stages.
The first stage is executed when $f_{\bar P}(H)=\tau >\Delta$.
At this stage, we sample a random radius $R$ uniformly at random from $[\frac{1}{2}\tau, \frac{2}{3}\tau]$. Then, we compute two balls, an out-ball $B_1=\outb{G}{u,R}$ and an in-ball $B_2=\inb{G}{v,R}$, compute a set $S_1$ of boundary edges of $B_1$ and $B_2$ that intersect $H$, i.e., $S_1=H\cap (\delta^+(B_1)\cup \delta^-(B_2))$.
Next, we recursively apply the algorithm to each SCC $C$ of $H\setminus S_1$ which is fully contained in $B_1\cup B_2$.
In addition, we pass the surviving subsequence $\bar P[C]$ as the critical sequence to the next recursive call.
This completes the description of the first stage.

The second stage is executed when $f_{\bar P}(H) \leq \Delta$ and $b_{\bar P}(H)=\tau' > 8\Delta$.
At this stage, we sample a random radius $R$ uniformly at random from $[\frac{1}{2}\tau', \frac{13}{24}\tau']$. Then, we compute two balls, an in-ball $B_1=\inb{G}{u,R}$ and an out-ball $B_2=\outb{G}{v,R}$, compute a set $S_1$ of boundary edges of $B_1$ and $B_2$ that intersect $H$, i.e., $S_1=H\cap (\delta^-(B_1)\cup \delta^+(B_2))$.
Next, we recursively apply the algorithm to each SCC of $H\setminus S_1$ which is fully contained in $B_1\cup B_2$.
In addition, we pass the surviving subsequence $\bar P[C]$ to the next recursive call.
This completes the description of the second stage.

The third stage is executed when $f_{\bar P}(H) \leq \Delta$ and $b_{\bar P}(H) \leq 8\Delta$.
We sample two random radii $R_1,R_2$ uniformly at random from $[\Delta, 2\Delta]$ and $[9\Delta, 10\Delta]$, respectively. Then, we compute two balls, an out-ball $B_1=\outb{G}{u,R_1}$ and an in-ball $B_2=\inb{G}{v,R_2}$, return $H\cap (\delta^+(B_1)\cup \delta^-(B_2))$. 

Finally, we provide a crucial technical remark:
As an extreme case, our algorithm immediately returns the empty set whenever $\bar P[C]$ is empty (lines 1-2). Intuitively, due to Lemma~\ref{lem:critical-path}, $\bar P[C]=\emptyset$ implies $\ipw{C}<\ipw{H}$, which satisfies our structural goal.
\begin{algorithm}[H] 
    \caption{$\rp(G, \mathcal P, H, \bar P=\langle u,\ldots, v\rangle, \Delta)$}
	\begin{algorithmic}[1] 
        \If{$\bar P=\emptyset$} 
        \State \Return $\emptyset$
        \EndIf
        \State $\tau, \tau' \gets d_G(u,v), d_G(v,u)$
        \State $S_1,S_2\gets \emptyset$
        \Statex
        \commentline{Stage 1. Reducing forward distance}
        \If{$\tau > \Delta$}
        \State Sample $R$ uniformly at random in the range $[\frac{1}{2}\tau, \frac{2}{3}\tau]$
        \State $B_1\gets \outb{G}{u,R}$
        \State $B_2\gets \inb{G}{v,R}$
        \State $S_1\gets$ $H\cap (\delta^+(B_1)\cup \delta^-(B_2))$
        \State $\mathcal C\gets$ SCCs of $H\setminus S_1$ that are fully contained in $B_1 \cup B_2$
        \For{$C\in \mathcal C$}
        \State $S_2\gets S_2\cup \rp(G, \mathcal P ,C, \bar P[C], \Delta)$
        \EndFor
        \State \Return $S_1\cup S_2$
        \EndIf
        \Statex
        \commentline{Stage 2. Reducing backward distance}
        \If{$\tau\leq \Delta$ and $\tau' > 8\Delta$}
        \State Sample $R$ uniformly at random in the range $[\frac{1}{2}\tau', \frac{13}{24}\tau']$
        \State $B_1\gets \inb{G}{u,R}$
        \State $B_2\gets \outb{G}{v, R}$
        \State $S_1\gets$ $H\cap (\delta^-(B_1)\cup \delta^+(B_2))$
        \State $\mathcal C\gets$ SCCs of $H\setminus S_1$ that are fully contained in $B_1\cup B_2$
        \For{$C\in \mathcal C$}
        \State $S_2\gets S_2\cup \rp(G, \mathcal P ,C, \bar P[C], \Delta)$
        \EndFor
        \State \Return $S_1\cup S_2$
        \Statex
        \commentline{Stage 3. Bounding weak diameter}
        \Else
            \State Sample $R_1$ uniformly at random from $[\Delta, 2\Delta]$, and $R_2$ uniformly at random from $[9\Delta, 10\Delta]$
            \State $B_1\gets \outb{G}{u,R_1}$
            \State $B_2\gets \inb{G}{u,R_2}$
            \State \Return $H\cap (\delta^+(B_1)\cup \delta^-(B_2))$
        
        \EndIf
        \end{algorithmic}
	\label{alg:rp}
\end{algorithm}

\subsection{Analysis} \label{sec:LDD-analysis}
\reducepathwidth*
In this section, we show that our algorithm $\rpp$ satisfies the condition of Lemma~\ref{lem:reduce-pathwidth}. As we discussed in the previous section, this concludes the proof of Lemma~\ref{lem:ldd-pathpartition} and therefore Theorem~\ref{thm:ldd}.
As a base case, the lemma is obvious when $\ipw{H}=0$, which implies $H=\emptyset$ and $S=\emptyset$.
Hence, let us assume $\ipw{H}>0$.

Before we prove Lemma~\ref{lem:reduce-pathwidth}, we prove two technical lemmas.
First, we verify the exceptional case (lines 1 and 2) such that if $\bar P$ is the empty set, then $\ipw{H}$ is reduced compared to that of the parent instance.
\begin{lemma}\label{lem:critical-empty}
    Suppose $\rp(G,\mathcal P, H', \bar P', \Delta)$ calls $\rp(G,\mathcal P, H, \bar P, \Delta)$ and $\bar P=\emptyset$. Then, $\ipw{H} < \ipw{H'} $.
\end{lemma}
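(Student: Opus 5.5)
The plan is to reduce the claim directly to Lemma~\ref{lem:critical-path}, applied to the parent instance. By the pseudocode, a recursive call $\rpp(G,\mathcal P,H,\bar P,\Delta)$ is issued by the parent only in Stage~1 or Stage~2. Moreover, it is issued only when $H$ is an SCC of $H'\setminus S_1$ and $\bar P=\bar P'[H]$ is the surviving subsequence of $\bar P'$ with respect to $H$. So $H$ is a strongly connected subgraph of $H'$.

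The first step is to make sure that $\bar P'$ is a critical sequence of $H'$ with respect to some critical path, so that Lemma~\ref{lem:critical-path} applies with $H'$ in the role of $H$ and $H$ in the role of $C$. At the top level this holds by construction, since $\bar P$ is taken as the critical sequence of the input subgraph. For deeper calls, the paper has just declared that it abuses notation and calls every surviving subsequence $\bar P[C]$ a critical sequence of $C$, precisely because of Lemma~\ref{lem:critical-path}. To be safe, I would rather establish the needed property by induction along the recursion. The property is this: for the top-level $H_0$ and its critical sequence $\bar P_0$, every recursive instance $(H,\bar P)$ satisfies $\bar P=\bar P_0[H]$, and this is well defined because the surviving subsequence of a surviving subsequence is the surviving subsequence of $\bar P_0$ restricted to the smaller subgraph. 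I would then observe that the proof of Lemma~\ref{lem:critical-path} uses only two facts: that $\bar P_0$ hits every critical bag of $H_0$, and that $\bar P_0[C]$ contains all of $\bar P_0\cap C$.

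The core argument is then short. If $\bar P=\bar P'[H]=\emptyset$, then by definition $H$ contains no vertex of $\bar P'$, so $H\setminus\bar P[H]=H$. Applying Lemma~\ref{lem:critical-path} with $C=H$ inside $H'$ gives $\ipw{H}=\ipw{H\setminus \bar P[H]}<\ipw{H'}$, which is the claim.

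A direct alternative avoids the bookkeeping. Let $X$ be any critical bag of $H'$. Since $\bar P'$ hits every critical bag of $H'$ and $H$ is disjoint from $\bar P'$, we get $|X\cap H|\le |X\cap H'|-1$. Every non-critical bag $X$ of $H'$ already satisfies $|X\cap H|\le|X\cap H'|<\ipw{H'}$. Hence $\ipw{H}<\ipw{H'}$.

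The main obstacle is the first step: justifying that $\bar P'$, and not only the top-level sequence, meets every critical bag of $H'$ when $H'$ is itself a deep recursive subgraph. Critical bags of $H'$ might differ from those of the top-level graph if $H'$ has smaller width. I would handle this by noting the following. If $\ipw{H'}$ is already smaller than the top-level width, then the outer \ld{} guarantee only needs a reduction relative to the top-level $H_0$, and the same inequality argument with $H_0$'s critical bags gives that. Otherwise the critical bags of $H'$ are among those of $H_0$ in which $H'$ retains all of its vertices, and there $\bar P_0\cap X\subseteq H'$ survives into $\bar P'$. So $\bar P'$ does hit them.
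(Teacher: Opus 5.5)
Your core step is exactly the paper's proof. Since $\bar P'[H]$ contains every vertex of $\bar P'\cap H$, the hypothesis $\bar P'[H]=\emptyset$ forces $H\cap\bar P'=\emptyset$, and Lemma~\ref{lem:critical-path} applied to $H\subseteq H'$ gives the drop. The paper justifies treating $\bar P'$ as a critical sequence of $H'$ only through its ``abuse of notation''. You are right that this is the one step needing an argument: Lemma~\ref{lem:critical-path} certifies a drop relative to the parent's width, not that $\bar P'$ meets every critical bag of $H'$. Your observation that surviving subsequences compose (so $\bar P'=\bar P_0[H']$) is correct. So is your treatment of the case $\ipw{H'}=\ipw{H_0}$: every critical bag $X$ of $H'$ is then a critical bag of $H_0$ with $X\cap H'=X\cap H_0$, hence $\emptyset\neq\bar P_0\cap X\subseteq\bar P'$. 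In that case your proof is complete.

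In the case $\ipw{H'}<\ipw{H_0}$ you only obtain $\ipw{H}\le\ipw{H'}<\ipw{H_0}$, which needs no argument about the critical bags of $H_0$. You do not obtain the stated $\ipw{H}<\ipw{H'}$, and nothing better is possible, because the stated inequality fails there.

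Take bags $X_1=\{a_1,c_1\}$, $X_2=\{z_2\}$, $X_3=\{a_3,c_3\}$, $X_4=\{z_4\}$, $X_5=\{a_5,c_5\}$, $X_6=\{z_6\}$, $X_7=\{a_7,c_7\}$. Put unit-weight edges in both directions along $a_1z_2a_3z_4a_5z_6a_7$, weight-$10$ edges in both directions between each $a_i$ and $c_i$, and set $\Delta=1$.

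With $H_0=G$ and $\bar P_0=\langle a_1,a_3,a_5,a_7\rangle$, Stage~1 draws $R\in[3,4]$. Almost surely $B_1=\{a_1,z_2,a_3,z_4\}$ and $B_2=\{z_4,a_5,z_6,a_7\}$. The edges $a_3\to z_4$, $a_1\to c_1$ and $a_3\to c_3$ are cut. So $H'=\{a_1,z_2,a_3\}\subseteq B_1$ is recursed on with $\bar P'=\langle a_1,a_3\rangle$ and $\ipw{H'}=1$, and its critical bag $X_2$ is missed by $\bar P'$.

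In the call on $H'$, Stage~1 draws $R'\in[1,4/3]$ and cuts $a_1\to z_2$ and $z_2\to a_3$. It then recurses on the singleton SCC $\{z_2\}\subseteq B_1'\cup B_2'$ with empty surviving subsequence, yet $\ipw{\{z_2\}}=1=\ipw{H'}$.

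So your fallback is the correct repair rather than a proof of the statement as written. The lemma should read $\ipw{H}<\ipw{H_0}$, where $H_0$ is the instance on which \textsf{Directed-LDD} invoked \textsf{Scissors-Carving}. That is all \textsf{Directed-LDD} needs. However, the induction in Lemma~\ref{lem:gen-correct} must then carry the conclusion ``weak diameter at most $12\Delta$ or $\ipw{C}<\ipw{H_0}$''. Its Stage~1 disjoint case and its Stage~2 claim rely on the same hitting property of $\bar P$, and they are fixed by the same case split you propose.
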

\begin{proof}
    We show the statement in the case where the recursive call is executed in line 12. The other case where the recursive call is executed in line 21, is analogous.
    Recall that $\bar P=\bar P'[H]$ is a subsequence of $\bar P'$ that contains every vertex of $\bar P'\cap H$.
    Hence, $\bar P=\bar P'[H]=\emptyset$ implies $\bar P'\cap H=\emptyset$.
    Due to Lemma~\ref{lem:critical-path}, $\ipw{H} = \ipw{H\setminus \bar P'} < \ipw{H'}$. 
    This completes the proof.
\end{proof}

Next, we prove the technical lemma that settles the stages of the recursion tree.
\begin{lemma} \label{lem:gen-order}
    For every recursive call in which the procedure is in Stage 2, i.e., operates lines 15-22, all subsequent recursive calls remain in Stage 2 or enter Stage 3.  
\end{lemma}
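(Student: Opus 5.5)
The plan is to show that a Stage 2 call never hands a child a surviving subsequence with forward distance greater than $\Delta$. Once every child satisfies $f_{\bar P[C]}(C)\le\Delta$, the dispatch in Algorithm~\ref{alg:rp} sends it to Stage 2 or Stage 3, or it returns immediately at lines 1--2. An induction on the recursion depth then gives the statement for all descendants, not just for children.

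Fix a call in Stage 2 on $H$ with $\bar P=\langle u,\ldots,v\rangle$, so $d_G(u,v)=\tau\le\Delta$. Let $C$ be any SCC of $H\setminus S_1$ on which we recurse at line 21, and suppose $\bar P[C]=\langle x,\ldots,y\rangle$ is non-empty. By definition, $\bar P[C]$ is a consecutive subsequence of $\bar P$, and $x$ is at or before $y$ in that order. By Observation~\ref{obs:path}, $\bar P$ is a subsequence of a shortest $u$-$v$ path $P$ in $G$, visited in the order of $P$. Hence $u,x,y,v$ occur in this order along $P$, and the subpath from $x$ to $y$ is itself a shortest path. This gives
\[
f_{\bar P[C]}(C)=d_G(x,y)\le d_G(u,x)+d_G(x,y)+d_G(y,v)=d_G(u,v)=\tau\le\Delta .
\]
Therefore the child call satisfies the guard $\tau\le\Delta$ of lines 15 and 26, and so it executes Stage 2 or Stage 3, never Stage 1.

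The induction then closes the argument. Stage 3 makes no recursive calls. A Stage 2 call again produces only children with forward distance at most $\Delta$, by the computation above applied to the child's own sequence, which is again a subsequence of the same shortest path $P$. So no descendant ever re-enters Stage 1.

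The only delicate point is justifying that the order of $\bar P[C]$ agrees with the order along $P$, so that $x$ precedes $y$ on $P$. This follows because critical sequences are defined as ordered along $P$ and surviving subsequences are consecutive subsequences. I expect no real obstacle here: the monotonicity of forward distance under taking subsequences is the whole content of the statement.
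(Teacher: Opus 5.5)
Your proof is correct and is essentially the paper's argument. Because $\bar P[C]$ is a consecutive subsequence of $\bar P$, and $\bar P$ lies in order along a shortest path, you get $f_{\bar P[C]}(C)\le d_G(u,v)\le\Delta$, so no child of a Stage~2 call can meet the Stage~1 guard; repeating this covers all descendants. (In the paper's numbering the relevant guards are lines~5 and~14, not 15 and 26, but this does not affect the argument.)
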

\begin{proof}
    Let $\rp(G,\mathcal P,H, \bar P, \Delta)$ be a recursive call in Stage 2. For a strongly connected subgraph $C$ of $H$ that is fully contained in $B_1\cup B_2$, suppose it invokes a recursive call  $\rp(G,\mathcal P,C,\bar P[C], \Delta)$.
    Let $\bar P=\langle u,\ldots, v\rangle$, and let $\bar P[C]=\langle u',\ldots, v'\rangle$. Since $\bar P[C]$ is a consecutive subsequence of $\bar P$ and $\bar P$ is itself a subsequence generated from a shortest path in $G$ by Observation~\ref{obs:path}, we have that $d_G(u', v') \leq d_G(u, v)$.
    Since $\rp(G,\mathcal P,H,\bar P, \Delta)$ is in Stage 2, it must have satisfied the conditional statement in line 14, which implies $d_G(u,v) \leq \Delta$. 
    Therefore, $$f_{\bar P}(C) = d_G(u', v') \leq d_G(u, v) \leq \Delta.$$
    Hence, we conclude that the forward distance of the recursive call remains upper-bounded by $\Delta$.
    Consequently, the recursive call cannot satisfy the conditional statement in line 5 (Stage 1), meaning that it must either remain in Stage 2 or enter Stage 3.
    By repeating this argument, we conclude that all subsequent recursive calls remain in Stage 2 or enter Stage 3.
    This completes the proof.
\end{proof}

Now, we are ready to prove Lemma~\ref{lem:reduce-pathwidth}.
We prove the two items of Lemma~\ref{lem:reduce-pathwidth} separately, through~\Cref{lem:gen-correct,lem:gen-prob}, respectively. 
\begin{lemma} \label{lem:gen-correct}
    Let $S=\rp(G,\mathcal P,H, \bar P, \Delta)$ with a nonempty $\bar P$, and let $C$ be an SCC of $H\setminus S$. Then, either the weak diameter of $C$ is at most $12\Delta$ or $\ipw{C} <\ipw{H}$. 
\end{lemma}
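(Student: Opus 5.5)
We argue by induction on the recursion tree of \rpp, proving the stronger statement that for every call $\rpp(G,\mathcal P,H,\bar P,\Delta)$ with nonempty $\bar P=\langle u,\ldots,v\rangle$, every SCC $C$ of $H\setminus S$ has weak diameter at most $12\Delta$ or satisfies $\ipw{C}<\ipw{H}$. The recursion is finite: each Stage 1 call multiplies the forward distance by at most $2/3$, each Stage 2 call multiplies the backward distance by at most $2/3$, and Lemma~\ref{lem:gen-order} says the stages never go back. We split into three cases according to the stage executed.

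\textbf{Stage 3 (base case).} Here $d_G(u,v)\le\Delta$ and $d_G(v,u)\le 8\Delta$. Take $B_1=\outb{G}{u,R_1}$ and $B_2=\inb{G}{u,R_2}$. By Lemma~\ref{lem:twoball-fin}, every SCC $C$ of $H\setminus S$ is either contained in $B_1\cap B_2$ or disjoint from it.
\begin{itemize}
\item If $C\subseteq B_1\cap B_2$, then for $x,y\in C$ we have $d_G(x,y)\le d_G(x,u)+d_G(u,y)\le R_2+R_1\le 12\Delta$.
\item Otherwise $C$ is disjoint from $B_1\cap B_2$. We show every vertex $w$ of $\bar P$ lies in $B_1\cap B_2$. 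By Observation~\ref{obs:path}, $w$ lies on a shortest $u$-$v$ path, so $d_G(u,w)\le d_G(u,v)\le\Delta\le R_1$. Also $d_G(w,u)\le d_G(w,v)+d_G(v,u)\le\Delta+8\Delta\le R_2$. Hence $C\cap\bar P=\emptyset$, and since $C\subseteq H$, Lemma~\ref{lem:critical-path} gives $\ipw{C}=\ipw{C\setminus\bar P[C]}<\ipw{H}$.
\end{itemize}
Note that $C\cap\bar P=\emptyset$ makes $\bar P[C]$ empty, so the removal in Lemma~\ref{lem:critical-path} is vacuous. We apply that lemma with $H$ itself and its critical sequence $\bar P$; this is legitimate because the recursion always passes $\bar P[C]$, which by the paper's convention serves as a critical sequence.

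\textbf{Stages 1 and 2 (inductive step).} Let $S=S_1\cup S_2$ and let $C$ be an SCC of $H\setminus S$. Then $C$ lies inside some SCC $C'$ of $H\setminus S_1$. There are two cases.
\begin{itemize}
\item If $C'\subseteq B_1\cup B_2$, then $C'$ was recursed on. Either $\bar P[C']=\emptyset$, in which case Lemma~\ref{lem:critical-empty} gives $\ipw{C}\le\ipw{C'}<\ipw{H}$, or by induction $C$ (an SCC of $C'\setminus S_2$, since $S_2$ restricted to $C'$ is exactly its recursive output) has weak diameter at most $12\Delta$ or $\ipw{C}<\ipw{C'}\le\ipw{H}$.
\item If $C'$ is disjoint from $B_1\cup B_2$ (by Lemma~\ref{lem:twoball-fin} these are the only options), we must show $C'\cap\bar P=\emptyset$ and then invoke Lemma~\ref{lem:critical-path} as above.
\end{itemize}

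For Stage 1 the disjointness argument is easy. Each $w\in\bar P$ lies on a shortest $u$-$v$ path with $d_G(u,w)+d_G(w,v)=\tau$, so one of the two terms is at most $\tau/2\le R$, putting $w\in B_1\cup B_2$.

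The main obstacle is Stage 2. There $B_1=\inb{G}{u,R}$ and $B_2=\outb{G}{v,R}$ with $R\ge\tau'/2$, but the vertices of $\bar P$ need not lie in $B_1\cup B_2$; only a shortest $v$-$u$ path $P'$ is covered, by the same halving argument. So we cannot conclude $C'\cap\bar P=\emptyset$ and must argue via critical bags instead. Let $j$ be a critical index of $H$. Since $\bar P$ is a critical sequence, both $u$ and $v$ are vertices of $H$ lying in critical bags, and, by the choice of each $v_j$ as the earliest vertex on $P$ in $X_j$, $u$ lies in the first critical bag and $v$ in the last. The path $P'$ runs from $v$ to $u$, and consecutive vertices of a path lie in the same or adjacent bags of a path partition. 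Therefore $P'$ meets every bag between the bags of $u$ and $v$, in particular every critical bag $X_j$ of $H$.

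The remaining subtlety, which we expect to be the crux, is that $P'$ need not lie inside $H$, so it does not directly reduce $|X_j\cap H|$. We instead argue that $|X_j\cap C'|<|X_j\cap H|$ because $X_j\cap\bar P$ contains the vertex $v_j\in H$, and we must rule out $v_j\in C'$. We will try to show $v_j\in B_1\cup B_2$ using $f\le\Delta$ and $\tau'>8\Delta$. Since $v_j$ lies on a shortest $u$-$v$ path, $d_G(v_j,u)\le d_G(v_j,v)+d_G(v,u)$, but this exceeds $R$. So we will instead combine this with $d_G(v,v_j)\le d_G(v,u)+d_G(u,v_j)\le\tau'+\Delta$, and if neither bound closes, fall back to a bag-counting argument with $P'$: in the path partition $G[B_1\cup B_2]$ separates $C'$ from the part of $H$ on either side, which forces the critical bags that $C'$ meets to lose a vertex.
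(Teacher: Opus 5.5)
\textbf{Verdict: the proposal has a genuine gap in the Stage~2 case, and that case cannot be closed for the lemma as stated.}

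Your Stage~3 case, your Stage~1 case, and the inductive bookkeeping coincide with the paper's proof. By bookkeeping I mean passing to the SCC $C'$ of $H\setminus S_1$ that contains $C$, and invoking Lemma~\ref{lem:critical-empty} when $\bar P[C']=\emptyset$.

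The gap is the Stage~2 case where $C'$ is disjoint from $B_1\cup B_2$: you never prove $\ipw{C'}<\ipw{H}$. As you observe, trying to place $v_j$ in $B_1\cup B_2$ fails. Your closing sentence, that $G[B_1\cup B_2]$ ``separates'' $C'$ from the rest of $H$ and so forces a critical bag to lose a vertex, is not an argument, and it is not true in general. The paper handles this case through the claim $\ipw{C\setminus P'}<\ipw{H}$. There, $P'\subseteq B_1\cup B_2$ meets every bag between $X_u$ and $X_v$, hence every critical bag $X$, and the paper then asserts $X\cap H\cap P'\neq\emptyset$. That assertion is exactly the step you flagged as the crux, and it does not follow: $P'$ is a shortest path in $G$ and may cross $X$ only at vertices outside $H$.

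With its stated hypotheses (an arbitrary strongly connected $H\subseteq G$, distances measured in $G$), the lemma is in fact false, so no argument can close this case. Here is a counterexample.
\begin{itemize}
\item Take $\Delta=1$ and bags $X_1=\{u,u'\}$, $X_2=\{v_2,a,p\}$, $X_3=\{v,v'\}$.
\item Edges: $(u,v_2)$ and $(v_2,v)$ of weight $\tfrac12$; $(v_2,a)$ and $(a,v_2)$ of weight $20$; $(v,a)$ and $(a,u)$ of weight $60$; $(v,p)$ and $(p,u)$ of weight $50$; zero-weight edges in both directions between $u,u'$ and between $v,v'$.
\item Let $H=G\setminus\{p\}$.
\end{itemize}
Then $H$ is strongly connected and all three bags are critical, with $\ipw{H}=2$. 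The sequence $\bar P=\langle u,v_2,v\rangle$ is a critical sequence with $\tau=1$ and $\tau'=100$, so Stage~2 runs with $R\in[50,\tfrac{325}{6}]$. We have $d_G(v_2,u)=d_G(v,v_2)=80$ and $d_G(a,u)=d_G(v,a)=60$. This gives $B_1=\{u,u',p\}$, $B_2=\{v,v',p\}$ and $S_1=\{(a,u),(v,a)\}$. The recursive calls on $\{u,u'\}$ and $\{v,v'\}$ cut nothing else. So $C'=\{v_2,a\}$ is an SCC of $H\setminus S$ that is disjoint from $B_1\cup B_2$, has $\ipw{C'}=2=\ipw{H}$, and has weak diameter $20>12\Delta$.

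The missing ingredient is a hypothesis that forces $P'$ to lie in the graph whose width you compare against. For instance, measure all distances in the subgraph $H_0$ at the root of the recursion, so that $P,P'\subseteq H_0$, and prove $\ipw{C}<\ipw{H_0}$. Then any critical bag $X$ with $X\cap C'=X\cap H_0$ would satisfy $\emptyset\neq P'\cap X\subseteq C'$. This contradicts $C'\cap P'=\emptyset$.
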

\begin{proof}
    Let $\bar P=\langle u,\ldots, v\rangle$.
     We prove the lemma by induction on the height of the recursion tree generated by $\rpp$.
    For the base case, suppose the height of the recursion tree is 1. This implies that 
    the recursive call is in Stage 3.
    In this stage, the conditional statements (lines 5 and 14) ensure that 
    $d_G(u,v) \leq \Delta$ and $d_G(v,u)\leq 8\Delta$.
    Due to Lemma~\ref{lem:twoball-fin}, $C$ is either contained in $B_1\cap B_2$ or disjoint from $B_1\cap B_2$.
    Since $\bar P$ is a subsequence of a shortest path in $G$ by Observation~\ref{obs:path},
    for every $w\in \bar P$, $d_G(u,w)\leq d_G(u,v)\leq \Delta \leq R_1$. Hence, $\bar P\subset B_1$.
    In addition, for every $w\in \bar P$, the triangle inequality yields
    $$d_G(w, u) \leq d_G(w, v) + d_G(v,u) \leq d_G(u,v)+d_G(v,u) \leq \Delta+8\Delta\leq R_2,$$
    where the second inequality follows because $w$ appears before $v$ along the shortest path.
    Then, it follows that $\bar P\subset B_2$ and therefore $\bar P\subset B_1\cap B_2$.
    Consequently, if $C$ is disjoint from $B_1\cap B_2$, then $C$ is disjoint from $\bar P$.
    This disjointness implies $\ipw{C}<\ipw{H}$ due to Lemma~\ref{lem:critical-path}.

    Next, we analyze the weak diameter of an SCC contained in $B_1\cap B_2$.
    For every $w,w'\in B_1\cap B_2$, 
$$d_G(w,w')\leq d_G(w,u)+d_G(u,w')\leq R_2+R_1 \leq 12\Delta.$$
This shows that the weak diameter of an SCC contained in $B_1\cap B_2$ is at most $12\Delta$.
Thus, we conclude that either the weak diameter of $C$ is at most $12\Delta$, or $\ipw{C}<\ipw{H}$. This completes the proof for the base case.

    For the inductive argument, suppose $\rpp(G,\mathcal P,H,\bar P, \Delta)$ generates a recursion tree of depth $i>1$. 
    Then, $\rpp$ is either in Stage 1 or in Stage 2.
    We prove the inductive argument for the case where it is in Stage 1. 
    Due to Lemma~\ref{lem:twoball-fin}, every SCC of $H\setminus S_1$ (after line 9) is either fully contained in $B_1$, contained in $B_2\setminus B_1$, or disjoint from $B_1\cup B_2$.
    Since $S=S_1\cup S_2$ is a superset of $S_1$, $C$ is either fully contained in $B_1$, contained in $B_2\setminus B_1$, or disjoint from $B_1\cup B_2$.
    
    For the first case, suppose $C$ is contained in $B_1$. 
    Since $H\setminus S$ is a subgraph of $H\setminus S_1$, there is an SCC $C'$ of $H\setminus S_1$ that contains $C$.
    If $\bar P[C']=\emptyset$, then $\ipw{C'} < \ipw{H}$ by Lemma~\ref{lem:critical-empty}. 
    Since $C\subseteq C'$, we have that $\ipw{C} <\ipw{H}$, as desired.
    Otherwise, let us assume $\bar P[C']$ is nonempty.
    Then, due to the induction hypothesis on $\rpp(G, \mathcal P ,C', \bar P[C'], \Delta)$, after line 12, the weak diameter of every SCC of $C'\setminus S_2$ is at most $12\Delta$.
    Since $C\subseteq C'\setminus S$, $C$ is a subgraph of $C'\setminus S_2$. Therefore, the weak diameter of $C$ is at most $12\Delta$.
    Overall, we have that either $\ipw{C}<\ipw{H}$ or the weak diameter of $C$ is at most $12\Delta$.
    This completes the inductive proof for the first case where $C$ is contained in $B_1$.
    
    Similarly, in the second case where $C$ is fully contained in $B_2\setminus B_1$, we can analogously verify that either $\ipw{C} < \ipw{H}$ holds by Lemma~\ref{lem:critical-empty}, or the induction hypothesis applies, which implies that the weak diameter of $C$ is at most $12\Delta$.
    
    For the last case, suppose $C$ is disjoint from $B_1\cup B_2$.
    For any vertex $w\in \bar P$, since $d_G(u,v)=\tau$ and $w$ lies on a shortest $u$-$v$ path by Observation~\ref{obs:path}, we have that either $d_G(u,w)\leq \frac{1}{2}\tau$ or $d_G(w,v)\leq \frac{1}{2}\tau$.
    Since $\frac{1}{2}\tau \leq R$ by construction, either $w\in B_1$ or $w\in B_2$ must hold.
    This implies that $\bar P\subset B_1\cup B_2$ and thus $\bar P\cap C = \emptyset$.
    By the definition of $\bar P[C]$, we have that $\bar P[C]$ is empty.
    Then, due to Lemma~\ref{lem:critical-empty}, $\ipw{C} < \ipw{H}$.
    This completes the inductive proof for the case where $\rpp$ is in Stage 1.

    \medskip
    Next, we prove the inductive argument for the case when $\rpp$ is in Stage 2.
    The proof of this case mirrors the proof of the previous case. Due to Lemma~\ref{lem:twoball-fin}, every SCC of $H\setminus S_1$ (after line 18) is either fully contained in $B_1$, contained in $B_2\setminus B_1$, or disjoint from $B_1\cup B_2$.
    The first two cases that $C\subseteq B_1$ or $C\subseteq B_2\setminus B_1$ can be handled by exactly the same argument as in the previous case. On the other hand, we use a different argument for the last case.
    
    Suppose $C$ is in the last case, i.e., $C$ is disjoint from $B_1\cup B_2$.
    In Stage 2, unlike Stage 1, $\bar P$ is not guaranteed to be contained in $B_1\cup B_2$ since for a middle point $x\in \bar P$, both $d_G(x,u)$ and $d_G(v,x)$ might be much larger than $d_G(v,u)$. This prevents us from directly applying Lemma~\ref{lem:critical-empty}.
    To circumvent this obstacle, we establish an alternative argument as follows. 
    Let $P'$ be a shortest $v$-$u$ path of length $\tau'$ in $G$.
    Since $R\geq \frac{1}{2}\tau'$, we have that $P'\subseteq B_1\cup B_2$. 
    We then claim that this reversed shortest path $P'$ can play the role of a ``critical path". See Figure~\ref{fig:overview-hbc}(b) for an illustration. 
    \begin{claim}
        $\ipw{C\setminus P'} < \ipw{H}$.
    \end{claim}
    \begin{proof}[Proof of the claim.]
        Let $X_u$ and $X_v$ be the bags in the path partition $(\mathcal P, \mathcal X)$ that contain $u$ and $v$, respectively. 
        Due to Lemma~\ref{lem:critical-path}, $\ipw{H\setminus \bar P} < \ipw{H}$. 
        This upper bound implies that for every critical bag $X$ of $H$, $|X\cap H\cap \bar P|$ is nonempty.
        Moreover, by the definition of the critical sequence, every critical bag $X$ of $H$ must lie between $X_u$ and $X_v$ in the path partition.

        Due to the definition of the path partition, $P'$ must intersect every bag between $X_u$ and $X_v$ in the path partition.
        This implies that $P'$ intersects every critical bag of $H$.
        Let $X$ be a critical bag of $H$.
        If $C$ retains all vertices of this bag, i.e., $|X\cap C| = |X\cap H|$, then $|X\cap C\cap P'|=|X\cap H\cap P'|$ is nonempty.
        For any such critical bag $X$, we have that:
        $$|X\cap (C\setminus P')| = |X\cap C| - |X\cap (C\cap P')| = |X\cap H| - |X\cap C\cap P'| < |X\cap H|.$$
        On the other hand, if $|X\cap C| < |X\cap H|$ initially, we trivially have that $|X\cap (C\setminus P')| < |X\cap H|$.
        Thus, we conclude that for every critical bag $X$ of $H$, $|X\cap (C\setminus P')| < |X\cap H|$.
        This establishes that $\ipw{C\setminus P'} < \ipw{H}$.
    \end{proof}
    Since $C$ is disjoint from $B_1\cup B_2$, we have $C\cap P'=\emptyset$.
    Due to the claim, we have that $$\ipw{C}=\ipw{C\setminus P'} < \ipw{H},$$ as desired.
    This completes the inductive proof for the case where $\rpp$ is in Stage 2.

    In all cases, we conclude that either the weak diameter of $C$ is at most $12\Delta$ or $\ipw{C}< \ipw{H}$. This completes the proof.
\end{proof}

Next, we bound the edge-cut probability.
\begin{lemma} \label{lem:gen-prob} 
     Let $S=\rp(G,\mathcal P,H,\bar P, \Delta)$. For every $e\in E(H)$, 
     $$\Pr[e\in S]\leq 164\cdot \frac{w(e)}{\Delta}.$$
\end{lemma}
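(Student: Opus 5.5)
Fix an edge $e=(a,b)\in E(H)$. The plan is to look at the single chain of recursive calls whose subgraph contains $e$, and charge each ball that can cut $e$ separately. Because the SCCs passed to recursive calls are vertex-disjoint, $e$ lies in at most one child at each level. So the calls containing $e$ form a path $H=H_0\supseteq H_1\supseteq\cdots$: a run of Stage 1 calls, then a run of Stage 2 calls by Lemma~\ref{lem:gen-order}, then possibly one Stage 3 call. Each call contributes two balls. By the union bound, $\Pr[e\in S]$ is at most the sum, over the calls on this chain, of the probability that one of its two balls cuts $e$.

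\textbf{Single ball.} For a ball $B^{\pm}_G(c,R)$ with $R$ uniform on an interval of length $L$, the edge $e$ lies in the boundary only if $R$ falls between $d_G(c,a)$ and $d_G(c,b)$ (or between $d_G(a,c)$ and $d_G(b,c)$ for in-balls). By the triangle inequality this window has length at most $w(e)$, so the probability is at most $w(e)/L$, conditioned on everything earlier. In Stage 1 with forward distance $\tau$ we have $L=\tau/6$, giving $12\,w(e)/\tau$ for the two balls together. In Stage 2 with backward distance $\tau'$ we have $L=\tau'/24$, giving $48\,w(e)/\tau'$. In Stage 3 we have $L=\Delta$, giving $2\,w(e)/\Delta$.

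\textbf{Geometric decay.} Next I show that along the chain the parameters decrease geometrically. In Stage 1, the overview argument shows that each child $C$ contained in $B_1\cup B_2$ has $f_{\bar P}(C)\le R\le\frac23\tau$; this uses Observation~\ref{obs:path} to replace paths by subsequences. In Stage 2, the computation $\frac{13}{24}\tau'+\Delta\le\frac23\tau'$ for $\tau'>8\Delta$ gives $b_{\bar P}(C)\le\frac23\tau'$. I will also check the case $C\subseteq B_2\setminus B_1$ symmetrically, using $d_G(y,x)\le d_G(y,v)+d_G(v,x)\le d_G(u,v)+R$.

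\textbf{Summing.} The Stage 1 values satisfy $\tau_i\le(2/3)^{k-i}\tau_k$, where the last value is still $>\Delta$. Hence $\sum_i 1/\tau_i\le \frac1\Delta\sum_{j\ge0}(2/3)^j=3/\Delta$, for a Stage 1 total of $36\,w(e)/\Delta$. Similarly every Stage 2 value exceeds $8\Delta$, so the Stage 2 total is at most $48\cdot\frac{3}{8\Delta}\,w(e)=18\,w(e)/\Delta$. Adding Stage 3 gives a constant well below $164$. The slack presumably comes from the authors using cruder per-ball bounds or counting both endpoints.

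\textbf{Main obstacle.} The hard part is conditioning. The chain and the values $\tau_i$ are themselves random, depending on earlier radii. I would handle this by bounding, at each call, the conditional cut probability given the history by $c\,w(e)/\tau_i$. I would then argue deterministically that the sequence $\tau_i$ along any realized chain obeys the decay above, so the sum of conditional bounds is uniformly at most $O(w(e)/\Delta)$. The union bound in expectation (a sum of conditional probabilities over a stopping sequence) then yields the claim.
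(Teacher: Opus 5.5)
Your argument is correct and is the unrolled form of the paper's proof. The paper inducts on the recursion with the invariants $\bigl(20-\frac{96\Delta}{\max(\tau',16\Delta/3)}\bigr)\frac{w(e)}{\Delta}$ in Stage 2 and $\bigl(164-\frac{96\Delta}{\max(\tau,2\Delta/3)}\bigr)\frac{w(e)}{\Delta}$ in Stage 1, which are exactly the closed forms of your geometric sums, and applying that hypothesis to whichever child is realized plays the role of your sum of conditional probabilities. Your constant $56$ beats $164$ only because the paper charges each Stage 1 call $48\,w(e)/\tau$ instead of the $12\,w(e)/\tau$ you correctly compute; also, your decay inequality is written backwards and should read $\tau_k\le(2/3)^{k-i}\tau_i$, which is what your summation actually uses.
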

\begin{proof}
    If $\bar P=\emptyset$, $\Pr[e\in S]=0$ (line 2). Hence, let us assume that $\bar P\neq \emptyset$.
    Let $e=(x,y)$, $\bar P=\langle u=v_1,v_2,\ldots, v_\ell=v\rangle$, $\tau=d_G(u,v)$ and $\tau'=d_G(v,u)$.
    We prove the lemma by case analysis on the stages.
    For the first case,  suppose $\rpp$ is a recursive call in Stage 3. 
    In this case, $e\in S$ if and only if $e\in \delta^+(B_1)$ or $e\in \delta^-(B_2)$.
    We analyze $\Pr[e\in \delta^+(B_1)]$. 
    By the construction of $B_1$, $e\in \delta^+(B_1)$ if and only if $d_G(u, x) \leq R_1$ and $d_G(u,y) >R_1$.
    Therefore, $\Pr[e\in \delta^+(B_1)] \leq \Pr[ d_G(u,x)\leq R_1< d_G(u,y)]$. Since we sample $R_1$ uniformly at random from $[\Delta, 2\Delta]$,
    we have that
    $$\Pr[e\in \delta^+(B_1)] \leq \Pr[ d_G(u,x)\leq R_1 < d_G(u,y)] \leq \frac{d_G(u,y)-d_G(u,x)}{2\Delta-\Delta}.$$
    By the triangle inequality, $d_G(u,y)-d_G(u,x) \leq d_G(x,y)$. Hence, we have that
    $$\Pr[e\in \delta^+(B_1)] \leq \frac{d_G(x,y)}{\Delta} \leq \frac{w(e)}{\Delta}.$$
    Symmetrically, since we sample $R_2$ uniformly at random from $[9\Delta, 10\Delta]$, we can easily show that $\Pr[e\in \delta^-(B_2)]  \leq \frac{w(e)}{\Delta}$.
    Due to the union bound of probability, we have that:
    $$\Pr[e\in S] \leq \Pr[e\in \delta^+(B_1)]+\Pr[e\in \delta^-(B_2)] \leq 2\cdot \frac{w(e)}{\Delta} < 164\cdot \frac{w(e)}{\Delta}.$$
    This completes the proof for the case where $\rpp$ is a recursive call in Stage 3.

    For the second case, suppose $\rpp$ is a recursive call in Stage 2.
    We claim the stronger, parameterized bound on $\Pr[e\in S]$ as follows.
    \begin{claim} \label{claim:case2}
        Suppose $\rp(G,\mathcal P,H,\bar P, \Delta)$ lies in Stage 2. Then,
        $$\Pr[e\in S] \leq \left(20- \frac{96\Delta}{\max( \tau', \frac{16\Delta}{3})}\right)\cdot \frac{w(e)}{\Delta}.$$
    \end{claim}
    \begin{proof}[Proof of the claim.]
        Recall that $\bar P=\langle u,\ldots, v\rangle$.
        We prove the claim by induction on the height of the recursion tree generated by $\rpp$.
        Note that the recursion depth is at least two since otherwise, the recursive call would lie in Stage 3.
        For the base case, suppose the recursion depth is 2. 
        This means that every further recursive call either lies in Stage 3, or terminates immediately. 
        From the above arguments about Stage 3,  $\Pr[e\in S_2] \leq 2\cdot \frac{w(e)}{\Delta}$ holds for both cases,.
        Since each edge participates in at most one SCC of $H\setminus S_1$, the same probability holds for the union of $S_2$ over all recursive calls.
        Since we compute two balls of random radius $R\in [\frac{1}{2}\tau', \frac{13}{24}\tau']$, using the same argument as in the base case (Stage 3), we can easily show that 
        $$\Pr[e\in S_1] \leq \Pr[e\in \delta^-(B_1)] + \Pr[e\in \delta^+(B_2)] \leq 2\cdot \left(\frac{w(e)}{\frac{13}{24}\tau' - \frac{1}{2}\tau'}\right) = 48 \cdot \frac{w(e)}{\tau'}.$$
        Also, since $\rpp(G,\mathcal P,H,\bar P, \Delta)$ lies in Stage 2, we have $\tau'>8\Delta$.
        Then, $144\cdot \frac{w(e)}{\tau'} < 18 \cdot \frac{w(e)}{\Delta}$.
        By subtracting $96\cdot \frac{w(e)}{\tau'}-2\cdot \frac{w(e)}{\Delta}$ for both sides, we have that:
        \begin{align} \label{eq:1-1}
            2\cdot \frac{w(e)}{\Delta} + 48\cdot \frac{w(e)}{\tau'} < 20\cdot \frac{w(e)}{\Delta} - 96\cdot \frac{w(e)}{\tau'}
        \end{align}
        Therefore, due to the union bound of probability, we have that:
        \begin{align*}
            \Pr[e\in S] \leq \Pr[e\in S_1] + \Pr[e\in S_2] &\leq 2\cdot \frac{w(e)}{\Delta} + 48\cdot \frac{w(e)}{\tau'}
            \\&\stackrel{\eqref{eq:1-1}}< 20\cdot \frac{w(e)}{\Delta} - 96\cdot \frac{w(e)}{\tau'}
            \\&\leq \left(20- \frac{96\Delta}{\max(\tau', \frac{16\Delta}{3})}\right)\cdot \frac{w(e)}{\Delta},
        \end{align*}
        where the last inequality holds since $\max(\tau', \frac{16\Delta}{3}) = \tau'$ follows directly from the fact that $\tau' > 8\Delta > \frac{16\Delta}{3}$.
        This completes the proof of the base case.

         For the inductive case, note that we sample $R$ uniformly at random from $[\frac{1}{2}\tau', \frac{13}{24}\tau']$. Using the same argument as in the base case, we can derive that 
         $$\Pr[e\in S_1] \leq 48\cdot \frac{w(e)}{\tau'}.$$        
    Next, we analyze $\Pr[e\in S_2]$.
    Note that $\Pr[e\in S_2]$ is affected by at most one SCC of $H\setminus S_1$, say $C$, that contains $e$. Then, either $C$ is fully contained in $B_1$ or fully contained in $B_2\setminus B_1$ by Lemma~\ref{lem:twoball-fin}.
    For the first case, suppose $C$ is contained in $B_1$.
    If $\bar P[C]=\emptyset$, then the recursive call $\rp(G, \mathcal P, C, \bar P[C], \Delta)$ immediately returns the empty cutset, which implies $\Pr[e\in S_2]=0$. Hence, let us assume $\bar P[C]\neq \emptyset$.
    Let $x,y$ be the first and the last vertices of $\bar P[C]$, respectively. Since $x,y\in \bar P[C]\subseteq \bar P$ and $\bar P$ is a subsequence of a $u$-$v$ shortest path in $G$ by Observation~\ref{obs:path}, we have
    $d_G(u,x)\leq d_G(u,v)=\tau \leq \Delta$. In addition, we have $d_G(y,u) \leq R$ since $y\in C\subseteq B_1$. Combining these inequalities with the triangle inequality, we have that:
    \begin{align} \label{eq:2-2}
        b_{\bar P}(C) = d_G(y,x)\leq d_G(y,u) + d_G(u, x) \leq d_G(y,u) + d_G(u,v) \leq R + \Delta \leq \frac{13}{24}\tau' + \frac{1}{8}\tau' = \frac{2}{3}\tau'.
    \end{align}
    Here, the last inequality holds because the conditional statements for triggering Stage 2 guarantee $8\Delta < \tau'$ and $d_G(u,v) = \tau \leq \Delta$.
    Furthermore, due to Lemma~\ref{lem:gen-order}, $\rp(G, \mathcal P, C, \bar P[C], \Delta)$ is either in Stage 2, or in Stage 3.
    Due to the necessary condition for Stage 2, $\tau' > 8\Delta$. Hence, we have that 
    \begin{align} \label{eq:2-1}
        \frac{2}{3}\tau'>\frac{16\Delta}{3}.
    \end{align}
    If it is in Stage 2, due to the induction hypothesis, we have that 
    $$\Pr[e\in S_2] \leq \left(20- \frac{96\Delta}{\max(d_G(y,x), \frac{16\Delta}{3})}\right)\cdot \frac{w(e)}{\Delta} \stackrel{\eqref{eq:2-2}}\leq \left(20- \frac{96\Delta}{\max(\frac{2}{3}\tau', \frac{16\Delta}{3})}\right)\cdot \frac{w(e)}{\Delta} \stackrel{\eqref{eq:2-1}}\leq \left(20-\frac{144\Delta}{\tau'}\right)\cdot \frac{w(e)}{\Delta}.$$
    Otherwise, if it is in Stage 3, from the arguments about Stage 3, 
    $$\Pr[e\in S_2] \leq 2\cdot \frac{w(e)}{\Delta} =(20-18)\cdot \frac{w(e)}{\Delta} = \left(20-\frac{96\Delta}{\frac{16\Delta}{3}}\right)\cdot \frac{w(e)}{\Delta} \stackrel{\eqref{eq:2-1}}\leq \left(20-\frac{144\Delta}{\tau'}\right)\cdot \frac{w(e)}{\Delta}.$$
    Thus, for all cases, we have that $\Pr[e\in S_2] \leq (20-\frac{144\Delta}{\tau'})\cdot \frac{w(e)}{\Delta}$.

    The case where $C \subseteq B_2 \setminus B_1$ is analogous. 
    In this case, let $x',y'$ be the first and the last vertices of $\bar P[C]$, respectively.
    Since $x'\in B_2$, we have $d_G(v, x') \leq R\leq \frac{13}{24}\tau'$.
    Due to the triangle inequality and the fact that $\tau'>8\Delta$ we have that:
    $$d_G(y',x') \leq d_G(y',v) + d_G(v, x') \leq d_G(u,v) + d_G(v,x')\leq \Delta + R \leq \frac{1}{8}\tau' + \frac{13}{24}\tau'=\frac{2}{3}\tau'.$$
    Then, repeating the same argument as above yields $\Pr[e\in S_2] \leq (20-\frac{144\Delta}{\tau'})\cdot \frac{w(e)}{\Delta}$.
    Finally, due to the union bound of probability, we have that:
   \begin{align*}
        \Pr[e\in S] &\leq \Pr[e\in S_1] + \Pr[e\in S_2] 
        \\&\leq 48\cdot \frac{w(e)}{\tau'} + \left(20-\frac{144\Delta}{\tau'}\right)\cdot \frac{w(e)}{\Delta}
        \\&= \left(20-\frac{96\Delta}{\tau'}\right)\cdot \frac{w(e)}{\Delta} \stackrel{\eqref{eq:2-1}}= \left(20- \frac{96\Delta}{\max(\tau', \frac{16\Delta}{3})}\right)\cdot \frac{w(e)}{\Delta}.
    \end{align*}
    This completes the inductive proof.
    \end{proof}

    For the last case, suppose $\rpp$ is a recursive call that lies in Stage 1.
    Using Claim~\ref{claim:case2}, we show the stronger probability bound as follows. 
    \begin{claim} \label{claim:case1}
        Suppose $\rp(G,\mathcal P,H,\bar P, \Delta)$ lies in Stage 1. Then,
        $$\Pr[e\in S] \leq \left(164- \frac{96\Delta}{\max(\tau, \frac{2\Delta}{3})}\right)\cdot \frac{w(e)}{\Delta}.$$
    \end{claim}
    \begin{proof}[Proof of the claim.]
        Recall that $\bar P=\langle u,\ldots, v\rangle$. The logical flow of the proof mirrors that of Claim~\ref{claim:case2}.
        We prove the claim by induction on the height of the recursion tree generated by $\rpp$.
        Note that the recursion depth is at least two since otherwise, the recursive call would lie in Stage 3. 
        For the base case, suppose the recursion depth is two. 
        Note that any subsequent recursive call cannot lie in Stage 2 because that would imply that the recursion depth is greater than two. Hence, every further recursive call either lies in Stage 3 or terminates immediately. Using the same argument as in Claim~\ref{claim:case2}, we can check that $$\Pr[e\in S_1] \leq 48\cdot \frac{w(e)}{\tau},\text{ and } \Pr[e\in S_2]\leq 2\cdot \frac{w(e)}{\Delta}.$$
        Due to the necessary condition for Stage 1, we have $\tau > \Delta$. Then, $144\cdot \frac{w(e)}{\tau} < 144\cdot \frac{w(e)}{\Delta}$. By subtracting $96\cdot \frac{w(e)}{\tau}-2\cdot \frac{w(e)}{\Delta}$ for both sides, we obtain:
        \begin{align} \label{eq:1-3}
            2\cdot \frac{w(e)}{\Delta} + 48\cdot \frac{w(e)}{\tau} \leq 146\cdot \frac{w(e)}{\Delta} - 96\cdot \frac{w(e)}{\tau}
        \end{align}
        Then, due to the union bound of probability, 
        \begin{align*}
            \Pr[e\in S] \leq \Pr[e\in S_1]+ \Pr[e\in S_2] &\leq 2\cdot \frac{w(e)}{\Delta} + 48\cdot \frac{w(e)}{\tau}
            \\&\stackrel{\eqref{eq:1-3}}\leq 146\cdot \frac{w(e)}{\Delta} - 96\cdot \frac{w(e)}{\tau}
            \\&< \left(164- \frac{96\Delta}{\max(\tau, \frac{2\Delta}{3})}\right)\cdot \frac{w(e)}{\Delta}.
        \end{align*}
        This completes the proof for the base case.
                
        For the inductive case, using the same argument as in Claim~\ref{claim:case2}, we can check that 
        $$\Pr[e\in S_1] \leq \frac{48}{\tau}\cdot w(e).$$
         Next, we analyze $\Pr[e\in S_2]$. As in Claim~\ref{claim:case2}, this probability is affected by at most one SCC of $H\setminus S_1$, say $C$, that contains $e$.
         Then, $C$ is either fully contained in $B_1$ or fully contained in $B_2\setminus B_1$ by Lemma~\ref{lem:twoball-fin}. For the first case, suppose $C$ is contained in $B_1$. Let $x,y$ be the first and the last vertices of $\bar P[C]$, respectively. Since $\bar P[C]$ is a subsequence of a shortest $u$-$v$ path in $G$ by Observation~\ref{obs:path} and $y\in B_1$, we have $d_G(x,y) \leq d_G(u,y) \leq R \leq \frac{2}{3}\tau$. 
         The difference from Claim~\ref{claim:case2} is that $\rp(G,\mathcal P, C, \bar P[C], \Delta)$ can now belong to Stage 1, Stage 2, or Stage 3, or it may satisfy $\bar P[C]=\emptyset$.

         \medskip
         If $\bar P[C]=\emptyset$, then $\Pr[e\in S_2]=0$. If it belongs to Stage 3, from the arguments for Stage 3, $\Pr[e\in S_2] \leq 2\cdot \frac{w(e)}{\Delta} < (164-\frac{144\Delta}{\tau})\cdot \frac{w(e)}{\Delta}$.
         If it belongs to Stage 2, due to Claim~\ref{claim:case2}, we have that:
         \begin{align*}
             \Pr[e\in S_2]&\leq \left(20 - \frac{96\Delta}{\max(d_G(y, x), \frac{16\Delta}{3})}\right)\cdot \frac{w(e)}{\Delta}
             \\&\leq 20\cdot \frac{w(e)}{\Delta} = (164 - 144)\cdot \frac{w(e)}{\Delta}
             \\&< \left(164 - \frac{144\Delta}{\tau}\right)\cdot \frac{w(e)}{\Delta},
         \end{align*}
         where the last inequality follows from the fact that $\tau > \Delta$.
         Finally, if it belongs to Stage 1, due to the induction hypothesis, 
          \begin{align*}
        \Pr[e\in S_2] &\leq \left(164 - \frac{96\Delta}{\max(d_G(x, y), \frac{2\Delta}{3})}\right)\cdot \frac{w(e)}{\Delta}
        \\&\leq \left(164 - \frac{96\Delta}{\max(\frac{2}{3}\tau, \frac{2\Delta}{3})}\right)\cdot \frac{w(e)}{\Delta}
        \\&= \left(164-\frac{144\Delta}{\tau}\right) \cdot \frac{w(e)}{\Delta}.
    \end{align*}
    Here, the second inequality follows from the fact that $\tau>\Delta$.
    Therefore, in all cases, $$\Pr[e\in S_2]\leq \left(164-\frac{144\Delta}{\tau}\right)\cdot \frac{w(e)}{\Delta}.$$
    
    The case where $C \subseteq B_2 \setminus B_1$ is analogous. 
    In this case, let $x',y'$ be the first and the last vertices of $\bar P[C]$, respectively. Then,
    observe that $d_G(x', y') \leq d_G(x', v) \leq R\leq \frac{2}{3}\tau$.
    Repeating the same argument as above yields $\Pr[e\in S_2] \leq (164-\frac{144\Delta}{\tau})\cdot \frac{w(e)}{\Delta}$.

    Finally, due to the union bound of probability we have that:
    \begin{align*}
          \Pr[e\in S] &\leq \Pr[e\in S_1] + \Pr[e\in S_2] 
        \\&\leq 48\cdot \frac{w(e)}{\tau} + \left(164-\frac{144\Delta}{\tau}\right)\cdot \frac{w(e)}{\Delta}
        \\&= \left(164-\frac{96\Delta}{\tau}\right)\cdot \frac{w(e)}{\Delta} = \left(164- \frac{96\Delta}{\max(\tau, \frac{2\Delta}{3})}\right)\cdot \frac{w(e)}{\Delta}.
    \end{align*}
    This completes the inductive proof.
    \end{proof}
    Then, Lemma~\ref{lem:gen-prob} follows from the analysis of Stage 3, Claim~\ref{claim:case2}, and Claim~\ref{claim:case1}. This completes the proof.
\end{proof}

\section{Directed Sparsest Cut Problem for Bounded-Treewidth Digraphs} 
\label{sec:quasipartition}
In this section, we study the directed non-bipartite sparsest cut problem and its LP relaxation, focusing on the integrality gap.

\paragraph{Directed non-bipartite sparsest cut problem.}
Let $G$ be an edge-weighted directed graph with \emph{capacities} $c(e)$ for all edges $e\in E(G)$. Let $T$ be a set of terminal pairs $T=\{(s_1,t_1),(s_2,t_2),\ldots, (s_\ell,t_\ell)\}$ of vertices. For each $i\in [\ell]$, let $\textsf{dem}(i)$ be a non-negative real number, called the \emph{demand}, for the terminal pair $(s_i, t_i)$. A \emph{cut} $S$ of $G$ is a subset $S\subseteq E(G)$ of edges.
For a cut $S$, let $c(S):=\sum_{e\in S} c(e)$ be the \emph{capacity} of $S$.
Let $I_S\subseteq [\ell]$ be the set of integers $i$ such that every $s_i$-$t_i$ path (not necessarily a shortest path) contains at least one edge of $S$.
We define the \emph{demand} of $S$ by $\textsf{dem}(S):=\sum_{i\in I_S}\dem(i)$.
The \emph{sparsity} of $S$ is defined by $c(S)/\dem(S)$.
The goal of the directed non-bipartite sparsest cut problem is to find a cut $S$ that minimizes the sparsity.

The integrality gap of the LP relaxation for this problem is called the directed multi-commodity flow cut-gap. For non-uniform demands, 
Hajiaghayi and R\"acke showed an $O(\sqrt n)$ upper bound for the flow-cut gap~\cite{hajiaghayi2006n}. This has been improved to $\tilde O(n^{11/23})$ by Agarwal, Alon, and Charikar~\cite{agarwal2007improved}.
On the other hand, the current best-known lower bound is due to Chuzhoy and Khanna~\cite{chuzhoy2009polynomial}, who showed a $\tilde \Omega(n^{1/7})$ lower bound for the flow-cut gap.

\medskip
M\'emoli, Sidiropoulos, and Sridhar~\cite{memoli2018quasimetric} showed that every $n$-vertex directed graph with treewidth $\tw$ admits an $(O(\tw \log n), \Delta)$-quasipartition for every parameter $\Delta>0$.
They used their quasipartition construction to show that the flow-cut gap is $O(\tw \log^2 n)$ for such graphs.
We now present the main result of this section.
\flowcutgap*

This improves upon the $\log n$ factor of the previously best-known bound from~\cite{memoli2018quasimetric}.
At a high level, we follow the procedure of~\cite{memoli2018quasimetric}, but provide a better analysis of their approach.

We briefly explain the approach of~\cite{memoli2018quasimetric}, which proved that the integrality gap is $O(\tw \log^2 n)$.
First, they showed that for every $\Delta>0$, $G$ admits a $(\beta, \Delta)$-quasipartition with $\beta=O(\tw \log n)$.
Using this quasipartition, they proved that the quasimetric space with respect to the shortest path metric of $G$ can be embedded into a convex combination of $0$-$1$ quasimetric spaces (see Definition~\ref{def:zeroone} and Lemma~\ref{lem:quasi-to-zeroone}) with a distortion of $O(\beta \log n)=O(\tw \log^2 n)$.
Finally, by using the known relationship between the convex combination of $0$-$1$ quasimetric spaces and the integrality gap for the directed non-bipartite sparsest cut problem~\cite{charikar2006directed}, they showed that the integrality gap is $O(\tw \log^2 n)$.

\begin{definition}
    A quasimetric space $M=(X,d)$ is a function $d$ over $X\times X$ that satisfies all but the symmetry axiom for a metric space.
In other words, $d$ satisfies the following properties:
\begin{itemize}
    \item $d(u,u)=0$ for all $u\in X$.
    \item (Triangle inequality) $d(u,v)\leq d(u,w)+d(w,v)$ for all $u,v,w\in X$.
\end{itemize}
\end{definition}
\begin{definition} \label{def:zeroone}
    A quasimetric space $M=(X,d)$ is a $0$-$1$ quasimetric space if every metric distance $d(u,v)$ is either 0 or 1.
\end{definition}
\begin{lemma}[Lemma~16 of~\cite{memoli2018quasimetric}] \label{lem:quasi-to-zeroone}
    Suppose a directed graph $G$ admits a $(\beta, \Delta)$-quasipartition for every $\Delta>0$. Then there is an embedding of $G$ into a convex combination of $0$-$1$ quasimetric spaces with a distortion of $O(\beta \log n)$.
\end{lemma}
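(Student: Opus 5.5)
The plan is to build, for each scale $\Delta$, a $0$-$1$ quasimetric from the reachability relation left after removing a sampled quasipartition cutset. Summing these over $O(\log n)$ relevant scales with weights $\Delta$ should give a non-contracting embedding with expansion $O(\beta\log n)$. Normalizing the weights then yields a convex combination; distortion is invariant under scaling.

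\emph{Step 1: the $0$-$1$ quasimetrics.} For a cutset $S\subseteq E$, set $d_S(x,y)=0$ if $G\setminus S$ contains an $x$-$y$ path and $d_S(x,y)=1$ otherwise. Then $d_S(x,x)=0$, and the triangle inequality holds because reachability is transitive: if $x$ cannot reach $z$, then either $x$ cannot reach $y$ or $y$ cannot reach $z$. So each $d_S$ is a $0$-$1$ quasimetric.

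\emph{Step 2: scales and the candidate embedding.} Let $\Delta_i=2^i$ and let $S_i$ be sampled from a $(\beta,\Delta_i)$-quasipartition. The candidate is
\[ D(x,y)=\sum_i \Delta_i\,\mathbb E[d_{S_i}(x,y)]. \]
\emph{Non-contraction.} Suppose $d_G(x,y)>\Delta_i$. The first quasipartition property forbids any $x$-$y$ path in $G\setminus S_i$, so $d_{S_i}(x,y)=1$ deterministically. Summing the geometric series over the scales $\Delta_i<d_G(x,y)$ gives $D(x,y)\ge d_G(x,y)/2$.
\emph{Expansion.} Fix a shortest $x$-$y$ path $Q$. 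We have $d_{S_i}(x,y)=1$ only if some edge of $Q$ lies in $S_i$. By the union bound, scale $i$ therefore contributes at most $\min(\Delta_i,\ \beta d_G(x,y))$. The scales with $\Delta_i\le\beta d_G(x,y)$ sum geometrically to $O(\beta d_G(x,y))$. Every larger scale contributes $O(\beta d_G(x,y))$, so the whole argument reduces to showing that only $O(\log n)$ such scales contribute anything.

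\emph{Step 3: truncating scales (the main obstacle).} Naively, infinitely many large scales each contribute $\beta d_G(x,y)$. The plan is to sample the quasipartition at scale $\Delta_i$ not on $G$ itself but on the graph $G_i$ in which every edge of weight below $\Delta_i/n^2$ is given weight $0$. In $G_i$ these light edges are never cut, since their cut probability is at most $\beta\cdot 0/\Delta_i$.

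Distances in $G_i$ differ from those in $G$ by at most $n\cdot\Delta_i/n^2\le\Delta_i$, because a simple path has fewer than $n$ edges. Hence the quasipartition of $G_i$ certifies $d_G(x,y)\le 2\Delta_i$ for every surviving $x$-$y$ path, and non-contraction survives with a constant-factor loss. Light edges never enter the cutset, so if $d_G(x,y)<\Delta_i/n^2$ then every edge of $Q$ is light, no edge of $Q$ is ever cut, and $d_{S_i}(x,y)=0$. For heavy edges the cut probability is at most $\beta w(e)/\Delta_i$, so the expansion bound of Step 2 still holds.

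The only nonzero contributions from scales with $\Delta_i>\beta d_G(x,y)$ therefore come from $\Delta_i\in(\beta d_G(x,y),\,n^2 d_G(x,y)]$. That range contains $O(\log n)$ scales, which gives total expansion $O(\beta\log n)\cdot d_G(x,y)$.

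The delicate point is justifying that $G_i$ admits a $(\beta,\Delta_i)$-quasipartition. I would read the hypothesis as applying to the reweighted graphs $G_i$, which have the same underlying graph as $G$ and only modified weights. This is harmless for the intended application, since the class of bounded-treewidth digraphs is closed under reweighting, including setting weights to zero. If one insists on using only $G$ itself, my fallback is to contract the light edges into an auxiliary graph and lift the cutset back, as in Lemma~\ref{lem:embed-LDD}.
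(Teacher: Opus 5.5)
Your proposal is correct and follows essentially the same route as the paper's sketch of the MSS argument: the $2^i$-weighted mixture of the reachability quasimetrics $d_{S_i}$, non-contraction from the quasipartition diameter guarantee, and expansion via a union bound along a fixed shortest path with only $O(\log n)$ scales contributing. The paper simply assumes polynomially bounded integer weights, where your Step~2 over scales $0,\dots,k$ already suffices; your zero-reweighting of edges lighter than $\Delta_i/n^2$ is exactly the ``forced quasipartition'' device its remark attributes to MSS, which likewise assumes more than plain quasipartitions of $G$ itself (harmless for bounded treewidth). Drop the contraction fallback, though: it still needs quasipartitions of a graph other than $G$, and contracting a light edge $(u,v)$ in a digraph collapses $d(v,u)$ to $0$, so the map is not an edge-preserving isometric embedding and Lemma~\ref{lem:embed-LDD} does not apply.
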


\subsection{Deep dive into the approach of~\cite{memoli2018quasimetric}} \label{sec:quasipartition-tw}
In this section, we carefully review the previous approach of~\cite{memoli2018quasimetric} step-by-step, and then describe our improvement.
First, we analyze the $(O(\tw\log n), \Delta))$-quasipartition construction of~\cite{memoli2018quasimetric}.

The construction is achieved by a recursive approach. Given a tree decomposition $(T, \{B_t\}_{t\in T})$, we choose a central bag $B_t$, meaning that each subgraph of $G$ induced by the maximal subtree of $T\setminus \{t\}$ contains at most $\frac{2}{3} |V(G)|$ vertices. We then sample a radius $R$ uniformly at random from $[\frac{1}{2}\Delta, \Delta]$.
For each vertex $v\in B_t$, we compute two balls, an out-ball $B_1:=\outb{G}{v, R}$ and an in-ball $B_2:=\inb{G}{v,R}$, add $\delta^+(B_1) \cup \delta^-(B_2)$ to the cutset $S$, and recursively apply this procedure to each subgraph induced by the connected components of $T\setminus \{t\}$.
Note that any $x$-$y$ path intersecting $v$ such that $d_G(x,y)>2\Delta$ must intersect $\delta^+(B_1) \cup \delta^-(B_2)$ because, due to the triangle inequality, we have that:
$$\max(d_G(x,v), d_G(v,y)) \geq \frac{1}{2}(d_G(x,v) + d_G(v,y)) \geq \frac{1}{2}d_G(x,y) > \Delta.$$
This implies that for every path (not necessarily a shortest path) connecting two vertices from different trees of $T\setminus \{t\}$, either it intersects $S$, or it satisfies $d_G(x,y)\leq 2\Delta$.
This justifies the recursion: after the recursion terminates, we can conclude that for every $x$-$y$ path in $G\setminus S$, we have $d_G(x,y) \leq 2\Delta$.

We briefly analyze the probability that a single edge belongs to the cutset, i.e., $\Pr[e\in S]$. Since $B_t$ plays the role of a separator, each edge $e$ is affected by at most one instance at each level of the recursion. Moreover, since $B_t$ is a central bag, the recursion depth is $O(\log n)$.
At a single level, $e$ is affected by at most $2|B_t|=2(\tw+1)$ balls. Hence, due to the union bound on the probability, 
\begin{align*}
    \Pr[e\in S]\leq O(\log n) \cdot O(\tw) \cdot  \frac{w(e)}{\Delta} = O(\tw \log n) \cdot \frac{w(e)}{\Delta}.
\end{align*}
This completes the analysis of the construction of the $(O(\tw \log n), \Delta)$-quasipartition.

\medskip
Next, we analyze the second part which gives an embedding of a quasimetric space into a convex combination of $0$-$1$ quasimetric spaces (Lemma~\ref{lem:quasi-to-zeroone}).
For simplicity, suppose every edge weight of $G$ is a positive integer and $\max_{u,v}\{d_G(u,v)\}=2^k$ for some integer $k=O(\log n)$.
Then, for each scale $i\in \{0,1,\ldots, k\}$, we compute a $(\beta, 2^i)$-quasipartition $\mathcal D_i$ of $G$ using the above procedure.
Importantly, each quasipartition $\mathcal D_i$ is a distribution over a cutset $S_i\subset E(G)$.
Using these, we compute a random quasipartition of $G$ according to the following procedure. Let $c=2^{k+1}-1$. 
First, we select one $\mathcal D_i$ among $\{\mathcal D_0, \mathcal D_1, \ldots, \mathcal D_k\}$ with probability $\frac{2^i}{c}$.
Then, we pick a random cutset $S$ from the chosen $\mathcal D_i$.
This yields a distribution $\mathcal D$ over the cutsets. 

Next, for each (random) cutset $S$ sampled from $\mathcal D$, 
we define a $0$-$1$ quasimetric space $(V(G), d_S)$ where $d_S(x,y)=0$ if there is an $x$-$y$ path in $G\setminus S$ and $d_S(x,y)=1$ otherwise. 
This yields a distribution over $0$-$1$ quasimetric spaces which can be expressed as a convex combination of $0$-$1$ quasimetric spaces.
Let $(V(G), d^*)$ be the resulting quasimetric space.
Then, we have 
\begin{align}
    d^*(u,v) 
    &= \Pr_{S\sim \mathcal D}[\text{there is no } u\text{-}v \text{ path in } G\setminus S] \label{eq:5-16} \\
    &= \sum_{i=0}^k \frac{2^i}{c} \Pr_{S_i\sim \mathcal D_i}[\text{there is no } u\text{-}v \text{ path in } G\setminus S_i] \\
    &\leq \sum_{i=0}^k \frac{2^i}{c} \beta \cdot\frac{d_G(u,v)}{2^i}
    \\
    &= \frac{(k+1)\cdot \beta}{c} d_G(u,v). \label{eq:5-20}
\end{align}
For the other direction,~\cite{memoli2018quasimetric} showed that $d^*(u,v)\geq \frac{1}{c} \cdot d_G(u,v)$, yielding a distortion of $(k+1)\beta = O(\beta \log n)$.

\subsection{Removing the extra \texorpdfstring{$O(\log n)$} factor from Lemma~\ref{lem:quasi-to-zeroone}}
In this section, we provide a better analysis that improves the distortion by a factor of $O(\log n)$.
We focus on the $O(\log n)$ additional factor arising from Lemma~\ref{lem:quasi-to-zeroone}. This factor comes from combining the distortion contributions across $O(\log n)$ scales, where each scale contributes independently. 

Our idea is to avoid this factor by considering all scales simultaneously.
To formalize this idea, we revisit the construction in Section~\ref{sec:quasipartition-tw}.
For a fixed edge $e=(x,y)$ and a scale $i$, 
we compute a set $V(e)\subset V(G)$ of size $O(\tw \log n)$ as follows. 
For each recursive instance that contains $e$, we add the vertices in the central bag to $V(e)$. 
Since $e$ is contained in $O(\log n)$ instances (from the construction in Section~\ref{sec:quasipartition-tw}), we have $|V(e)|=O(\tw \log n)$.
Note that $V(e)$ is independent of the scale $i$.

Now, we reinterpret the construction in Section~\ref{sec:quasipartition-tw} as follows. For every scale $i$, we sample a radius $R_i$ uniformly at random from $[2^{i-1},2^i]$, and we add $e$ to the cutset $S_i$ 
if $e$ is contained in one of the boundaries of the (in- and out-) balls of radius $R_i\in [2^{i-1}, 2^{i}]$ centered at some vertex in $V(e)$. This defines the quasipartition $\mathcal D_i$ over the cutsets.
We prove the following useful lemma.
\begin{lemma} \label{lem:quasipartition-correct1}
    Let $e=(x,y)$ be an edge and $z$ be a vertex in $V(e)$. Then,
    there are at most $O(1)$ scales $i\in [k]$ satisfying the following conditions:
    \begin{itemize}
        \item[$(a)$] $\Pr[e\in \delta^+(B^+_G(z, R_i))] > 0$, and
        \item[$(b)$] $2^{i} > w(e)$.
    \end{itemize}
\end{lemma}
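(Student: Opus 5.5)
Fix the edge $e=(x,y)$ and the vertex $z\in V(e)$. I first turn condition $(a)$ into a deterministic condition on the scale $i$. The edge $e$ lies in $\delta^+(B^+_G(z,R_i))$ exactly when $d_G(z,x)\le R_i< d_G(z,y)$. Here $R_i$ is drawn uniformly from $[2^{i-1},2^i]$, so this event has positive probability only if the interval $[d_G(z,x),d_G(z,y))$ meets $[2^{i-1},2^i]$ in a set of positive measure. In particular, $(a)$ forces
\[
d_G(z,x) < 2^i \quad\text{and}\quad d_G(z,y) > 2^{i-1}.
\]
The triangle inequality gives $d_G(z,y)\le d_G(z,x)+d_G(x,y)\le d_G(z,x)+w(e)$. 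Combining this with $d_G(z,y)>2^{i-1}$, condition $(a)$ yields
\[
d_G(z,x) > 2^{i-1}-w(e).
\]

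Next I add condition $(b)$, namely $w(e)<2^i$. Write $a:=d_G(z,x)$. Every scale $i$ satisfying both conditions obeys $a<2^i$ and $a>2^{i-1}-w(e)>2^{i-1}-2^i$. This lower bound is vacuous, so I need to use $(b)$ more carefully. I split into two cases according to whether $w(e)\le 2^{i-2}$ or not.

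\emph{Case $w(e)\le 2^{i-2}$.} Then $a>2^{i-1}-2^{i-2}=2^{i-2}$. Together with $a<2^i$ this gives $2^{i-2}<a<2^i$, so $i\in(\log_2 a,\log_2 a+2)$. At most two integers $i$ lie in this interval.

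\emph{Case $2^{i-2}<w(e)<2^i$.} Here $i\in(\log_2 w(e),\log_2 w(e)+2)$, which again contains at most two integers.

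Together, the two cases give at most four scales satisfying both $(a)$ and $(b)$, which is $O(1)$.

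The only delicate point is the one just handled: condition $(b)$ alone does not bound the number of scales, because arbitrarily large scales satisfy it. The bound has to come from pairing $(b)$ with the upper constraint $a<2^i$ from $(a)$ and the lower constraint obtained through the triangle inequality. Splitting on whether $w(e)$ is small relative to $2^i$ makes this work, since for small $w(e)$ the lower bound on $a$ becomes a constant fraction of $2^i$. If $a=0$ (for instance $z=x$), the first case is empty, because it requires $a>2^{i-2}>0$, and only the second case contributes; the count stays bounded. The in-ball version, and boundaries of the form $\delta^-$, follow by the symmetric argument with the roles of $x$ and $y$ reversed.
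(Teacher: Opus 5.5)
Your proof is correct and uses the same ingredients as the paper: the two constraints $d_G(z,x)<2^i$ and $d_G(z,y)>2^{i-1}$ extracted from $(a)$, the triangle inequality $d_G(z,y)\le d_G(z,x)+w(e)$, and $(b)$. The only difference is that the paper anchors on $d_G(z,y)$ rather than $d_G(z,x)$: combining $d_G(z,x)<2^i$ with $w(e)<2^i$ gives $2^{i-1}\le d_G(z,y)<2^{i+1}$ directly, which pins $i$ to $O(1)$ values without your case split on $w(e)$ versus $2^{i-2}$.
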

\begin{proof}
    Let $i$ be an index satisfying conditions $(a),(b)$.
    Note that $e$ is contained in  $\delta^+(B^+_G(z, R_i))$ if and only if $d_G(z,x) \leq R_i$ and $d_G(z, y)>R_i$. Subsequently, from condition $(a)$, we have that:
    $$\Pr_{R_i}[(d_G(z,x) \leq R_i) \cap (d_G(z, y) > R_i))] >0.$$
    Since $R_i$ is sampled uniformly at random from $[2^{i-1}, 2^i]$, we have that:
    $$d_G(z,x) <2^i,\text{ and }   d_G(z,y) \geq 2^{i-1}.$$
    By the triangle inequality and condition $(b)$, we obtain:
    $$2^{i-1} \leq d_G(z,y) \leq d_G(z,x) + d_G(x,y) \leq d_G(z,x) + w(e) < 2^i+2^i = 2^{i+1}.$$
    Let $j$ be the fixed integer such that $d_G(z,y)\in [2^j, 2^{j+1}]$.
    Then, $i\in \{j-1,j,j+1,j+2\}$.
    We conclude that there are at most 4 scales $i$ satisfying both conditions $(a),(b)$. This completes the proof.
\end{proof}
Using exactly the same argument, we can show the following lemma.
\begin{lemma} \label{lem:quasipartition-correct2}
    Let $e=(x,y)$ be an edge and $z$ be a vertex in $V(e)$. Then,
    there are at most $O(1)$ scales $i\in [k]$ satisfying the following conditions:
    \begin{itemize}
        \item[$(a)$] $\Pr[e\in \delta^-(B^-_G(z, R_i))] > 0$, and
        \item[$(b)$] $2^{i} > w(e)$.
    \end{itemize}
\end{lemma}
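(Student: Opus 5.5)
We mirror the proof of Lemma~\ref{lem:quasipartition-correct1}, now for the in-ball boundary, with the roles of $x$ and $y$ swapped. Fix $e=(x,y)$, $z\in V(e)$, and an index $i$ satisfying $(a)$ and $(b)$.

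The first step is to characterize when $e$ lies on the boundary. By the definition of $\delta^-(\cdot)$, $e=(x,y)\in\delta^-(B^-_G(z,R_i))$ holds exactly when the source $x$ lies outside the in-ball and the target $y$ lies inside it. In terms of distances, this means $d_G(x,z)>R_i$ and $d_G(y,z)\le R_i$. Since $R_i$ is uniform on $[2^{i-1},2^i]$, condition $(a)$ implies that this event has positive probability for some $R_i$ in that interval. Hence
\[
d_G(y,z)<2^i \quad\text{and}\quad d_G(x,z)\ge 2^{i-1}.
\]

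The second step is the triangle inequality in the reverse direction. Because $(x,y)$ is an edge, $d_G(x,z)\le w(e)+d_G(y,z)$. Combining this with condition $(b)$ gives
\[
2^{i-1}\le d_G(x,z) < 2^i+2^i = 2^{i+1}.
\]
So the quantity $d_G(x,z)$, which does not depend on $i$, lies in $[2^{i-1},2^{i+1})$.

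The third step pins down $i$. Let $j$ be the fixed integer with $d_G(x,z)\in[2^j,2^{j+1})$. The previous display forces $i\in\{j,j+1\}$, and to be safe against boundary conventions we may take $i\in\{j-1,\dots,j+2\}$. Either way there are at most $4$ admissible scales, which is $O(1)$.

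There is no real obstacle here. The only point requiring care is getting the orientation right: for an in-ball, the boundary edge enters the ball, so the source $x$ is the far endpoint. Consequently the triangle inequality must be applied to $d_G(x,z)$ via $x\to y\to z$, rather than to $d_G(z,y)$ as in the out-ball case.
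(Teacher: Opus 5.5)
Your proof is correct and is essentially the paper's approach: the paper proves this lemma by saying it follows by exactly the same argument as Lemma~\ref{lem:quasipartition-correct1}, and your proposal is that mirrored argument. You correctly identify the source $x$ as the far endpoint for the in-ball boundary, and your bound $2^{i-1}\le d_G(x,z)\le w(e)+d_G(y,z)<2^{i+1}$ pins $i$ to at most two values.
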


For an edge $e$ and a vertex $z\in V(e)$, let $I(z,e)$ be the set of indices $i$ such that $\Pr[e\in \delta^+(B^+_G(z, R_i)) \cup \delta^-(B^-_G(z, R_i))]>0$ and $2^i > w(e)$. Due to~\Cref{lem:quasipartition-correct1,lem:quasipartition-correct2}, we have that:
\begin{align} \label{eq:18-13}
    |I(z,e)|=O(1).
\end{align}

Let $i\in I(z,e)$. Since $R_i$ is chosen uniformly at random from $[2^{i-1}, 2^i]$, we have that:
\begin{align} \label{eq:18-14}
    \Pr_{R_i}[(d_G(z,x) \leq R_i) \cap (d_G(z, y) > R_i))] \leq O(1)\cdot \frac{w(e)}{2^i}. 
\end{align}

Next, we analyze the probability $q:=\Pr_{S\sim \mathcal D}[\text{there is no } u\text{-}v \text{ path in } G\setminus S]$. Our argument closely follows that of~\cite{memoli2018quasimetric}.
Let $\mathcal P=\{p_1,p_2,\ldots, p_\ell\}$ be the set of all $u$-$v$ paths in $G$.
For every $m\in [\ell]$, let $X_m$ be the event that $p_m$ contains at least one edge in $S$.
Then, $q= \Pr[X_1\wedge X_2\wedge \ldots \wedge X_\ell] \leq \Pr[X_m]$ for all $m\in [\ell]$.
Now let us consider a fixed \emph{shortest path} $p_m=\langle u=v_1,v_2,\ldots, v_t=v\rangle\in \mathcal P$.
For every $j\in [t-1]$, let $Y_j$ be the event that $(v_j, v_{j+1})\in S$.
Then, $\Pr[X_m] =\Pr[Y_1\vee Y_2\vee\ldots \vee Y_{t-1}]\leq \sum_{j}\Pr[Y_j]$ due to the union bound of probability. In addition, let $e_j=(v_j, v_{j+1})$. Then,
\begin{align}
    \Pr[Y_j]=\Pr[e_j\in S] &= \sum_{i=0}^k \frac{2^i}{c} \cdot \Pr[e_j\in S_i]
    \\&\leq \sum_{i=0}^k \frac{2^i}{c} \cdot \sum_{z\in V(e_j)}  \Pr_{R_i}[(d_G(z,v_j) \leq R_i) \cap (d_G(z, v_{j+1}) > R_i))]
    \\&= \sum_{z\in V(e_j)} \sum_{i=0}^k \frac{2^i}{c} \cdot   \Pr_{R_i}[(d_G(z,v_j) \leq R_i) \cap (d_G(z, v_{j+1}) > R_i))]. \label{eq:18-12}
\end{align}
We analyze the inner summation over $i$ as follows. We split the sum over $k$ scales at $i=\lfloor \log w(e_j)\rfloor$. Since $I(z,e_j)\subseteq \{i:2^i>w(e_j)\}$, we can bound each part separately. Let $P_{i,z}:=\Pr_{R_i}[(d_G(z,v_j) \leq R_i) \cap (d_G(z, v_{j+1}) > R_i))]$. Then,
\begin{align*}
    &\sum_{i=0}^k \frac{2^i}{c} \cdot \Pr_{R_i}[(d_G(z,v_j) \leq R_i) \cap (d_G(z, v_{j+1}) > R_i))] = \sum_{i=0}^k \frac{2^i}{c} \cdot P_{i,z}
    \\&=\sum_{i=0}^{\lfloor\log w(e_j) \rfloor} \frac{2^i}{c} \cdot P_{i,z} + \sum_{i=\lfloor \log w(e_j)\rfloor +1}^k \frac{2^i}{c} \cdot  P_{i,z}
    \\&\leq \sum_{i=0}^{\lfloor \log w(e_j) \rfloor} \frac{2^i}{c}+  \sum_{i\in I(z, e_j)} \frac{2^i}{c}\cdot P_{i,z}
     \\&\stackrel{\eqref{eq:18-14}}\leq \frac{2\cdot w(e_j)}{c} + \sum_{i\in I(z, e_j)} \frac{2^i}{c} \cdot (O(1)\cdot \frac{w(e_j)}{2^i}) \stackrel{\eqref{eq:18-13}}= O(1)\cdot \frac{w(e_j)}{c}.
\end{align*}
Returning to~\Cref{eq:18-12}, we obtain:
\begin{align*}
    \Pr[Y_j] &\leq \sum_{z\in V(e_j)} \sum_{i=0}^k \frac{2^i}{c} \cdot   \Pr_{R_i}[(d_G(z,v_j) \leq R_i) \cap (d_G(z, v_{j+1}) > R_i))].
    \\&\leq \sum_{z\in V(e_j)} O(1)\cdot \frac{w(e_j)}{c}=O(|V(e_j)|)\cdot \frac{w(e_j)}{c} = O(\tw \log n) \cdot \frac{w(e_j)}{c}.
\end{align*}
Here, the last equality follows from the fact that $|V(e)|=O(\tw \log n)$ for every edge $e$.
Since $p_m$ is a shortest $u$-$v$ path in $G$, we have that:
\begin{align} \label{eq:18-15}
    q \leq \Pr[X_m] \leq \sum_j \Pr[Y_j] \leq O(\tw \log n)\cdot \sum_j \frac{w(e_j)}{c} = O(\tw\log n)\cdot \frac{d_G(u, v)}{c}.
\end{align}
Putting everything together, we show that the probability bound from~Eq.\eqref{eq:5-16} to Eq.\eqref{eq:5-20} can be improved as follows:
\begin{align*}
    d^*(u,v)
    = \Pr_{S\sim \mathcal D}[\text{there is no } u\text{-}v \text{ path in } G\setminus S] 
    \stackrel{\eqref{eq:18-15}}\leq O(\tw \log n) \cdot \frac{d_G(u,v)}{c}.
\end{align*}
This confirms the improved distortion guarantee for the algorithm of~\cite{memoli2018quasimetric}.
\paragraph{Remark.}
In this section, we only analyzed the case where every edge weight is an integer and bounded by $\text{poly}(n)$.
However, our idea can be lifted to the general case as well: the problematic distortion loss comes from the scales that are much larger than $w(e)$. In~\cite{memoli2018quasimetric}, this issue was addressed by introducing a so-called \emph{forced quasipartition}. Roughly speaking, in a forced quasipartition, it is guaranteed that $\Pr[e\in S]$ for a cutset $S$ at scale $i$ becomes zero when $\frac{2^i}{w(e)} \geq n$ (assuming that $n$ is power of two).
In this way, they bounded their distortion by summing up $O(\log n)$ scales $i$ from $\lfloor \log w(e) \rfloor $ to $\lfloor \log (n\cdot w(e)) \rfloor \leq  \lfloor \log w(e)\rfloor+\log n$.

In our argument, rather than considering the fixed scales between $\lfloor \log w(e)\rfloor$ and $\lfloor \log w(e) \rfloor+\log n$, we consider the probability $\Pr_{R_i}[e\in \delta^+(B^+_G(z, R_i))\cup \delta^-(B^-_G(z, R_i))]$ for every $z\in V(e)$, which is nonzero for only $|I(z,e)|=O(1)$ scales.
Our $O(\tw \log n)$ distortion factor is derived solely from the fact that $|V(e)| = O(\tw \log n)$ for every edge $e$.
In this way, we can remove the bounded weights assumption and conclude the following lemma.
\begin{lemma}
    Given a directed graph $G$ having treewidth $\tw$, there is an embedding of $G$ into the convex combination of 0-1 quasimetric space with distortion $O(\tw \log n)$.
\end{lemma}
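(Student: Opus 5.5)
The plan is to remove the bounded-integer-weight assumption while keeping the $O(\tw\log n)$ count of relevant ball centers per edge. The embedding is a mixture over a continuum of scales rather than over $\{0,\dots,k\}$. Concretely, for every integer $i\in\mathbb Z$ we run the separator-based construction of Section~\ref{sec:quasipartition-tw} at scale $2^i$ with radius $R_i$ uniform in $[2^{i-1},2^i]$. The recursion (central bags and the set $V(e)$) is independent of the scale, so $|V(e)|=O(\tw\log n)$ still holds for every edge $e$. We mix these cutsets using weights proportional to $2^i$, restricted to the finite window of scales between $\min_e w(e)/2$ and $2\max_{u,v}d_G(u,v)$. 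Scale $i$ receives probability $2^i/c$, where $c=\sum_{i\in\text{window}}2^i$. Alternatively, one can pick $\Delta=2^U$ with $U$ uniform over the window, which removes the dyadic boundary effects. The resulting $0$-$1$ quasimetric mixture $d^*$ is defined exactly as before, via reachability in $G\setminus S$.

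For the upper bound, fix $u,v$ and a shortest $u$-$v$ path. By the union bound, as in the derivation of Eq.~\eqref{eq:18-15}, it suffices to show $\Pr[e\in S]\le O(|V(e)|)\cdot w(e)/c$ for each edge $e$. For a fixed $z\in V(e)$, we split the scales into two groups.

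\textbf{Scales with $2^i\le w(e)$.} Here we bound the cut probability trivially by $1$. The weights then form a geometric sum bounded by $2w(e)/c$. This needs no assumption on integrality, since the sum runs over $i\le\log w(e)$ in a geometric series.

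\textbf{Scales with $2^i>w(e)$.} By \Cref{lem:quasipartition-correct1,lem:quasipartition-correct2}, only $O(1)$ such scales have nonzero cut probability for the balls centered at $z$, because $d_G(z,y)$ pins $i$ to four values. Each of these scales contributes at most $(2^i/c)\cdot O(w(e)/2^i)=O(w(e)/c)$.

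Summing over $z\in V(e)$ gives $\Pr[e\in S]=O(\tw\log n)\,w(e)/c$. Summing over the path edges then gives $d^*(u,v)\le O(\tw\log n)\,d_G(u,v)/c$.

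For the lower bound $d^*(u,v)\ge\Omega(d_G(u,v)/c)$, we use the quasipartition guarantee. At any scale $i$ with $2\cdot 2^i<d_G(u,v)$, no $u$-$v$ path survives in $G\setminus S_i$. Hence $d^*(u,v)\ge\sum_{2^{i+1}<d_G(u,v)}2^i/c$. This sum is $\Omega(d_G(u,v)/c)$ as long as the window extends down to roughly $d_G(u,v)/4$, which holds whenever $d_G(u,v)\ge 4\min_e w(e)$. The remaining pairs, where $d_G(u,v)$ is comparable to the smallest edge weight, are exactly where I expect the main obstacle. The fix I would try first is to extend the window downward to $\min_e w(e)/4$, which adds only geometrically small weight to $c$. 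An alternative fix is to note that for $u\ne v$ any surviving path has length at least $\min_e w(e)$, so the scales just below $d_G(u,v)$ still separate $u$ from $v$.

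The distortion is therefore $O(\tw\log n)$, independent of the aspect ratio. Finally, we combine this with the correspondence of~\cite{charikar2006directed} between the flow-cut gap and distortion into convex combinations of $0$-$1$ quasimetrics. This yields the stated bound for the Directed Non-Bipartite Sparsest-Cut relaxation and hence Theorem~\ref{thm:flowcutgap}.
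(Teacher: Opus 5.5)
Your main argument is correct and follows the paper's: the scale-independent set $V(e)$ with $|V(e)|=O(\tw\log n)$, mixing weights $2^i/c$, a trivial geometric bound on the scales with $2^i\le w(e)$, and only $O(1)$ relevant scales per center from \Cref{lem:quasipartition-correct1,lem:quasipartition-correct2}; your explicit window (extended down to $\min_e w(e)/4$, which does place the separating scale $2^i\in[d_G(u,v)/4,d_G(u,v)/2)$ inside it for every $u\neq v$) spells out what the paper's closing remark leaves implicit. Drop the two asides, since neither works: choosing $\Delta=2^U$ with $U$ uniform gives log-uniform weights rather than weights proportional to $2^i$, which destroys the linear dependence on $d_G$ in both bounds; and the observation that surviving paths have length at least $\min_e w(e)$ does not by itself separate $u$ from $v$, because the quasipartition guarantee only rules out surviving paths with $d_G(u,v)>2\cdot 2^i$.
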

By applying this improved bound to the reduction to the integrality gap of the sparsest cut problem~\cite{charikar2006directed}, we obtain the main result of this section.
\flowcutgap*

\bibliographystyle{alpha}
\bibliography{paper,DAGbib}

\end{document}